\documentclass[aos,preprint]{imsart}
\usepackage{amsthm,amsmath}
\usepackage[colorlinks,citecolor=blue,urlcolor=blue]{hyperref}


\usepackage[margin=1in]{geometry}
\usepackage{bm}
\usepackage{graphicx}
\usepackage{amssymb,amsmath,amsthm,mathrsfs}
\usepackage{epstopdf}
\usepackage{algorithm}
\usepackage{amsfonts,delarray}
\usepackage{algorithm,algcompatible,amsmath}
\usepackage{rotating,color}
\usepackage{subfigure}
\usepackage{multirow}
\usepackage{titlesec,textcase}
\usepackage{longtable}
\usepackage{bbm}
\usepackage{rotating}


\newtheorem{Theorem}{Theorem}[section]
\newtheorem{Lemma}[Theorem]{Lemma}
\newtheorem{Corollary}[Theorem]{Corollary}
\newtheorem{Proposition}[Theorem]{Proposition}

\newtheorem{Remark}{Remark}[section]
\theoremstyle{definition}

\newcommand{\bbN}{{\mathbb{N}}}
\newcommand{\E}{{\mathbb{E}}}
\RequirePackage[normalem]{ulem} 

\definecolor{rp}{RGB}{83,54,106}

\def\boxit#1{\vbox{\hrule\hbox{\vrule\kern6pt\vbox{\kern6pt#1\kern6pt}\kern6pt\vrule}\hrule}}

\begin{document}

\begin{frontmatter}
	\title{A Likelihood-Ratio Type Test for Stochastic Block Models with Bounded Degrees}
	\runtitle{Testing Sparse Network}
	
	\begin{aug}
		\author{\fnms{Mingao } \snm{Yuan}\thanksref{m1}\ead[label=e1]{mingyuan@iupui.edu}},
		\author{\fnms{Yang} \snm{Feng}\thanksref{m2,t1}\ead[label=e2]{yang.feng@columbia.edu}}
		\and
		\author{\fnms{Zuofeng} \snm{Shang}\thanksref{m1,t2}
			\ead[label=e3]{shangzf@iu.edu}}

		\thankstext{t1}{Supported by NSF CAREER Grant DMS-1554804.}
		
		\thankstext{t2}{Corresponding author. Supported by 
			NSF DMS-1764280 and 
			a startup grant from IUPUI.}
		\runauthor{Yuan, Feng and Shang}

		\affiliation{IUPUI\thanksmark{m1} and Columbia University\thanksmark{m2}}
		
		\address{Department of Mathematical Sciences,\\
			Indiana University-Purdue University-Indianapolis,\\
			402 N Blackford St,\\
			Indianapolis, IN 46202\\
			\printead{e1}\\
			\phantom{E-mail:\ }}
		
		\address{Department of Statistics, \\
			Columbia University,\\
			1255 Amsterdam Ave,\\
			New York, NY 10027\\
			\printead{e2}\\}
		\address{Department of Mathematical Sciences,\\
			Indiana University-Purdue University-Indianapolis,\\
			402 N Blackford St,\\
			Indianapolis, IN 46202\\
			\printead{e3}\\
			\phantom{E-mail:\ }}
	\end{aug}
	
	\begin{abstract}
		A fundamental problem in network data analysis is to test Erd\"{o}s-R\'{e}nyi model $\mathcal{G}\left(n,\frac{a+b}{2n}\right)$ versus a bisection stochastic block model
		$\mathcal{G}\left(n,\frac{a}{n},\frac{b}{n}\right)$,
		where $a,b>0$ are constants that represent the expected degrees of the graphs and $n$ denotes the number of nodes.
		This problem serves as the foundation of many 
		other problems such as testing-based methods for determining the number of communities
		(\cite{BS16,L16}) and community detection (\cite{MS16}).
		Existing work has been focusing on growing-degree regime $a,b\to\infty$ (\cite{BS16,L16,MS16,BM17,B18,GL17a,GL17b})
		while leaving the bounded-degree regime untreated. In this paper, we propose a likelihood-ratio (LR) type procedure based on regularization
		to test stochastic block models with bounded degrees.
		We derive the limit distributions as power Poisson laws under both null and alternative hypotheses,
		based on which the limit power of the test is carefully analyzed.
		We also examine a Monte-Carlo method that partly resolves the computational cost issue.
		The proposed procedures are examined by both simulated and real-world data.
		The proof depends on a contiguity theory developed by Janson \cite{J95}. 
	\end{abstract}
	
	\begin{keyword}[class=AMS]
		\kwd[Primary ]{62G10}\kwd[; secondary ]{05C80} 
	\end{keyword}
	
	\begin{keyword}
		\kwd{stochastic block model, bounded degrees, hypothesis testing, likelihood ratio, contiguity theory.}
	\end{keyword}
\end{frontmatter}

\section{Introduction}
In recent years, stochastic block model (SBM) has attracted increasing attention in statistics and machine learning.
It provides the researchers a ground to study many important problems that arise in network data
such as community detection or clustering (\cite{ACBL13,AL18,NN14,SB15,BC09,ZLZ12}), 
goodness-of-fit of SBMs (\cite{BS16,L16,MS16,BM17,B18,GL17a,GL17b}) or various phase transition phenomena
(\cite{MNS15,MNS17,AS17}). See \cite{A17} for a comprehensive review about recent development in this field.
A key assumption in most of the literature is that the expected degree of every node 
tends to infinity along with the number of nodes $n$.
For instance, in community detection (\cite{BC09,ZLZ12}), such a condition is needed for proving weak consistency
of the detection methods;
to prove strong consistency, the expected degree is further assumed to grow faster than $\log{n}$.
For goodness-of-fit test, the growing-degree condition is needed to derive various
asymptotic distributions for the test statistics (\cite{BS16,L16,BM17,B18,GL17a,GL17b}).

Many real-world network data sets are highly sparse.
For instance, the LinkedIn network, the real-world
coauthorship networks, power transmission networks and web link networks all have small average
degrees (see \cite{LLDM08, S01}). Therefore, it is reasonable to assume bounded degrees in such networks.
There is a breakthrough recently made by \cite{MNS15,MNS17,AS17} about the possibility
of successfully detecting the community structures when the expected degree of SBM is bounded.
Specifically, the signal-to-noise ratio (SNR) of the multi-community SBM is 
used in these work as a phase transition parameter to indicate the possibility of successful detection.
Motivated by such a groundbreaking result, it is natural to ask whether one can 
propose successful testing methods for SBMs with bounded degrees.
Progress in this field may help researchers better understand the roles played by the expected degrees
of SBMs in hypothesis testing, as well as provide a substantially broader scope of network models in which a successful test is possible. 

In this paper, we address this problem in the bisection SBM scenario.
We propose a likelihood-ratio (LR) type test statistic
to distinguish an Erd\"{o}s-R\'{e}nyi model versus a bisection SBM
whose expected degrees are finite constants, and investigate its asymptotic properties.
In what follows, we describe the models and our contributions more explicitly.

\subsection{Models and Our Contributions.}
Let us provide a brief review for Erd\"{o}s-R\'{e}nyi model and bisection SBM.
Throughout the whole paper, assume that $a>b>0$ are \textit{fixed and known} constants unless otherwise indicated. 
For $n\in\bbN$, let 
$\mathcal{G}\left(n,\frac{a}{n},\frac{b}{n}\right)$ denote the bisection stochastic block model
of random $\pm$-labeled graphs in which each vertex 
$u\in[n]:=\{1,2,\ldots,n\}$ is assigned, independently and uniformly at random, a label $\sigma_u\in\{\pm\}$,
and then each possible edge $(u,v)$ is included with probability $a/n$ if $\sigma_u=\sigma_v$
and with probability $b/n$ if $\sigma_u\neq\sigma_v$. 
Let $A=[A_{uv}]_{u,v=1}^n\in\{0,1\}^{n\times n}$ denote the observed symmetric adjacency matrix in which $A_{uu}=0$ for all $1\le u\le n$,
and for $1\le u<v\le n$,
$A_{uv}=1$ indicates the inclusion of edge $(u,v)$ and $A_{uv}=0$ otherwise.
Conditional on
$\sigma=(\sigma_1,\ldots,\sigma_n)$,
the variables $A_{uv}$, $1\le u<v\le n$, are assumed to be independent which follow 
\begin{eqnarray}\label{G:p:q:model}
\textrm{$P(A_{uv}=1|\sigma)=p_{uv}(\sigma)$ and $P(A_{uv}=0|\sigma)=q_{uv}(\sigma)$,}
\end{eqnarray}
where
\[
p_{uv}(\sigma)=\left\{\begin{array}{cc}
\frac{a}{n},&\sigma_u=\sigma_v\\
\frac{b}{n},&\sigma_u\neq\sigma_v
\end{array}\right.,
q_{uv}(\sigma)=1-p_{uv}(\sigma).
\]
The Erd\"{o}s-R\'{e}nyi model $\mathcal{G}\left(n,\frac{a+b}{2n}\right)$ has the same average degree as $\mathcal{G}\left(n,\frac{a}{n},\frac{b}{n}\right)$. 
It is interesting to decide which model an observed graph is generated from.
Specifically, we are interested in the following hypothesis testing problem
\begin{equation}\label{H0:H1}
\textrm{$H_0$: $A\sim \mathcal{G}\left(n,\frac{a+b}{2n}\right)$\,\,\,\, 
vs.\,\,\,\, $H_1:$ $A\sim \mathcal{G}\left(n,\frac{a}{n},\frac{b}{n}\right)$.}
\end{equation}
To be more specific, we want to test whether the 
nodes on an observed random graph belong to the same community, or
they belong to two equal-sized communities.

Let $\kappa=\frac{(a-b)^2}{2(a+b)}$ denote the signal-to-noise ratio (SNR) associated with $\mathcal{G}\left(n,\frac{a}{n},\frac{b}{n}\right)$.
It was conjectured by Decelle, Krzkala, Moore and Zdeborov\'{a} (\cite{DKMZ11}) 
that successful community detection is possible when $\kappa\ge1$,
and impossible when $\kappa<1$.
This conjecture was recently proved by Mossel, Neeman and Sly (\cite{MNS15}) through Janson's continuity theory (\cite{J95}).
In the meantime, their result indicates that \textit{no test can be successful when $\kappa<1$}
(see \cite{MNS15,MS16}), and so we primarily focus on the high SNR scenario $\kappa\ge1$.
Classic likelihood-ratio (LR) tests for (\ref{H0:H1}) are not valid since the probability measures associated with
$H_0$ and $H_1$ are asymptotically orthogonal as discovered by \cite{MNS15}.
The result of \cite{MNS15} also implies that counting the cycles of length $\log^{1/4}{n}$ leads to an asymptotically 
valid test; see their Theorem 4. However, such test is unrealistic since $n$ should be at least $e^{81}$ to make the length at least 3.
In Section \ref{sec:varepsilon:LR}, we propose a regularized LR-type test for (\ref{H0:H1}) to address these limitations.
Our test does not suffer from the orthogonality issue of LR and is applicable for moderately large $n$.
Our test involves a regularization parameter that can reduce the variability of the classic LR test so that it becomes valid.
Based on a contiguity theory for random regular graphs developed by Janson \cite{J95},
we derive the asymptotic distributions as power Poisson laws under both $H_0$ and $H_1$, which turn
out to be infinite products of power Poisson variables (see Section \ref{sec:ppl}). 
Based on power Poisson laws, we rigorously analyze the asymptotic power of our test. In Section \ref{sec:power:analysis}, we show that
the test is powerful provided that $\kappa$ approaches infinity,
and the limit power is not sensitive to the choice of regularization parameter.
Our test is practically useful in that the parameters $a,b$ can be consistently estimated when $\kappa>1$, and
so the regularization parameter can be empirically selected. Our procedure is based on averaged likelihood-ratios whose computational cost scales exponentially with
$n$. This computational issue is partly resolved in Section \ref{sec:approx} via Monte Carlo approximations,
with the number of experiments suggested to guarantee the success of such approximations.
Simulation examples are provided in Section \ref{sec:sim} to demonstrate the finite sample performance of our methods.
In particular, our method achieves desirable size and power, while the methods designed for denser graphs
appear to be less powerful. 

\subsection{Related References.}

The problem of testing (\ref{H0:H1}) has been recently considered by \cite{BS16,L16,MS16,BM17,B18,GL17a,GL17b} but only
in the growing-degree regime, i.e., $a,b\to\infty$. 
Specifically, \cite{BS16,MS16,L16} proposed spectral algorithms;
\cite{BM17,B18} proposed linear spectral statistics and LR test relating to signed cycles;
\cite{GL17a,GL17b} proposed algorithms based subgraph counts. 
In particular, the LR test by \cite{BM17} was proposed under low SNR
which may not be directly applicable here. 
The growing-degree condition is necessary to guarantee the validity of all these methods
which also result in different asymptotic laws than ours.
As far as we know, an effective testing procedure that distinguishes SBMs with bounded degrees is still missing. 
As a side remark, the power Poisson law is unique in sparse network models with bounded degrees as demonstrated in \cite{J95}.
In the end, we mention a few papers addressing different models or testing problems than ours:
\cite{FH15} proposed a test for examining dependence between network factors and nodal-level attributes;
\cite{MPOW17} proposed a variant of multivariate t-test 
for model diagnosis based on a collection of network samples.

\section{LR-Type Test and Asymptotic Properties}\label{sec:varepsilon:LR}
The classic LR test requires the calculations of 
the marginal probability distributions of $A_{uv}$'s under both $H_0$ and $H_1$.
By straightforward calculations, it can be shown that,
under $H_1$, the marginal distribution of $A$ is
\[
P_1(A)=\sum_{\sigma\in\{\pm\}^n}P(A|\sigma)P(\sigma)=2^{-n}\sum_{\sigma\in\{\pm\}^n}\prod_{u<v}p_{uv}(\sigma)^{A_{uv}}q_{uv}(\sigma)^{1-A_{uv}};
\]
and under $H_0$, the marginal distribution of $A$ is
\[
P_0(A)=\prod_{u<v}p_0^{A_{uv}}q_0^{1-A_{uv}},
\]
where $p_0=1-q_0=\frac{a+b}{2n}$.
The classic LR test for (\ref{H0:H1}) is then given as follows:
\begin{equation}\label{lrt:case1}
Y_n=\frac{P_1(A)}{P_0(A)}=2^{-n}\sum_{\sigma\in\{\pm\}^n}\prod_{u<v}\left(\frac{p_{uv}(\sigma)}{p_0}\right)^{A_{uv}}
\left(\frac{q_{uv}(\sigma)}{q_0}\right)^{1-A_{uv}},
\end{equation}
where $p_{uv}(\sigma)$ and $q_{uv}(\sigma)$ are defined in (\ref{G:p:q:model}).
However, \cite{MNS15} shows that $P_0(\cdot)$ and $P_1(\cdot)$ are
asymptotically orthogonal when $\kappa\ge1$. So with positive probability,
$Y_n$ is asymptotically degenerate to either $0$ or $\infty$.
Here we provide a more heuristic understanding for such degenerateness phenomenon.
Note that the probability ratio $\frac{p_{uv}(\sigma)}{p_0}$ 
is equal to either $\frac{2a}{a+b}$ or $\frac{2b}{a+b}$,
depending on whether $u,v$ belong to the same community.
When $\kappa\ge1$, i.e., $a-b$ is large compared with $a+b$,
the two probability ratios considerably differ from each other which brings
too much uncertainty into $Y_n$.

We propose a regularized LR test, called as $\varepsilon$-LR test,
to resolve the degenerateness issue. The idea is quite natural: incorporate a regularization parameter $\varepsilon$ into $Y_n$ 
to reduce its uncertainty. 
Our $\varepsilon$-LR test
is defined as follows. 
Let $\kappa_\varepsilon=\frac{(a_\varepsilon-b_\varepsilon)^2}{2(a+b)}$, where $a_\varepsilon=a-\varepsilon$, $b_\varepsilon=b+\varepsilon$.
For any $\varepsilon$ satisfying
\begin{equation}\label{vareps:condition}
\textrm{$0<\varepsilon<\frac{a-b}{2}$ and $\kappa_\varepsilon<1$,}
\end{equation}
define
\begin{equation}\label{lrt:case2}
Y_n^\varepsilon=2^{-n}\sum_{\sigma\in\{\pm\}^n}\prod_{u<v}\left(\frac{p_{uv}^\varepsilon(\sigma)}{p_0}\right)^{A_{uv}}
\left(\frac{q_{uv}^\varepsilon(\sigma)}{q_0}\right)^{1-A_{uv}},
\end{equation}
where 
\[
p_{uv}^\varepsilon(\sigma)=\left\{\begin{array}{cc}
\frac{a_\varepsilon}{n},&\sigma_u=\sigma_v\\
\frac{b_\varepsilon}{n},&\sigma_u\neq\sigma_v
\end{array}\right.,\,\,\,\,\,\,
q_{uv}^\varepsilon(\sigma)=1-p_{uv}^\varepsilon(\sigma).
\]
In other words, we replace $p_{uv}(\sigma)$ and $q_{uv}(\sigma)$
in (\ref{lrt:case1}) by their counterparts $p_{uv}^\varepsilon(\sigma)$ and $q_{uv}^\varepsilon(\sigma)$.
The new probability ratio $\frac{p_{uv}^\varepsilon(\sigma)}{p_0}$
is equal to either $\frac{2a_\varepsilon}{a+b}$ or $\frac{2b_\varepsilon}{a+b}$,
which are closer to each other due to regularization.
Such a trick will be proven to effectively reduce the variability of the classic LR test.
Asymptotic distributions and power analysis of $Y_n^\varepsilon$ are provided in 
subsequent Sections \ref{sec:ppl} and \ref{sec:power:analysis}.

\begin{Remark}
A more naive approach is to
reject $H_0$ if $Y_n>c$ with $c>0$ a predetermined constant.
However, the choice of $c$ is a challenging issue. In particular, due to the degenerateness of $Y_n$,
it is hard to determine the (asymptotic) probability of rejection
given any value of $c$, which poses challenges in analyzing size and power of the test.
Instead, our $\varepsilon$-LR test has valid asymptotic distributions 
which avoids the above issues.
\end{Remark}

\subsection{Power Poisson Laws.}\label{sec:ppl}
Let us first present a power Poisson law for $Y_n^\varepsilon$ under $H_0$.

\begin{Theorem}\label{testing:consistency:case2}
If $\kappa\ge1$ and $\varepsilon$ satisfies (\ref{vareps:condition}),
then under $H_0$, $Y_n^\varepsilon\overset{d}{\to}W_0^\varepsilon$ as $n\to\infty$, where
\[
W_0^\varepsilon=\prod_{m=3}^\infty\left(1+\delta_m^\varepsilon\right)^{Z_m^0}\exp\left(-\lambda_m\delta_m^\varepsilon\right),\,\,\,\,
Z_m^0\overset{ind}{\sim}\textrm{Poisson}\left(\lambda_m\right).
\]
Here, $\lambda_m=\frac{1}{2m}\left(\frac{a+b}{2}\right)^m$
and $\delta_m^\varepsilon=\left(\frac{a_\varepsilon-b_\varepsilon}{a+b}\right)^m$.
\end{Theorem}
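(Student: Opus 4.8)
The plan is to reduce $Y_n^\varepsilon$ to a weighted sum over \emph{even subgraphs} and then drive the limit through Janson's contiguity theorem, with the short-cycle counts playing the role of the governing statistics.

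First I would simplify the summand. Identifying the labels with $\pm1$ and using the key identity $a_\varepsilon+b_\varepsilon=a+b$, one checks that
\[
\frac{p_{uv}^\varepsilon(\sigma)}{p_0}=1+\sigma_u\sigma_v t_\varepsilon,\qquad \frac{q_{uv}^\varepsilon(\sigma)}{q_0}=1-\sigma_u\sigma_v\,\frac{p_0 t_\varepsilon}{q_0},\qquad t_\varepsilon:=\frac{a_\varepsilon-b_\varepsilon}{a+b},
\]
so that each factor in (\ref{lrt:case2}) equals $1+\sigma_u\sigma_v c_{uv}$, where $c_{uv}=t_\varepsilon X_{uv}$ and $X_{uv}=\mathbf{1}\{A_{uv}=1\}-\frac{p_0}{q_0}\mathbf{1}\{A_{uv}=0\}$. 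Under $H_0$ the $X_{uv}$ are independent with $\E[X_{uv}]=0$ and $\E[X_{uv}^2]=p_0/q_0$. Expanding $\prod_{u<v}(1+\sigma_u\sigma_v c_{uv})$ and averaging over $\sigma$ uniform on $\{\pm\}^n$ annihilates every term in which some vertex has odd degree, leaving
\[
Y_n^\varepsilon=\sum_{E}\ \prod_{(u,v)\in E}c_{uv},
\]
where $E$ ranges over the even subgraphs of $K_n$, i.e.\ the edge-disjoint unions of cycles. Note that $\delta_m^\varepsilon=t_\varepsilon^m$ under this notation.

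Next I would verify the three ingredients of the small-subgraph conditioning method (Janson \cite{J95}). Let $C_m^{(n)}$ denote the number of $m$-cycles present in $A$. Condition (i), joint convergence $(C_m^{(n)})_{m\ge3}\overset{d}{\to}(Z_m^0)_{m\ge3}$ to independent $\mathrm{Poisson}(\lambda_m)$ variables, is classical for $\mathcal{G}(n,\frac{a+b}{2n})$ since $np_0\to\frac{a+b}{2}$. Condition (ii), the cycle-weighting that pins down $\delta_m^\varepsilon$, I would obtain from the rewriting $\mathbf{1}\{A_e=1\}=p_0+q_0X_e$ together with the orthogonality that a product of the mean-zero $X_e$ has nonzero expectation only when every edge appears an even number of times; since the sole even subgraphs contained in a single cycle $C$ are $\emptyset$ and $C$, this gives $\E[Y_n^\varepsilon C_m^{(n)}]\to\lambda_m(1+\delta_m^\varepsilon)$ and, more generally, the factorial-moment weights $\prod_m(\lambda_m(1+\delta_m^\varepsilon))^{j_m}$. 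Condition (iii) is the second moment: the same orthogonality forces $E=E'$ in $\E[(Y_n^\varepsilon)^2]$, yielding $\E[(Y_n^\varepsilon)^2]=\sum_E(t_\varepsilon^2 p_0/q_0)^{|E|}\to\exp\!\big(\sum_{m\ge3}\lambda_m(\delta_m^\varepsilon)^2\big)$, which is finite precisely because $\frac{a+b}{2}t_\varepsilon^2=\kappa_\varepsilon<1$. These limits match the moments of $W_0^\varepsilon$ (one computes $\E[W_0^\varepsilon]=1$ and $\E[(W_0^\varepsilon)^2]=\prod_m e^{\lambda_m(\delta_m^\varepsilon)^2}$), so Janson's theorem delivers $Y_n^\varepsilon\overset{d}{\to}W_0^\varepsilon$.

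The hard part will be the uniform tail control required to pass from finite truncations $\prod_{m=3}^{K}$ to the full infinite product. Concretely, I must show that even subgraphs which are \emph{not} vertex-disjoint unions of short, fully present cycles — long cycles, and configurations containing a vertex of degree $\ge4$ or two cycles sharing a vertex — contribute negligibly, and that this error is small uniformly in $n$ so that the order of the limits $n\to\infty$ and $K\to\infty$ may be interchanged. This is exactly where regularization is indispensable: the requirement $\kappa_\varepsilon<1$ makes $\sum_{m\ge3}\lambda_m(\delta_m^\varepsilon)^2<\infty$, which simultaneously guarantees that $W_0^\varepsilon$ is a genuine $L^2$ limit and that the second-moment estimate dominates the discarded terms. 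Without it, i.e.\ in the unregularized regime $\kappa\ge1$, this sum diverges and $Y_n$ degenerates, in agreement with the asymptotic orthogonality established by \cite{MNS15}.
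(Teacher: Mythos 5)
Your proposal is correct, and at the top level it follows the same route as the paper: both proofs run through Janson's small-subgraph conditioning (Proposition \ref{basic:prop}), with the $m$-cycle counts $X_{mn}$ as the conditioning statistics, the same Poisson limit for Condition A1, and the same target quantities $\lambda_m(1+\delta_m^\varepsilon)$ and $\exp\left(\sum_m\lambda_m(\delta_m^\varepsilon)^2\right)$ for Conditions A2 and A4. Where you genuinely differ is in how A2 and A4 are computed. The paper keeps the label average outside and works configuration-by-configuration: $\E_0 Y_n^\varepsilon 1_H$ is reduced to per-cycle averages $\E_{\sigma^{1mi}}a_\varepsilon^{m-N_{mi}}b_\varepsilon^{N_{mi}}$, and A4 is a two-replica computation in which $\E_0(Y_n^\varepsilon)^2$ becomes an average of $(1+\gamma_n^\varepsilon)^{s_+}(1-\gamma_n^\varepsilon)^{s_-}$ over the overlap $\rho=\frac1n\sum_u\sigma_u\tau_u$, with the limit $(1-\kappa_\varepsilon)^{-1/2}\exp(-\kappa_\varepsilon/2-\kappa_\varepsilon^2/4)$ extracted via the approximation lemmas of \cite{MNS15} plus uniform integrability. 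You instead integrate out $\sigma$ first: the identity $p_{uv}^\varepsilon(\sigma)/p_0=1+\sigma_u\sigma_v t_\varepsilon$ (valid precisely because $a_\varepsilon+b_\varepsilon=a+b$, so your algebra checks out) converts $Y_n^\varepsilon$ into an even-subgraph expansion in the mean-zero edge variables $X_{uv}$, after which A2 and A4 become exact finite-$n$ orthogonality identities --- $\E_0[Y_n^\varepsilon 1_C]=p_0^m(1+\delta_m^\varepsilon)$ since $\emptyset$ and $C$ are the only even subgraphs of a cycle, and $\E_0[(Y_n^\varepsilon)^2]=\sum_E(t_\varepsilon^2p_0/q_0)^{|E|}$ exactly. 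Your route buys clean identities in place of the paper's $(1\pm\gamma_n^\varepsilon+O(n^{-3}))$ bookkeeping; the paper's route buys an easier limit in A4, since the overlap representation handles all even subgraphs at once, while your combinatorial evaluation of $\sum_E(t_\varepsilon^2p_0/q_0)^{|E|}$ still owes the tail control you correctly flag (components with $|\mathcal{E}|>|\mathcal{V}|$, e.g.\ two cycles sharing a vertex, and the growing number of isomorphism classes). If that bookkeeping gets heavy, note the identity $\sum_E y^{|E|}=2^{-n}\sum_\sigma\prod_{u<v}(1+\sigma_u\sigma_v y)$ with $y=t_\varepsilon^2p_0/q_0\sim\kappa_\varepsilon/n$, which turns your A4 sum into a single-spin version of exactly the paper's replica computation --- the two arguments are the same calculation in different coordinates. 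One small correction: your closing concern about interchanging $n\to\infty$ with the truncation $K\to\infty$ of the infinite product is not an additional obligation; once A1--A4 are verified, Proposition \ref{basic:prop} itself delivers convergence to the full infinite product (which converges in $L^2$ with mean one, per Remark \ref{rem:J95}), so the only tail estimates you owe are those internal to A2 (overlapping cycle tuples vanish, via the $|\mathcal{E}(H)|>|\mathcal{V}(H)|$ count, as in the paper) and to the A4 limit.
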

Theorem \ref{testing:consistency:case2} shows that, under $H_0$, $Y_n^\varepsilon$
converges in distribution to an infinite product of power Poisson variables.
Its proof is based on a contiguity theory for regular random graphs developed by \cite{J95}.
Power Poisson law is unique in sparse network with bounded degree,
e.g., the number of subgraphs, the number of perfect matchings and 
the number of edge colourings all follow such a law (see \cite{J95}).
This decidedly differs from the growing-degree regime. For instance, when the average degree
is growing along with $n$, \cite{BS16} proposed a spectral algorithm that follows Tracy-Widom law;
\cite{BM17,B18} examined the classic LR statistics under $\kappa<1$ and linear spectral statistics relating to signed cycles that follow power Gaussian law; 
\cite{GL17a,GL17b} proposed subgraph-based algorithms that follow Gaussian distributions.  

According to Theorem \ref{testing:consistency:case2}, we test (\ref{H0:H1}) at significance level $\alpha$
based on the following rule:
\[
\textrm{reject $H_0$ iff $Y_n^\varepsilon\ge w_\alpha^\varepsilon$,}
\]
where $w_\alpha^\varepsilon>0$ satisfies $P(W_0^\varepsilon\le w_\alpha^\varepsilon)=1-\alpha$.

The following theorem shows that, under $H_1$, $Y_n^\varepsilon$ asymptotically follows another power Poisson law.
\begin{Theorem}\label{power:case2}
If $\kappa\ge1$, $\varepsilon$ satisfies (\ref{vareps:condition}) and $(a-b)(a_\varepsilon-b_\varepsilon)<\frac{2(a+b)}{3}$,
then under $H_1$,
$Y_n^\varepsilon\overset{d}{\to}W_1^\varepsilon$ as $n\to\infty$,
where
\[
W_1^\varepsilon=\prod_{m=3}^\infty\left(1+\delta_m^\varepsilon\right)^{Z_m^1}\exp\left(-\lambda_m\delta_m^\varepsilon\right),\,\,\,\,
Z_m^1\overset{ind}{\sim}\textrm{Poisson}\left(\lambda_m(1+\delta_m)\right).
\]
Here, $\lambda_m$
and $\delta_m^\varepsilon$ are the same as in Theorem \ref{testing:consistency:case2}
and $\delta_m=\left(\frac{a-b}{a+b}\right)^m$.
\end{Theorem}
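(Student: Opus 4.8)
The plan is to run essentially the same argument as for Theorem~\ref{testing:consistency:case2}, the only change being that the graph is now drawn from $\mathcal{G}(n,\frac{a}{n},\frac{b}{n})$ rather than from $\mathcal{G}(n,\frac{a+b}{2n})$, which shifts the limiting Poisson intensities from $\lambda_m$ to $\lambda_m(1+\delta_m)$. The starting point is the cycle representation already established in the proof of Theorem~\ref{testing:consistency:case2}: writing $C_m^{(n)}$ for the number of $m$-cycles in $A$, for every fixed truncation level $M\ge 3$ one has
\[
Y_n^\varepsilon=\prod_{m=3}^{M}\left(1+\delta_m^\varepsilon\right)^{C_m^{(n)}}\exp\left(-\lambda_m\delta_m^\varepsilon\right)\cdot R_{n,M},
\]
where $R_{n,M}$ absorbs the contribution of cycles of length exceeding $M$ together with the negligible non-cycle subgraph terms. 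This identity is a statement about $A$ alone; what differs between $H_0$ and $H_1$ is only the joint law of $(C_3^{(n)},C_4^{(n)},\dots)$ and the size of $R_{n,M}$.

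First I would establish, for each fixed $M$, the joint convergence
\[
\left(C_3^{(n)},\dots,C_M^{(n)}\right)\overset{d}{\to}\left(Z_3^1,\dots,Z_M^1\right),\qquad Z_m^1\overset{ind}{\sim}\mathrm{Poisson}\left(\lambda_m(1+\delta_m)\right),
\]
under $H_1$ by the method of factorial moments: one computes $\mathbb{E}_{H_1}\big[\prod_m (C_m^{(n)})_{(j_m)}\big]$ and checks that it converges to $\prod_m\big(\lambda_m(1+\delta_m)\big)^{j_m}$. The key combinatorial input is that, after averaging over the planted labeling $\sigma$, the expected number of $m$-cycles in $\mathcal{G}(n,\frac{a}{n},\frac{b}{n})$ is asymptotically $\frac{1}{2m}[(\frac{a+b}{2})^m+(\frac{a-b}{2})^m]$, the two summands being the $m$-th powers of the two eigenvalues $\frac{a+b}{2}$ and $\frac{a-b}{2}$ of the (rescaled) expected adjacency matrix; this equals precisely $\lambda_m(1+\delta_m)$, and the higher factorial moments factorize because distinct short cycles are asymptotically vertex-disjoint, hence asymptotically independent.

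The delicate step is to show that $R_{n,M}\to 1$ in a suitable sense, uniformly in $n$, as $M\to\infty$, now under the heavier block-model measure. Here I would invoke the contiguity machinery of \cite{J95}, controlling the tail product $\prod_{m>M}(1+\delta_m^\varepsilon)^{C_m^{(n)}}\exp(-\lambda_m\delta_m^\varepsilon)$ in $L^1$ or $L^2$ under $H_1$. This is exactly where the extra hypothesis $(a-b)(a_\varepsilon-b_\varepsilon)<\frac{2(a+b)}{3}$ enters: in the second-moment estimate of $Y_n^\varepsilon$ under $H_1$ the planted labeling interacts with the two auxiliary labelings produced by squaring the average-over-$\sigma$ statistic, yielding a three-configuration overlap sum whose convergence requires this strengthened bound. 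It is stronger than the condition $\kappa_\varepsilon<1$ that suffices under $H_0$ precisely because the inflated intensities $\lambda_m(1+\delta_m)$ make the long-cycle contributions heavier. Granting this tail control, a routine truncation argument---applying the continuous mapping theorem to the finite-$M$ product via the convergence above, then letting $M\to\infty$ and using Slutsky together with the uniform smallness of $R_{n,M}$---yields $Y_n^\varepsilon\overset{d}{\to}W_1^\varepsilon$ and guarantees that the infinite product $W_1^\varepsilon$ is almost surely finite and non-degenerate.

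I expect the main obstacle to be exactly this uniform tail control under $H_1$. Transferring the required $L^2$ bound from the Poisson limit back to the finite-$n$ graph, and verifying that $(a-b)(a_\varepsilon-b_\varepsilon)<\frac{2(a+b)}{3}$ is the correct sufficient condition for the three-fold overlap sum to stay bounded, is the technical heart of the argument; by contrast, the cycle-count convergence and the algebraic representation are essentially the same computations that already underlie Theorem~\ref{testing:consistency:case2}, merely re-evaluated under the block-model measure.
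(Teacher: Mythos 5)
Your high-level picture is right in two places: the cycle-count convergence under $H_1$ (your factorial-moment computation is exactly the paper's Lemma \ref{lemma:short:cycles}, Part \ref{lemma:short:cycles:2}), and the location of the extra hypothesis $(a-b)(a_\varepsilon-b_\varepsilon)<\frac{2(a+b)}{3}$, which indeed enters the second-moment computation of $Y_n^\varepsilon$ under $H_1$: squaring the average over $\sigma$ produces two replicas $\sigma,\eta$ interacting with the planted labeling $\tau$, and the condition forces $\widetilde{\kappa}_\varepsilon=\frac{(a-b)(a_\varepsilon-b_\varepsilon)}{2(a+b)}<\frac{1}{3}$, which is what makes $\exp\left(\frac{\kappa_\varepsilon}{2}\rho_4^2+\frac{\widetilde{\kappa}_\varepsilon}{2}(\rho_5^2+\rho_6^2)\right)$ uniformly integrable in the three-overlap expansion. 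But there is a genuine gap at the center of your argument: the ``cycle representation'' $Y_n^\varepsilon=\prod_{m=3}^{M}(1+\delta_m^\varepsilon)^{C_m^{(n)}}\exp(-\lambda_m\delta_m^\varepsilon)\cdot R_{n,M}$ is not an identity, exact or otherwise, and it was never established in the proof of Theorem \ref{testing:consistency:case2} --- that proof, like this one, proceeds by verifying the four hypotheses of Janson's Proposition \ref{basic:prop}, not by factoring $Y_n^\varepsilon$. The statistic $Y_n^\varepsilon$ is not a function of the cycle counts at finite $n$; the assertion that it asymptotically factors through them, with per-$m$-cycle multiplier exactly $1+\delta_m^\varepsilon$, \emph{is} the theorem's content and is precisely what Condition A2 encodes. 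You cannot invoke the ``contiguity machinery of \cite{J95}'' to control $R_{n,M}$, because the small-subgraph-conditioning conclusion is the output of checking A1--A4, not a tool available before they are checked.

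Concretely, your proposal never verifies A2 under $H_1$, i.e.\ that $\E_1\{Y_n^\varepsilon[X_{3n}]_{j_3}\cdots[X_{kn}]_{j_k}\}/\E_1 Y_n^\varepsilon\to\prod_{m=3}^k\left(\lambda_m(1+\delta_m)(1+\delta_m^\varepsilon)\right)^{j_m}$, and this is the technical heart of the paper's proof: for each vertex-disjoint tuple $H$ of cycles one writes $\E_1 Y_n^\varepsilon 1_H=\E_{\sigma\tau}X_H^\varepsilon(\sigma^1,\tau^1)W_H^\varepsilon(\sigma,\tau)Z_H^\varepsilon(\sigma^2,\tau^2)$, shows $W_H^\varepsilon\to1$ uniformly, controls $Z_H^\varepsilon$ via Hoeffding-type tail bounds (needing $\widetilde{\kappa}_\varepsilon<1$) so that dominated convergence applies, and kills the intersecting tuples by counting isomorphism classes. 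Note also that under $H_1$ the normalizer is nontrivial: $\E_1 Y_n^\varepsilon\to\exp\left(\sum_{m\ge3}\lambda_m\delta_m\delta_m^\varepsilon\right)\neq1$, a two-replica overlap computation your truncation scheme has no slot for; it is exactly this limit, multiplied against Janson's normalized limit $\prod_{m\ge3}(1+\delta_m^\varepsilon)^{Z_m^1}\exp\left(-\lambda_m(1+\delta_m)\delta_m^\varepsilon\right)$, that produces the stated $W_1^\varepsilon$ with the $\exp(-\lambda_m\delta_m^\varepsilon)$ centering. Without A2 and the first-moment limit, your plan reduces to asserting the conclusion; with them, it is the paper's proof.
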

We notice that $W_1^\varepsilon$ differs from $W_0^\varepsilon$
only in the Poisson powers, i.e., $Z_m^1$ has larger means than $Z_m^0$.
Intuitively, the power of $Y_n^\varepsilon$ should increase when
such differences become substantial. 

Based on Theorems \ref{testing:consistency:case2} and \ref{power:case2},
we can derive the asymptotic power of $Y_n^\varepsilon$ as stated in the corollary below.
The power is an unexplicit function of $(a,b,\varepsilon)$.
  
\begin{Corollary}\label{cor:power:case2}
If $\kappa\ge1$,
$\varepsilon$ satisfies (\ref{vareps:condition}) and $(a-b)(a_\varepsilon-b_\varepsilon)<\frac{2(a+b)}{3}$,
then as $n\to\infty$, the power of $Y_n^\varepsilon$ satisfies
$P(\textrm{reject $H_0$}|\textrm{under $H_1$})\to P(a,b,\varepsilon)$, where $P(a,b,\varepsilon):=P(W_1^\varepsilon\ge w_\alpha^\varepsilon)$.
\end{Corollary}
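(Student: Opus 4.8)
The plan is to read the corollary off from the distributional convergence in Theorem \ref{power:case2} by a Portmanteau (continuity) argument at the rejection threshold $w_\alpha^\varepsilon$, the only delicate point being that the limiting law is atomless.

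First I would record that, by the rejection rule, the power under $H_1$ is exactly $P(Y_n^\varepsilon \ge w_\alpha^\varepsilon)$ computed under $H_1$. Theorem \ref{power:case2} gives $Y_n^\varepsilon \overset{d}{\to} W_1^\varepsilon$ under $H_1$ whenever $\kappa\ge1$, (\ref{vareps:condition}) holds, and $(a-b)(a_\varepsilon-b_\varepsilon)<\frac{2(a+b)}{3}$, which are precisely the hypotheses of the corollary. Writing $F_1$ for the distribution function of $W_1^\varepsilon$, it remains to transfer the limit through the half-line $[w_\alpha^\varepsilon,\infty)$. Applying the two halves of the Portmanteau theorem to the closed set $[w_\alpha^\varepsilon,\infty)$ and the open set $(w_\alpha^\varepsilon,\infty)$ sandwiches $P(Y_n^\varepsilon\ge w_\alpha^\varepsilon)$ between quantities converging to $P(W_1^\varepsilon>w_\alpha^\varepsilon)$ and $P(W_1^\varepsilon\ge w_\alpha^\varepsilon)$; these agree the moment $P(W_1^\varepsilon=w_\alpha^\varepsilon)=0$, in which case
$$
\lim_{n\to\infty} P(Y_n^\varepsilon\ge w_\alpha^\varepsilon)=P(W_1^\varepsilon\ge w_\alpha^\varepsilon)=P(a,b,\varepsilon).
$$

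The crux is therefore to prove that both $W_0^\varepsilon$ and $W_1^\varepsilon$ have continuous distributions: atomlessness of $W_0^\varepsilon$ makes the quantile $w_\alpha^\varepsilon$ solving $P(W_0^\varepsilon\le w_\alpha^\varepsilon)=1-\alpha$ well-defined (via the intermediate value theorem), while atomlessness of $W_1^\varepsilon$ furnishes the missing condition $P(W_1^\varepsilon=w_\alpha^\varepsilon)=0$. I would pass to logarithms and write, for $j\in\{0,1\}$,
$$
\log W_j^\varepsilon=\sum_{m=3}^\infty T_m^j,\qquad T_m^j:=Z_m^j\log(1+\delta_m^\varepsilon)-\lambda_m\delta_m^\varepsilon,
$$
a series of independent terms which converges almost surely precisely because $\kappa_\varepsilon<1$ forces $\sum_m\lambda_m(\delta_m^\varepsilon)^2<\infty$ (Kolmogorov's three-series theorem). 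Let $Q(X):=\sup_x P(X=x)$ be the concentration function. Since adjoining an independent summand never increases $Q$, splitting off the $m$-th term gives $Q(\log W_j^\varepsilon)\le Q(T_m^j)=\alpha_m$ for every $m\ge3$, where $\alpha_m:=\max_k P(Z_m^j=k)$ is the modal probability of the $m$-th Poisson factor; here the deterministic shift $-\lambda_m\delta_m^\varepsilon$ and the nonzero factor $\log(1+\delta_m^\varepsilon)$ (nonzero since $a_\varepsilon>b_\varepsilon$) leave $Q$ unchanged. Hence $Q(\log W_j^\varepsilon)\le\inf_{m\ge3}\alpha_m$.

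It remains to verify $\inf_m\alpha_m=0$. The Poisson means are $\lambda_m$ under $H_0$ and $\lambda_m(1+\delta_m)$ under $H_1$, both at least $\lambda_m=\frac{1}{2m}\left(\frac{a+b}{2}\right)^m$. The assumption $\kappa\ge1$ together with $a>b>0$ forces $a+b>2$: with $s=a+b$ and $d=a-b$, positivity of $b$ gives $d<s$ while $\kappa\ge1$ gives $d^2\ge2s$, so $s^2>d^2\ge2s$ and $s>2$. Thus $\lambda_m\to\infty$, and since the modal probability of a Poisson law with mean $\mu\to\infty$ decays like $(2\pi\mu)^{-1/2}$, we get $\alpha_m\to0$ and $\inf_m\alpha_m=0$. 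Therefore $Q(\log W_j^\varepsilon)=0$, and as $\exp$ is a bijection, $W_j^\varepsilon=\exp(\log W_j^\varepsilon)$ is atomless too. This both legitimizes $w_\alpha^\varepsilon$ and removes the potential atom at it, so the Portmanteau step closes. I expect the atomlessness of the infinite-product limits to be the sole genuine obstacle; everything else is routine manipulation of weak convergence.
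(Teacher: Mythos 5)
Your proposal is correct, and it is in fact more complete than what the paper offers: the paper states Corollary \ref{cor:power:case2} with no proof at all, treating it as an immediate consequence of Theorems \ref{testing:consistency:case2} and \ref{power:case2} via exactly the Portmanteau step you describe. What the paper leaves implicit --- and what you supply --- is the continuity of the limit laws, which is needed twice: once so that $w_\alpha^\varepsilon$ with $P(W_0^\varepsilon\le w_\alpha^\varepsilon)=1-\alpha$ exists (the paper simply posits it), and once so that $P(W_1^\varepsilon=w_\alpha^\varepsilon)=0$ closes the sandwich between the open and closed half-lines. Your concentration-function argument for atomlessness is sound: writing $\log W_j^\varepsilon$ as an a.s.\ convergent series of independent terms (a.s.\ convergence is also available directly from Janson's theory, cf.\ Remark \ref{rem:J95}, so the three-series check is optional), splitting off the $m$-th summand gives $Q(\log W_j^\varepsilon)\le\max_k P(Z_m^j=k)$ since the affine map $k\mapsto k\log(1+\delta_m^\varepsilon)-\lambda_m\delta_m^\varepsilon$ is injective ($\delta_m^\varepsilon>0$ because $\varepsilon<\frac{a-b}{2}$), and your deduction that $\kappa\ge1$ with $b>0$ forces $a+b>2$, hence $\lambda_m\to\infty$ and modal Poisson probabilities of order $\lambda_m^{-1/2}\to0$, is correct.

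Two small glosses worth tightening, neither fatal. First, under $H_1$ the a.s.\ convergence of $\sum_m T_m^1$ needs slightly more than $\sum_m\lambda_m(\delta_m^\varepsilon)^2<\infty$: the variances are $\lambda_m(1+\delta_m)\bigl(\log(1+\delta_m^\varepsilon)\bigr)^2$, which is handled since $\delta_m<1$, but the mean terms contribute $\sum_m\lambda_m\delta_m\log(1+\delta_m^\varepsilon)\le\sum_m\frac{1}{2m}\widetilde{\kappa}_\varepsilon^m$ with $\widetilde{\kappa}_\varepsilon=\frac{(a-b)(a_\varepsilon-b_\varepsilon)}{2(a+b)}$, and its convergence uses the corollary's extra hypothesis $(a-b)(a_\varepsilon-b_\varepsilon)<\frac{2(a+b)}{3}$ (giving $\widetilde{\kappa}_\varepsilon<\frac13$), not just $\kappa_\varepsilon<1$. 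Second, atomlessness gives existence of $w_\alpha^\varepsilon$ by the intermediate value theorem but not uniqueness; this is harmless here since any such quantile works in the argument, but it is worth a remark if you write this up.
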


\begin{Remark}

The value of $\varepsilon$ can be empirically selected.
Specifically, choose $\varepsilon$ to satisfy (\ref{vareps:condition}) and $(a-b)(a_\varepsilon-b_\varepsilon)<\frac{2(a+b)}{3}$
with $a,b$ therein replaced by their consistent estimators.
Existence of such consistent estimators is guaranteed by \cite{MNS15} when $\kappa>1$.
\end{Remark}


\subsection{Power Analysis.}\label{sec:power:analysis}
Corollary \ref{cor:power:case2} derives an asymptotic power $P(a,b,\varepsilon)$ for $Y_n^\varepsilon$.
In this section, we further examine this power and demonstrate whether and when it can approach one.
It is challenging to directly analyze $P(a,b,\varepsilon)$ for fixed $a,b$ due to the lack of explicit expression. 
Instead, we will consider the relatively easier growing-degree regime ($a+b\to\infty$) and discuss its connection to existing work. 
Theorem \ref{lim:power:case:2} provides an explicit expression for the limit of $P(a,b,\varepsilon)$.
Let $\Phi(\cdot)$ denote the cumulative distribution function of standard normal variable
and $z_{1-\alpha}$ denote its $1-\alpha$ quantile, i.e.,
$\Phi(z_{1-\alpha})=1-\alpha$.

\begin{Theorem}\label{lim:power:case:2}
If $\kappa\ge1$ and $\varepsilon\in\left(0,\frac{a-b}{2}\right)$
satisfies, when $a+b\to\infty$, $\frac{(a_\varepsilon-b_\varepsilon)^2}{2(a+b)}\to k_1$ and
$\frac{(a-b)(a_\varepsilon-b_\varepsilon)}{2(a+b)}\to k_2$
for constants $k_1,k_2\in(0,1)$, then
$P(a,b,\varepsilon)\to\Phi\left(\frac{\sigma_2^2}{\sigma_1}-z_{1-\alpha}\right)$
as $a+b\to\infty$,
where $\sigma_l^2=\sum_{m=3}^\infty\frac{1}{2m}k_l^m=-\frac{1}{2}\left(\log(1-k_l)+k_l+\frac{1}{2}k_l^2\right)$, $l=1,2$. 
\end{Theorem}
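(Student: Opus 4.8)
The plan is to reduce everything to the logarithms $\log W_0^\varepsilon$ and $\log W_1^\varepsilon$, which by Theorems \ref{testing:consistency:case2} and \ref{power:case2} are infinite sums of independent terms
\[
\log W_j^\varepsilon=\sum_{m=3}^\infty\Big[Z_m^j\log(1+\delta_m^\varepsilon)-\lambda_m\delta_m^\varepsilon\Big],\qquad j=0,1,
\]
and to show that, as $a+b\to\infty$, each converges in distribution to a Gaussian. The key structural observation is that in the stated regime both scale parameters shrink, $\delta_m^\varepsilon\to0$ and $\delta_m\to0$ for every fixed $m$ (since $\frac{a_\varepsilon-b_\varepsilon}{a+b}\to0$ and $\frac{a-b}{a+b}\to0$ follow from $k_1,k_2$ being finite), while each Poisson mean $\lambda_m=\frac{1}{2m}(\frac{a+b}{2})^m\to\infty$. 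Thus every summand is a centered, heavily-scaled large-mean Poisson, which is itself asymptotically normal, and the independent sum should again be Gaussian with the matching mean and variance.

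Rather than invoking a triangular-array CLT I would argue directly through characteristic functions, which is cleaner here since the limiting variance stays finite. Using $\E e^{itZ}=e^{\lambda(e^{it}-1)}$ for $Z\sim\mathrm{Poisson}(\lambda)$, the characteristic function of $\log W_0^\varepsilon$ is
\[
\phi_0(t)=\exp\!\Big(\sum_{m=3}^\infty\big[-it\lambda_m\delta_m^\varepsilon+\lambda_m(e^{it\log(1+\delta_m^\varepsilon)}-1)\big]\Big),
\]
with the analogue $\phi_1$ obtained by replacing $\lambda_m$ with $\lambda_m(1+\delta_m)$ in the $(e^{it\log(1+\delta_m^\varepsilon)}-1)$ factor only (the shift $\lambda_m\delta_m^\varepsilon$ is common to both laws). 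Expanding $\log(1+\delta_m^\varepsilon)=\delta_m^\varepsilon-\tfrac12(\delta_m^\varepsilon)^2+\cdots$ and $e^{ix}-1=ix-\tfrac12x^2+\cdots$, the $m$-th exponent term reduces to leading order to $-\tfrac12 it\,\lambda_m(\delta_m^\varepsilon)^2-\tfrac12 t^2\lambda_m(\delta_m^\varepsilon)^2$ under $H_0$, and to the same expression plus $it\,\lambda_m\delta_m\delta_m^\varepsilon$ under $H_1$. The elementary identities
\[
\lambda_m(\delta_m^\varepsilon)^2=\tfrac{1}{2m}\Big(\tfrac{(a_\varepsilon-b_\varepsilon)^2}{2(a+b)}\Big)^m\to\tfrac{1}{2m}k_1^m,
\qquad
\lambda_m\delta_m\delta_m^\varepsilon=\tfrac{1}{2m}\Big(\tfrac{(a-b)(a_\varepsilon-b_\varepsilon)}{2(a+b)}\Big)^m\to\tfrac{1}{2m}k_2^m
\]
then identify the summed limits as $\sigma_1^2$ and $\sigma_2^2$, yielding $\phi_0(t)\to\exp(-\tfrac12 it\sigma_1^2-\tfrac12 t^2\sigma_1^2)$ and $\phi_1(t)\to\exp(it(\sigma_2^2-\tfrac12\sigma_1^2)-\tfrac12 t^2\sigma_1^2)$, i.e.
\[
\log W_0^\varepsilon\overset{d}{\to}N\!\big(-\tfrac12\sigma_1^2,\sigma_1^2\big),\qquad
\log W_1^\varepsilon\overset{d}{\to}N\!\big(\sigma_2^2-\tfrac12\sigma_1^2,\sigma_1^2\big).
\]

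With the two Gaussian limits in hand, the power is assembled by Slutsky's theorem. Since $\log w_\alpha^\varepsilon$ is the $(1-\alpha)$-quantile of $\log W_0^\varepsilon$ and the limit law is continuous and strictly increasing, quantile convergence gives $\log w_\alpha^\varepsilon\to-\tfrac12\sigma_1^2+\sigma_1 z_{1-\alpha}$. Standardizing $\log W_1^\varepsilon$ by its limiting mean $\sigma_2^2-\tfrac12\sigma_1^2$ and standard deviation $\sigma_1$, the rejection event $\{\log W_1^\varepsilon\ge\log w_\alpha^\varepsilon\}$ has limiting probability
\[
1-\Phi\!\Big(\frac{(-\tfrac12\sigma_1^2+\sigma_1 z_{1-\alpha})-(\sigma_2^2-\tfrac12\sigma_1^2)}{\sigma_1}\Big)
=1-\Phi\!\Big(z_{1-\alpha}-\frac{\sigma_2^2}{\sigma_1}\Big)=\Phi\!\Big(\frac{\sigma_2^2}{\sigma_1}-z_{1-\alpha}\Big),
\]
which is the claim; continuity of the limit law makes the boundary $\{=\}$ irrelevant.

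The step I expect to be the real work is interchanging the limit $a+b\to\infty$ with the infinite sum inside $\log\phi_j$, together with control of the Taylor remainders. The saving grace is the hypothesis $k_1,k_2<1$: for all large $a+b$ one has $\frac{(a_\varepsilon-b_\varepsilon)^2}{2(a+b)}\le k_1'<1$ and $\frac{(a-b)(a_\varepsilon-b_\varepsilon)}{2(a+b)}\le k_2'<1$, so using $0\le\log(1+\delta_m^\varepsilon)\le\delta_m^\varepsilon$, $|e^{ix}-1-ix|\le\tfrac12 x^2$ and $|\log(1+x)-x|\le\tfrac12 x^2$, each exponent term is dominated by $C(t)\,\lambda_m(\delta_m^\varepsilon)^2\le C(t)\tfrac{1}{2m}(k_1')^m$ (and similarly with $k_2'$ for the $H_1$ correction), a summable bound independent of $n$. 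Dominated convergence then legitimizes passing the limit through the sum and discarding the cubic-and-higher remainders, which carry extra factors of $\delta_m^\varepsilon\to0$. I would establish these bounds once, uniformly in $m$ and in the regime; the remaining identifications are the routine series evaluation $\sum_{m\ge3}\frac{1}{2m}k_l^m=-\tfrac12(\log(1-k_l)+k_l+\tfrac12k_l^2)$.
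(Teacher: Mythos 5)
Your proposal is correct, and it reaches the theorem by a genuinely different route than the paper. The paper first centers by the deterministic drift $\Delta_\varepsilon=\sum_{m\ge3}\lambda_m\bigl(\log(1+\delta_m^\varepsilon)-\delta_m^\varepsilon\bigr)$ and writes $\log W_j^\varepsilon-\Delta_\varepsilon$ as an infinite weighted sum of standardized Poissons, then invokes a dedicated technical lemma (its Lemma on series convergence: if $\{c_{ml}\}$ has uniformly vanishing square tails and converges entrywise, and the $N_{ml}$ are independent, standardized, asymptotically normal, then $\sum_m c_{ml}N_{ml}\overset{d}{\to}N(0,\sum_m c_m^2)$), proved by truncation plus Chebyshev; this yields $\log W_0^\varepsilon-\Delta_\varepsilon\overset{d}{\to}N(0,\sigma_1^2)$ and $\log W_1^\varepsilon-\Delta_\varepsilon\overset{d}{\to}N(\sigma_2^2,\sigma_1^2)$. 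You instead compute the Poisson characteristic functions exactly and pass to the limit by dominated convergence, using precisely the right cancellation (pairing $-it\lambda_m\delta_m^\varepsilon$ with $\lambda_m(e^{it\log(1+\delta_m^\varepsilon)}-1)$ so that the elementary bounds $|e^{ix}-1-ix|\le x^2/2$ and $|\log(1+x)-x|\le x^2/2$ give the summable dominator $C(t)\frac{1}{2m}(k')^m$, $k'<1$); your limit laws $N(-\tfrac12\sigma_1^2,\sigma_1^2)$ and $N(\sigma_2^2-\tfrac12\sigma_1^2,\sigma_1^2)$ agree with the paper's after noting $\Delta_\varepsilon\to-\tfrac12\sigma_1^2$, and you correctly observe that the centering $\lambda_m\delta_m^\varepsilon$ is common to $W_0^\varepsilon$ and $W_1^\varepsilon$ while only the Poisson mean changes, and that the $H_1$ variance correction $\lambda_m\delta_m(\delta_m^\varepsilon)^2$ vanishes summably. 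The final assembly — quantile convergence for $\log w_\alpha^\varepsilon$ and evaluation of $P(W_1^\varepsilon\ge w_\alpha^\varepsilon)$ — is identical in both arguments. What each approach buys: the paper's lemma isolates reusable probabilistic machinery (applicable beyond Poisson powers) at the cost of an auxiliary truncation argument whose tail condition must still be verified by essentially your geometric bound; your characteristic-function route is self-contained and exact, needs no separate asymptotic-normality input for each summand, and delivers the mean shifts $-\tfrac12\sigma_1^2$ and $\sigma_2^2$ automatically rather than via a separate computation of $\Delta_\varepsilon$. The one point worth making explicit in a write-up is that the infinite product of characteristic functions equals the exponential of the absolutely convergent sum of exponents (justified by independence and your uniform bound), so no branch-of-logarithm issues arise.
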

We remark that the limit power $\Phi\left(\frac{\sigma_2^2}{\sigma_1}-z_{1-\alpha}\right)$ approaches one if $\kappa\to\infty$ (regardless of the choice of $\varepsilon$). 
To see this, note that
\begin{equation}\label{eq:power:anal}
\frac{\sigma_2^2}{\sigma_1}=-\frac{\log(1-k_2)+k_2+\frac{1}{2}k_2^2}{\sqrt{-2\left(\log(1-k_1)+k_1+\frac{1}{2}k_1^2\right)}}
\asymp\kappa^{3/2}.
\end{equation}
The above (\ref{eq:power:anal}) holds uniformly for $\varepsilon$ satisfying the conditions of Theorem \ref{lim:power:case:2}
and $\kappa^{3/2}$ on the right side is free of $\varepsilon$.
If $\kappa\to\infty$, then $\frac{\sigma_2^2}{\sigma_1}\to\infty$, and so $\Phi\left(\frac{\sigma_2^2}{\sigma_1}-z_{1-\alpha}\right)$ approaches one.
The power behavior merely relies on $\kappa$ while being free of $\varepsilon$.
Our result is closely relating to \cite{BM17} who investigate the asymptotic power of the classic LR test which nonetheless requires
$0<\kappa<1$. 
\cite{MS16} proposed an efficient method based on semidefinite program but their size and power are not
 explicitly quantifiable like ours.

\subsection{Monte-Carlo Approximation.}\label{sec:approx}
Despite its theoretically nice properties,
the test statistic $Y_n^\varepsilon$ might be computationally infeasible.
This can be easily seen from (\ref{lrt:case2}), i.e.,
$Y_n^\varepsilon$ can be viewed as the average of the quantity 
$g_n^\varepsilon(\sigma)$ over the entire space of configurations $\{\pm\}^n$,
where
\[
g_n^\varepsilon(\sigma)=\prod_{u<v}\left(\frac{p_{uv}^\varepsilon(\sigma)}{p_0}\right)^{A_{uv}}
\left(\frac{q_{uv}^\varepsilon(\sigma)}{q_0}\right)^{1-A_{uv}}.
\]
The computational effort for the direct averages
scales exponentially with $n$.
So an accurate and computationally efficient approximation of $Y_n^\varepsilon$ would be needed for practical use.

In this section, we consider the
classical Monte Carlo (MC) method which randomly chooses a set of $M$ configurations $\sigma[1],\ldots,\sigma[M]$ from $\{\pm\}^n$.
The MC method works directly under the bounded-degree regime. 
The average $Y_n^\varepsilon$ is naturally approximated by the sample mean of $g_n^\varepsilon(\sigma[l])$'s,
which may substantially reduce computational cost if $M\ll 2^n$.  
However, a small choice of $M$ may result in inaccurate approximation.
An interesting question is how small $M$ can be to ensure valid approximation.
More explicitly, we aim to find an order of $M$ such that the following approximation becomes valid:
\begin{equation}\label{approx:lrt2}
\widehat{Y}_n^\varepsilon\equiv\frac{1}{M}\sum_{l=1}^M g_n^\varepsilon(\sigma[l])=Y_n^\varepsilon+o_P(1).
\end{equation}
The following theorem shows that the validity of (\ref{approx:lrt2}) is possible.
 
\begin{Theorem}\label{thm:approx2}
Suppose $\kappa\ge1$, $\varepsilon$ satisfies (\ref{vareps:condition}),
and $M\gg\exp\left(\frac{n\kappa_\varepsilon}{2}\right)$. 
Then (\ref{approx:lrt2}) holds under $H_0$.
Moreover, if $(a_\varepsilon-b_\varepsilon)(a-b)<a+b$, then (\ref{approx:lrt2}) holds under $H_1$. 
\end{Theorem}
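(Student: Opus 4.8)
The plan is to view $\widehat{Y}_n^\varepsilon$ as a conditionally unbiased Monte Carlo estimator of $Y_n^\varepsilon$ and to control its conditional variance through a second-moment computation. Write $S(A):=2^{-n}\sum_{\sigma}g_n^\varepsilon(\sigma)^2$. Since $\sigma[1],\dots,\sigma[M]$ are i.i.d.\ uniform on $\{\pm\}^n$ and independent of $A$, conditioning on $A$ gives $\E[\widehat{Y}_n^\varepsilon\mid A]=2^{-n}\sum_\sigma g_n^\varepsilon(\sigma)=Y_n^\varepsilon$ and $\var(\widehat{Y}_n^\varepsilon\mid A)=M^{-1}\var_\sigma(g_n^\varepsilon(\sigma)\mid A)\le M^{-1}S(A)$. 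Conditional Chebyshev followed by taking expectations over $A$ yields, for every fixed $t>0$,
\[
P\big(|\widehat{Y}_n^\varepsilon-Y_n^\varepsilon|>t\big)\le \frac{\E[S(A)]}{Mt^2}.
\]
Hence it suffices to prove $\E[S(A)]=O\!\big(\exp(n\kappa_\varepsilon/2)\big)$ under each hypothesis, since then $M\gg\exp(n\kappa_\varepsilon/2)$ drives the right-hand side to $0$, which is exactly (\ref{approx:lrt2}).

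Under $H_0$ the entries $A_{uv}$ are i.i.d.\ Bernoulli$(p_0)$, so for a fixed $\sigma$ the factors of $g_n^\varepsilon(\sigma)^2$ are independent and, writing $p_{uv}^\varepsilon$ for $p_{uv}^\varepsilon(\sigma)$ and using $\tfrac{x^2}{p}+\tfrac{(1-x)^2}{1-p}=1+\tfrac{(x-p)^2}{p(1-p)}$,
\[
\E_0[g_n^\varepsilon(\sigma)^2]=\prod_{u<v}\Big(\frac{(p_{uv}^\varepsilon)^2}{p_0}+\frac{(q_{uv}^\varepsilon)^2}{q_0}\Big)=\prod_{u<v}\Big(1+\frac{(p_{uv}^\varepsilon-p_0)^2}{p_0q_0}\Big).
\]
A direct check gives $(p_{uv}^\varepsilon-p_0)^2=(a_\varepsilon-b_\varepsilon)^2/(4n^2)$ regardless of whether $\sigma_u=\sigma_v$, so each factor equals $1+\kappa_\varepsilon/n+O(n^{-2})$ uniformly; taking the product over $\binom{n}{2}$ edges gives $\E_0[g_n^\varepsilon(\sigma)^2]=\exp(n\kappa_\varepsilon/2)(1+o(1))$ independently of $\sigma$, and averaging over $\sigma$ settles the $H_0$ case.

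The harder part is $H_1$, where I would condition on the hidden labels $\tau$, so that $A_{uv}\mid\tau$ are independent Bernoulli with parameter $a/n$ or $b/n$, and expand the per-edge second moment $p^*_{uv}\tfrac{(p_{uv}^\varepsilon)^2}{p_0^2}+q^*_{uv}\tfrac{(q_{uv}^\varepsilon)^2}{q_0^2}=1+\gamma_{uv}/n+O(n^{-2})$, where $\gamma_{uv}$ takes one of four values $\gamma_1,\dots,\gamma_4$ according to which of the cases $\{\tau_u=\tau_v\text{ or not}\}\times\{\sigma_u=\sigma_v\text{ or not}\}$ holds. Expressing the four edge-counts through the symmetric statistics $\sum_{u<v}\tau_u\tau_v$, $\sum_{u<v}\sigma_u\sigma_v$, $\sum_{u<v}\tau_u\tau_v\sigma_u\sigma_v$, the exponent $n^{-1}\sum_{u<v}\gamma_{uv}$ decomposes into four pieces. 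The leading piece uses $\sum_i\gamma_i=4\kappa_\varepsilon$ and reproduces $\exp(n\kappa_\varepsilon/2)$; the coefficient attached to the overlap statistic is $\gamma_1-\gamma_2-\gamma_3+\gamma_4=4(a-b)(a_\varepsilon-b_\varepsilon)/(a+b)=:\beta$, and—this is the key cancellation—the coefficient $\gamma_1-\gamma_2+\gamma_3-\gamma_4$ of the pure $\sigma$-statistic vanishes identically (it equals $4(a_\varepsilon-b_\varepsilon)[(a_\varepsilon+b_\varepsilon)/(a+b)-1]=0$). Setting $V=\sum_u\tau_u\sigma_u$, a sum of $n$ i.i.d.\ Rademacher variables, I am therefore left to bound $\E_\sigma\exp(\beta V^2/(8n))$ by a constant. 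The plan here is a Hubbard--Stratonovich linearization: for $G\sim N(0,1)$,
\[
\E_\sigma\,e^{\beta V^2/(8n)}=\E_G\Big[\cosh\!\big(\sqrt{\beta/(4n)}\,G\big)^{n}\Big]\le \E_G\,e^{\beta G^2/8}=(1-\beta/4)^{-1/2},
\]
using $\cosh x\le e^{x^2/2}$; this is finite \emph{precisely} when $\beta<4$, i.e.\ $(a_\varepsilon-b_\varepsilon)(a-b)<a+b$, matching the stated hypothesis. Combining, $\E_1[S(A)]=O(\exp(n\kappa_\varepsilon/2))$, and the Chebyshev bound above concludes.

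I expect the $H_1$ step to be the main obstacle: both verifying the vanishing of the $\sigma$-only coefficient (so that only the overlap $V$ survives) and, more delicately, establishing a \emph{uniform-in-$n$} bound on $\E_\sigma\exp(\beta V^2/(8n))$, since $\exp(tx^2)$ is unbounded and a naive central-limit or uniform-integrability argument would be fragile. The $\cosh$ inequality sidesteps this and pins down the exact threshold $\beta<4$; by contrast the Chebyshev reduction and the $H_0$ computation are comparatively routine.
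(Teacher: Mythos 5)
Your overall architecture is the same as the paper's: the conditional-variance-plus-Chebyshev reduction to bounding $\E_{A,\sigma}\,g_n^\varepsilon(\sigma)^2$, the $H_0$ computation via $1+\kappa_\varepsilon/n+O(n^{-2})$ per edge, and the four-case expansion of the per-edge second moment under $H_1$. But your $H_1$ argument has a genuine gap: the exponent $n^{-1}\sum_{u<v}\gamma_{uv}$ decomposes into \emph{four} pieces, and you only dispose of three. Besides the constant piece, the overlap piece, and the pure-$\sigma$ piece (which does vanish, exactly for the reason you give, $a_\varepsilon+b_\varepsilon=a+b$), there is a pure-$\tau$ piece whose coefficient $\gamma_1+\gamma_2-\gamma_3-\gamma_4=\frac{2(a_\varepsilon-b_\varepsilon)^2(a-b)}{(a+b)^2}$ is \emph{not} zero --- the cancellation is asymmetric because the $\varepsilon$-probabilities enter squared while the true edge probabilities enter linearly. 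This leaves a surviving factor $\exp\left(\frac{(a_\varepsilon-b_\varepsilon)^2(a-b)}{4(a+b)^2}\,\rho_3^2\right)$ with $\rho_3=n^{-1/2}\sum_u\tau_u$, which is precisely the $\rho_3^2$ term retained in the paper's proof alongside the $\rho_5^2$ overlap term. Consequently your ``therefore left to bound $\E_\sigma\exp(\beta V^2/(8n))$'' does not follow: conditionally on $\tau$ the extra factor is constant but not uniformly bounded over $\tau$ (for $\tau\equiv+$ it is $e^{c(n-1)}$ with $c>0$), so a fixed-$\tau$ bound cannot deliver $\E_1[S(A)]=O(e^{n\kappa_\varepsilon/2})$; you must also integrate over $\tau$.

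The gap is local and repairable with your own tool: since your $\cosh$ bound on $\E_\sigma e^{\beta V^2/(8n)}$ is uniform in $\tau$, you may first bound it by $(1-\beta/4)^{-1/2}$ and then bound $\E_\tau e^{c\rho_3^2}\le(1-2c)^{-1/2}$ by the identical Hubbard--Stratonovich argument, where $c=\frac{(a_\varepsilon-b_\varepsilon)^2(a-b)}{4(a+b)^2}<\frac{1}{2}$ is implied by the assumed $(a_\varepsilon-b_\varepsilon)(a-b)<a+b$ (indeed $(a_\varepsilon-b_\varepsilon)^2(a-b)<(a_\varepsilon-b_\varepsilon)(a+b)<(a+b)^2$). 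Once patched, your route coincides with the paper's second-moment computation --- your remaining coefficient identities ($\sum_i\gamma_i=4\kappa_\varepsilon$, $\beta=4(a-b)(a_\varepsilon-b_\varepsilon)/(a+b)$, the vanishing pure-$\sigma$ term) all check out --- and in one respect improves on it: where the paper passes to the limit via joint asymptotic normality of $\rho_3,\rho_5$ with uniform integrability left implicit, your $\cosh x\le e^{x^2/2}$ linearization gives a non-asymptotic bound valid for every $n$ and pins down the same threshold, $(1-\beta/4)^{-1/2}=\left(1-\frac{(a-b)(a_\varepsilon-b_\varepsilon)}{a+b}\right)^{-1/2}$, matching the paper's limiting constant.
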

According to Theorem \ref{thm:approx2}, (\ref{approx:lrt2}) becomes valid
when only $M\gg\exp\left(\frac{n\kappa_\varepsilon}{2}\right)$ configurations are used, e.g.,
$M\asymp(\log{n})^c\exp\left(\frac{n\kappa_\varepsilon}{2}\right)$ for a constant $c>0$.
When $\kappa_\varepsilon$ is close to zero, this may substantially reduce the computational cost
from $O(2^n)$ to nearly $O([\exp(\kappa_\varepsilon/2)]^n)$.
However, MC method still requires heavy computation.
A more efficient method would be highly useful. 

\section{Numerical Studies}

In this section, we examine the performance of
the proposed testing procedure through simulation studies in Section \ref{sec:sim},
and through real-world data sets in Section \ref{sec:realdata}.

\subsection{Simulation.}\label{sec:sim} 
The empirical performance of our test statistic $\widehat{Y}_n^\varepsilon$
is demonstrated through simulation studies.
We also compared our method with the spectral method proposed by Bickel and Sarkar \cite{BS16}
and the subgraph count method proposed by Gao and Lafferty \cite{GL17a}. Throughout we assume that both $a$ and $b$ are known. 
We evaluated the size and power of various methods at significance level 0.05.
For size, data were generated from $\mathcal{G}\left(n,\frac{a+b}{2n}\right)$.
For power, data were generated from $\mathcal{G}\left(n,\frac{a}{n},\frac{b}{n}\right)$.
Both size and power were calculated as proportions of rejections based on 500 independent experiments.

We examined various choices of $a,b,n,\varepsilon$ for $\widehat{Y}_n^\varepsilon$. 
For convenience, denote $a=2.5+c$, $b=2.5-c$. We chose $n=20,30,40,45$
and $(c,\varepsilon)=(2.10, 1.10), (2.15, 1.15), (2.25, 1.25), (2.35, 1.35)$.
The $\varepsilon$ in each case was chosen to be approximately $c/2$.
The corresponding values of SNR $\kappa=\frac{(a-b)^2}{2(a+b)}$ are $1.76, 1.85, 2.03, 2.21$.
We chose $100n^3e^{\frac{n\kappa_\varepsilon}{2}}$ samples for MC approximations according to Theorem \ref{thm:approx2} for 
calculating $\widehat{Y}_n^\varepsilon$.
Table \ref{taba2} summarizes the size and power of our test.
For all cases, the sizes of the $\widehat{Y}_n^\varepsilon$ are close to the 0.05 nominal level indicating the validity of the test.
For each choice of $(c,\varepsilon)$, the power increases along with $n$.
For any fixed $n$, the power increases as $\kappa$ increases, consistent
with Theorem \ref{lim:power:case:2} which states that the power should increase with $\kappa$.
  
\begin{table}[h]
\centering
		\begin{tabular}{ |c c |c |c|c|c| } 
			\hline
			$(c,\epsilon)$& $\kappa$	& $n=20$ 		&	 $n=30$ 	& $n=40$  & $n=45$\\
			\hline
			(2.10, 1.10) & 1.76 	&0.572 (0.058) &0.654 (0.042) 	&0.730 (0.050)& 0.812 (0.048)\\
			\hline
			(2.15, 1.15) & 1.85		&0.598 (0.054) 	&0.684 (0.052) &0.764 (0.050)  & 0.852 (0.040)\\
			\hline
			(2.25, 1.25) & 2.03		&0.626 (0.056) &0.704 (0.044)	&0.802 (0.042) & 0.910 (0.044)\\
			\hline
			(2.35, 1.35) & 2.21		&0.648 (0.044) 	&0.736 (0.042) &0.888 (0.058)  & 1.000 (0.042)\\
			\hline
		\end{tabular}
			\caption{Power (Size) of $\varepsilon$-LR test based on various choices of $c,\varepsilon,n$.}\label{taba2}
\end{table}

Tables \ref{tb::bs16} and \ref{tabgaoad} summarize the size and power of BS's spectral method and GL's subgraph count method.
It is worth mentioning that the sizes of both methods are free of $c$ since the null models under various values of $c$ are equivalent
and the sizes of both methods are uniquely determined by the common null model.
Due to the high sparsity of the simulated networks, subgraph counts are generally small,
and we obtained the critical values for GL's method based on resampling instead of using asymptotic distribution.
It is observed that both methods achieve smaller power than $\varepsilon$-LR while maintaining the correct size.

\begin{table}[h]
\centering
		\begin{tabular}{ |c c |c |c|c|c| } 
			\hline
			$c$& $\kappa$	& $n=20$ 		&	 $n=30$ 	& $n=40$  & $n=45$\\
			\hline  
			2.10 & 1.76 &0.308 (0.070)&0.260 (0.048)&0.266 (0.058)&0.300 (0.054)\\
			\hline
			2.15 & 1.85 &0.330 (0.070)&0.280 (0.048)&0.288 (0.058)&0.314 (0.054)\\
			\hline
			2.25 & 2.03 &0.364 (0.070)&0.336 (0.048)&0.348 (0.058)&0.362 (0.054)\\
			\hline
			2.35 & 2.21 &0.400 (0.070)&0.376 (0.048)&0.402 (0.058)&0.402 (0.054)\\
			\hline
		\end{tabular}
			\caption{Power (Size) of BS's spectral test based on various choices of $c,n$.}\label{tb::bs16}
\end{table}

\begin{table}[h]
\centering
		\begin{tabular}{ |c c |c|c|c|c|  } 
			\hline
			$c$	& $\kappa$ & $n=20$ 		&	 $n=30$ 	& $n=40$  & $n=45$\\
			\hline
			2.10	& 1.76 & 0.156 (0.05)  &0.216 (0.05) 	&0.196 (0.05)  &0.168 (0.05)  \\
			\hline
			2.15& 1.85 & 0.216 (0.05) &0.224 (0.05)	&0.204 (0.05) &0.206 (0.05) \\
			\hline
			2.25	& 2.03 & 0.296 (0.05) &0.234 (0.05)	&0.246 (0.05) &0.300 (0.05)\\
			\hline
			2.35	& 2.21 & 0.306 (0.05) &0.350 (0.05)	&0.328 (0.05) &0.336 (0.05)  \\
			\hline
		\end{tabular}
			\caption{Power (Size) of GL's subgraph count test based on various choices of $c,n$.}\label{tabgaoad}
\end{table}

\subsection{Real Data Analysis.}\label{sec:realdata}

In this section, we applied our procedure to analyze the political book data (\cite{Newman06})
which has 105 political books (nodes). Two books are connected if they were frequently co-purchased on Amazon.
This data was analyzed by \cite{ZLZ11} who detected three communities.
We used R package \textit{igraph} based on a spin-glass model and simulated annealing to recover their findings,
and denote the three communities by $C_I, C_{II}, C_{III}$
which contain 20, 44, 41 nodes, respectively. 
Books within the same community are expected to demonstrate similar political tendencies.
The aim of this study is to examine whether our method can detect the existence of the communities.
Our $\varepsilon$-LR method was based on $M=10^7e^{n\delta}$ MC samples with $\delta\approx 0.01$
and $\varepsilon\approx\frac{\hat{a}+\hat{b}}{2}$,
where $\widehat{a}=21.633$, $\widehat{b}=1.139$ are MLEs of $a,b$ under $H_1$.
We first examined whether $H_0$ is rejected (at 0.05) over each community.
Table \ref{tabrej:pbd:1} summarizes the results.
We find that all three methods rejected $H_0$ over $C_I$.
This incorrect decision might be due to the small size of the first community.
Moreover, $\varepsilon$-LR failed to reject $H_0$ over communities $C_{II}$, $C_{III}$;
BS rejected $H_0$ over $C_{II}$, $C_{III}$;
GL rejected $H_0$ over $C_{III}$ 
while failed to reject $H_0$ over $C_{II}$. 
\begin{table}[h]
\centering
		\begin{tabular}{ cccc  } 
Method & \multicolumn{3}{c}{P-value}\\ \hline
& $C_I$ & $C_{II}$ & $C_{III}$\\ \hline
$\varepsilon$-LR&0.000 &1.000 &1.000 \\
BS & 0.000 & 0.000 & 0.000\\
GL & 0.000 & 0.465 & 0.039\\ \hline
		\end{tabular}
			\caption{P-values of three methods over communities $C_I, C_{II}, C_{III}$
			for Political Book Data.}\label{tabrej:pbd:1}
\end{table}

We then examined whether $H_0$ is rejected for a subnetwork with nodes from two different communities.
In particular, we uniformly sampled 20 nodes out of $C_{II}$ without replacement, 
and combined it with $C_I$. 
Therefore, the combined  network has two communities of 20 nodes each.
We also examined the combination regimes $C_I$ \& $C_{II}$
and $C_{II}$ \& $C_{III}$ with 20 nodes uniformly sampled from $C_{II}$ in the former
and from both $C_{II}, C_{III}$ in the latter (so each combination has 40 nodes in total).
We repeated each combination regime 100 times and 
calculated the proportions that $H_0$ was rejected.
Results are summarized in Table \ref{tabrej:pbd:2}.
It is observed that all three methods rejected $H_0$ 100 times over $C_I$ \& $C_{II}$ hence
the success rates are 100\%.
The rejection rates of $\varepsilon$-LR over $C_I$ \& $C_{III}$ and over $C_{II}$ \& $C_{III}$ 
are $92\%$ and  $83\%$ respectively. The success rates of BS $100\%$,
and the success rates of GL are 98\% and 100\%, for both combination regimes.
\begin{table}[h]
\centering
		\begin{tabular}{ cccc  } 
Method & \multicolumn{3}{c}{Rejection Proportion}\\ \hline
& $C_I$ \& $C_{II}$ & $C_I$ \& $C_{III}$ & $C_{II}$ \& $C_{III}$\\ \hline
$\varepsilon$-LR&100\% &92\% &83\% \\
BS & 100\% & 100\% & 100\%\\
GL & 100\% & 98\% & 100\%\\ \hline
		\end{tabular}
			\caption{Rejection proportions by three methods in different combination regimes
			for Political Book Data.}\label{tabrej:pbd:2}
\end{table}

\section{Discussions}

The work of \cite{DKMZ11} implies that
extension of the current work to multi-community setting is highly important but nontrivial. 
As far as we know, only a few works address such settings but mostly in community detection.
For instance, \cite{NN14} provides a sufficient
condition for impossible detection;
\cite{AS17} presents an information-theoretic phase transition for the SNR to yield successful detection
which strengthens the work of \cite{NN14}. 

The test statistic $\widehat{Y}_n^\varepsilon$ can be viewed as a type of partition function over Gibbs field.
Popular approximations of partition functions in statistical physics include MC approximations and mean-field approximations.
This paper only considers the former while leaves the latter as a future topic. 
Mean-field approximation has proven to work well in dense magnetism
such as Curie-Weiss model (see \cite{W1907}).
Recently, validity of mean-field approximation was established by \cite{BM17-2}
in the sparser settings which satisfy the so-called ``mean-field assumption,''
i.e., the trace of the squared adjacency matrix is $o_P(n)$.
This assumption fails in our setting in that the trace becomes $O_P(n)$.
Additional theory is needed to extend the results of \cite{BM17-2}.

\section{Appendix: Proofs}
In this section, we prove the main results of this paper.
Our asymptotic results are derived based on the following Proposition \ref{basic:prop} 
which was proved by Janson in \cite{J95}.
For arbitrary non-negative integer $x$, let
$[x]_j$ denote the descending factorial $x(x-1)\cdots(x-j+1)$.

\begin{Proposition}\label{basic:prop}
Let $\lambda_i>0$, $i=1,2,\ldots$, be constants and suppose that for each $n$ there are random variables
$X_{in}$, $i=1,2,\ldots$, and $Y_n$ (defined on the same probability space) such that $X_{in}$ is non-negative integer valued
and $E\{Y_n\}\neq0$ (at least for large $n$), and furthermore the following conditions are satisfied:
\begin{itemize}
\item[(A1)] $X_{in}\overset{d}{\to}Z_i$ as $n\to\infty$, jointly for all $i$, where $Z_i\sim\textrm{Poisson}(\lambda_i)$
are independent Poisson random variables;
\item[(A2)] $\E\{Y_n[X_{1n}]_{j_1}\cdots[X_{kn}]_{j_k}\}/\E\{Y_n\}\to\prod_{i=1}^k\mu_i^{j_i}$, as $n\to\infty$,
for some $\mu_i\ge0$ and every finite sequence $j_1,\ldots,j_k$ of non-negative integers;
\item[(A3)] $\sum_{i=1}^\infty\lambda_i\delta_i^2<\infty$, where $\delta_i=\mu_i/\lambda_i-1$;
\item[(A4)] $\E\{Y_n^2\}/(\E\{Y_n\})^2\to\exp\left(\sum_{i=1}^\infty\lambda_i\delta_i^2\right)$.
\end{itemize}
Then 
\[
\frac{Y_n}{\E\{Y_n\}}\overset{d}{\to}W\equiv\prod_{i=1}^\infty(1+\delta_i)^{Z_i}\exp(-\lambda_i\delta_i),\,\,\,\,\textrm{as $n\to\infty$}.
\]
\end{Proposition}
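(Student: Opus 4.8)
The plan is to prove the convergence by an explicit $L^2$-approximation of $W_n:=Y_n/\E\{Y_n\}$ by finite-dimensional surrogates built from $X_{1n},\dots,X_{kn}$ alone. Write $W^{(k)}=\prod_{i=1}^k(1+\delta_i)^{Z_i}\exp(-\lambda_i\delta_i)$ for the $k$-th truncation of the target $W$. First I would pin down the limit object. Since $Z_i\sim\mathrm{Poisson}(\lambda_i)$, a direct computation gives $\E\{(1+\delta_i)^{Z_i}\}=\exp(\lambda_i\delta_i)$, so each factor $(1+\delta_i)^{Z_i}\exp(-\lambda_i\delta_i)$ has mean $1$ and second moment $\exp(\lambda_i\delta_i^2)$. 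By independence, $\E\{(W^{(k)})^2\}=\exp(\sum_{i=1}^k\lambda_i\delta_i^2)$, which is bounded uniformly in $k$ by (A3); a short telescoping estimate then shows $\{W^{(k)}\}_k$ is Cauchy in $L^2$, so $W$ is a well-defined $L^2$ random variable with $\E\{W\}=1$, $\E\{W^2\}=\exp(\sum_{i=1}^\infty\lambda_i\delta_i^2)$, and $W^{(k)}\to W$ in $L^2$.

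Next, I would introduce the surrogate $\tilde W_n^{(k)}=\prod_{i=1}^k(1+\delta_i)^{X_{in}}\exp(-\lambda_i\delta_i)$. By (A1) the vector $(X_{1n},\dots,X_{kn})$ converges in law to $(Z_1,\dots,Z_k)$, so the continuous mapping theorem yields $\tilde W_n^{(k)}\overset{d}{\to}W^{(k)}$ as $n\to\infty$ for each fixed $k$. The core of the argument is a cross-moment computation. Using the finite expansion $(1+\delta_i)^{x}=\sum_{j\ge0}\frac{[x]_j}{j!}\delta_i^{\,j}$ valid for non-negative integers $x$, together with hypothesis (A2) in the form $\E\{Y_n\prod_i[X_{in}]_{j_i}\}/\E\{Y_n\}\to\prod_i\mu_i^{j_i}$, I would evaluate the product termwise and resum the resulting series to obtain
\[
\lim_{n\to\infty}\E\{W_n\tilde W_n^{(k)}\}=\prod_{i=1}^k\exp(-\lambda_i\delta_i)\exp(\mu_i\delta_i)=\exp\Big(\sum_{i=1}^k\lambda_i\delta_i^2\Big),
\]
where the last equality uses $\mu_i=\lambda_i(1+\delta_i)$. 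The same expansion applied with the factorial-moment form of (A1) (i.e. $Y_n$ replaced by $1$) gives $\lim_n\E\{(\tilde W_n^{(k)})^2\}=\exp(\sum_{i=1}^k\lambda_i\delta_i^2)=\E\{(W^{(k)})^2\}$. Combining these with (A4), which supplies $\lim_n\E\{W_n^2\}=\exp(\sum_{i=1}^\infty\lambda_i\delta_i^2)$, and expanding the square yields
\[
\lim_{n\to\infty}\E\{(W_n-\tilde W_n^{(k)})^2\}=\exp\Big(\sum_{i=1}^\infty\lambda_i\delta_i^2\Big)-\exp\Big(\sum_{i=1}^k\lambda_i\delta_i^2\Big)\xrightarrow[k\to\infty]{}0,
\]
the last limit being (A3).

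Finally I would assemble the three ingredients: for each $k$, $\tilde W_n^{(k)}\overset{d}{\to}W^{(k)}$ as $n\to\infty$; $W^{(k)}\to W$ in $L^2$, hence in law, as $k\to\infty$; and $\lim_k\limsup_n\E\{(W_n-\tilde W_n^{(k)})^2\}=0$, which via Markov's inequality makes $W_n$ and $\tilde W_n^{(k)}$ asymptotically close in probability as $k$ grows. The standard approximation theorem for weak convergence (Billingsley, Theorem 3.2) then gives $W_n\overset{d}{\to}W$, as claimed.

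The step I expect to be the main obstacle is the interchange of $\lim_n$ with the infinite resummation over $(j_1,\dots,j_k)$ in the cross-moment display (and the parallel computation of $\lim_n\E\{(\tilde W_n^{(k)})^2\}$): hypotheses (A1)--(A2) give only termwise convergence of the factorial moments, so I must dominate $\E\{W_n\prod_i[X_{in}]_{j_i}\}$ by a quantity summable against $\prod_i\delta_i^{\,j_i}/j_i!$ uniformly in $n$. I would obtain such control by Cauchy--Schwarz, using the $L^2$-boundedness of $W_n$ from (A4) together with uniform-in-$n$ bounds on the joint factorial moments of the $X_{in}$ (available from the method-of-moments computation underlying (A1), since the limiting Poissons have light tails). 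The degenerate case $\mu_i=0$, where $1+\delta_i=0$, should be isolated and handled directly, the corresponding factor reducing to $\exp(\lambda_i)\mathbf{1}\{Z_i=0\}$ in the limit.
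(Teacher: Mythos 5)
First, a point of reference: the paper does not prove Proposition \ref{basic:prop} at all --- it is quoted from Janson \cite{J95} (cf.\ Remark \ref{rem:J95}), so your proposal must be measured against Janson's original argument. Your architecture (truncated target $W^{(k)}$, surrogate $\tilde W_n^{(k)}$ built from the $X_{in}$, cross-moment computation via the descending-factorial expansion, then the two-parameter approximation theorem) is indeed the natural small-subgraph-conditioning skeleton, and the arithmetic in your displays is correct: $\E\{(1+\delta_i)^{Z_i}\}=e^{\lambda_i\delta_i}$, the formal cross moment resums to $\exp(\sum_{i\le k}\lambda_i\delta_i^2)$ using $\mu_i=\lambda_i(1+\delta_i)$, and the final assembly via Billingsley's theorem is sound \emph{given} the two moment limits.

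The genuine gap is exactly the step you flag, and your proposed repair does not close it. Hypothesis (A1) is bare joint convergence in distribution; it carries no moment information whatsoever. There is no ``method-of-moments computation underlying (A1)'' in the abstract setting --- that is an artifact of applications (in this paper, Lemma \ref{lemma:short:cycles} verifies factorial-moment convergence and invokes \cite{W99} to \emph{deduce} (A1), not the reverse). Convergence in law to a Poisson does not imply convergence, or even boundedness in $n$, of factorial moments: one can plant, with probability $p_n\to0$, a huge value $a_n$ of $X_{1n}$ on an event where $Y_n=0$, preserving (A1)--(A4) while making $(1+\delta_1)^{2a_n}p_n\to\infty$; then $\E\{(\tilde W_n^{(k)})^2\}\to\infty$ and your claimed identity $\lim_n\E\{(W_n-\tilde W_n^{(k)})^2\}=\exp(\sum_{i=1}^\infty\lambda_i\delta_i^2)-\exp(\sum_{i=1}^k\lambda_i\delta_i^2)$ fails outright, even though the theorem's conclusion still holds. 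The Cauchy--Schwarz fallback fares no better: it bounds the cross term by $(\E\{W_n^2\})^{1/2}(\E\{(\tilde W_n^{(k)})^2\})^{1/2}$ and merely shifts the burden to the same unavailable bound. In short, steps 3--4 of your plan are not derivable from (A1)--(A4); they are the crux of the theorem, not a technicality.

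A correct route never integrates the unbounded function $\prod_i(1+\delta_i)^{x_i}$ against the $n$-th laws. Using $Y_n\ge0$ (implicit in \cite{J95} and true for this paper's $Y_n^\varepsilon$, though omitted from the quoted statement), read (A2) as factorial-moment convergence under the tilted law $dQ_n\propto Y_n\,dP$; Bonferroni truncations of the inclusion--exclusion series, which require only finitely many moments for each two-sided bound, upgrade this to convergence of point probabilities, i.e.\ $(X_{1n},\ldots,X_{kn})$ under $Q_n$ tends to independent $\textrm{Poisson}(\mu_i)$. Since (A4) makes $\{W_n\}$ uniformly integrable, one may pass to a subsequence along which $(W_n,X_{1n},X_{2n},\ldots)$ converges jointly to $(W',Z_1,Z_2,\ldots)$ and, testing against bounded functions only, conclude $\E\{W'\mid Z_1,\ldots,Z_k\}=\prod_{i\le k}(1+\delta_i)^{Z_i}e^{-\lambda_i\delta_i}$; letting $k\to\infty$ and combining the orthogonal decomposition $\E\{W'^2\}=\E\{W^2\}+\E\{(W'-W)^2\}$ with Fatou's bound $\E\{W'^2\}\le\lim_n\E\{W_n^2\}=\E\{W^2\}$ forces $W'=W$ almost surely, whence $W_n\overset{d}{\to}W$. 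Alternatively, your plan goes through verbatim if you strengthen (A1) to joint factorial-moment convergence $\E\{\prod_i[X_{in}]_{j_i}\}\to\prod_i\lambda_i^{j_i}$ together with a uniform-in-$j$ domination --- which is harmless in this paper's application but proves a strictly weaker proposition than the one stated.
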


\begin{Remark}\label{rem:J95}
Janson (1995) \cite{J95} showed that the infinite product defining $W$ in Proposition \ref{basic:prop}
converges in $L^2$ a.s. with $\E W=1$ and $\E W^2=\exp\left(\sum_{i\ge1}\lambda_i\delta_i^2\right)$. 
\end{Remark}

Before proofs, we need the following lemma.
\begin{Lemma}\label{lemma:short:cycles}
For a random graph $G$ with vertex $1,\ldots,n$,
let $X_{mn}$ be the number of $m$-cycles of $G$, for $m\ge3$.
Let $\lambda_m=\frac{1}{2m}\left(\frac{a+b}{2}\right)^m$
and $\delta_m=\left(\frac{a-b}{a+b}\right)^m$.
\begin{enumerate}
\item\label{lemma:short:cycles:1} Under $G\sim\mathcal{G}(n,p_0)$, for any $k\ge3$,
$\{X_{mn}\}_{m=3}^k$ jointly converge 
to independent Poisson variables with mean $\lambda_m$.
\item\label{lemma:short:cycles:2} Under $G\sim\mathcal{G}\left(n,\frac{a}{n},\frac{b}{n}\right)$, for any $k\ge3$,
$\{X_{mn}\}_{m=3}^k$ jointly converge 
to independent Poisson variables with mean $\lambda_m(1+\delta_m)$.
\end{enumerate}
\end{Lemma}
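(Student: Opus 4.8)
The plan is to prove both parts by the method of factorial moments, the classical route to Poisson limits for subgraph counts in sparse random graphs: a vector of nonnegative integer valued counts converges jointly to independent Poisson variables with means $\mu_m$ precisely when all mixed factorial moments converge to $\prod_m\mu_m^{j_m}$. For Part \ref{lemma:short:cycles:1}, I would first record that the number of distinct $m$-cycles on $n$ labeled vertices is $[n]_m/(2m)$, obtained by choosing an ordered $m$-tuple of distinct vertices and dividing by the $2m$ rotations and reflections. Each such cycle is present with probability $p_0^m$ under $\mathcal{G}(n,p_0)$, so $\E\{X_{mn}\}=\frac{[n]_m}{2m}p_0^m\to\frac{1}{2m}\left(\frac{a+b}{2}\right)^m=\lambda_m$.

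For Part \ref{lemma:short:cycles:2}, the key extra ingredient is to average the cycle indicator over the community labels. Identifying the labels with $\pm1$ and writing $\rho=\frac{a-b}{a+b}$, one has $p_{uv}(\sigma)=\frac{a+b}{2n}\left(1+\rho\,\sigma_u\sigma_v\right)$. For a fixed $m$-cycle on $v_1,\ldots,v_m$, the conditional probability that all its edges appear is $\left(\frac{a+b}{2n}\right)^m\prod_{i=1}^m\left(1+\rho\,\sigma_{v_i}\sigma_{v_{i+1}}\right)$ with indices taken mod $m$. Expanding the product and taking expectation over the i.i.d.\ uniform labels, a term survives only if every vertex appears to an even power; within a single cycle the only edge-subsets meeting this constraint are the empty set and the full cycle, so $\E_\sigma\prod_{i=1}^m(1+\rho\,\sigma_{v_i}\sigma_{v_{i+1}})=1+\rho^m$. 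This yields $\E\{X_{mn}\}\to\lambda_m(1+\rho^m)=\lambda_m(1+\delta_m)$, matching the stated mean.

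It remains to verify the mixed factorial moments $\E\{\prod_{m=3}^k[X_{mn}]_{j_m}\}$. I would expand each descending factorial as a sum over ordered tuples of distinct $m$-cycles and partition the configurations by whether the chosen cycles are vertex-disjoint. The vertex-disjoint configurations dominate: they number $\sim\prod_m\left(\frac{[n]_m}{2m}\right)^{j_m}$, and because disjoint cycles use disjoint edges (and, in the SBM, vertices carrying independent labels), their joint indicator expectation factorizes into the product of single-cycle means, giving $\prod_m(\lambda_m(1+\delta_m))^{j_m}$ in the limit (respectively $\prod_m\lambda_m^{j_m}$ in the Erd\"{o}s-R\'{e}nyi case). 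The main obstacle is controlling the overlapping configurations: when two cycles share $s\ge1$ vertices the number of free vertices drops by $s$, costing a factor $n^{-s}$ while the edge-probability product stays bounded, so such terms are $O(n^{-1})$. Enumerating the intersection patterns of bounded-length cycles and checking that each nondisjoint type is negligible is the most delicate bookkeeping, but it is entirely parallel to the classical computation for $\mathcal{G}(n,c/n)$ and introduces no idea beyond the parity argument above.
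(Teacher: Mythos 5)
Your proposal is correct and takes essentially the same route as the paper: convergence of all joint factorial moments (the paper invokes Wormald's Lemma 2.8 for the Poisson conclusion), with the sum over tuples of distinct cycles split into vertex-disjoint tuples --- whose label average factorizes and yields $\prod_{m}\left(\lambda_m(1+\delta_m)\right)^{j_m}$ --- and overlapping tuples, which are negligible because the union graph $H$ satisfies $|\mathcal{E}(H)|>|\mathcal{V}(H)|$, so each isomorphism class contributes $O(n^{|\mathcal{V}(H)|-|\mathcal{E}(H)|})\to0$. Your parity/cycle-space computation $\E_\sigma\prod_{i=1}^m\left(1+\rho\,\sigma_{v_i}\sigma_{v_{i+1}}\right)=1+\rho^m$ is an equivalent rewriting of the paper's evaluation $2^{-m}\left[(a+b)^m+(a-b)^m\right]$ via the distribution of the number of discordant edges, so the two arguments coincide in substance.
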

\begin{proof}[Proof of Lemma \ref{lemma:short:cycles}]
The first part was well known (see \cite{MNS15}). We only prove the second part.
 
Denote $\E_1$ the expectation based on hypothesis $H_1$. Let $H$ be a graph on a subset of $[n]$ with vertex set $\mathcal{V}(H)$
and edge set $\mathcal{E}(H)$. Use $1_H$ to denote the 0-1 random variable
that is 1 when $\mathcal{E}(H)\subseteq\mathcal{E}(G)$ and $P(H)$ for the probability that
$1_H=1$. 
For $3\le m\le k$,
let $H_{m1},\ldots,H_{mj_m}$ be a $j_m$-tuple of distinct $m$-cycles. 
Then
\[
\prod_{m=3}^k[X_{mn}]_{j_m}=\sum_{(H_{mi})}\prod_{m=3}^k\prod_{i=1}^{j_m}1_{H_{mi}},
\]
where the sum ranges over all tuples of distinct cycles $\{H_{mi}: 3\le m\le k, 1\le i\le j_m\}$;
each $H_{mi}$ is an $m$-cycle and all cycles are distinct.
Let $A$ be the set of all such tuples of cycles for which the cycles are vertex-disjoint
and let $\bar{A}$ be its complement, i.e., any tuple of $\bar{A}$ contains two cycles with at least one common
vertex. Then 
\begin{eqnarray}\label{proof:lemma:short:cycles:eqn1}
\E_1\prod_{m=3}^k[X_{mn}]_{j_m}&=&\sum_{(H_{mi})}\E_1\prod_{m=3}^k\prod_{i=1}^{j_m}1_{H_{mi}}\nonumber\\
&=&\sum_{(H_{mi})\in A}\E_1\prod_{m=3}^k\prod_{i=1}^{j_m}1_{H_{mi}}+\sum_{(H_{mi})\in \bar{A}}\E_1\prod_{m=3}^k\prod_{i=1}^{j_m}1_{H_{mi}}
\end{eqnarray}
Since the number of $m$-cycles on a graph of $s$ vertexes is 
$\frac{s!}{(s-m)!2m}$ (two directions and $m$ distinct starting vertexes
give us $2m$ the same $m$-cycles), one gets that
$|A|=\frac{n!}{(n-M)!}\prod_{m=3}^k\left(\frac{1}{2m}\right)^{j_m}$
with $M=\sum_{m=3}^k mj_m$ (see also \cite[Chapter 4]{B01} for more complete derivation).
Meanwhile, take $\tau$ uniformly from $\{\pm\}^n$
and define $\tau^{mi}$ be the restriction of $\tau$ on the vertexes of $H_{mi}$,
and define $N_{mi}=\sum_{(u,v)\in\mathcal{E}(H_{mi})}1(\tau^{mi}_u\neq\tau^{mi}_v)$.
The $\tau^{mi}$'s are independent thanks to the vertex disjointness of $H_{mi}$'s.
Following \cite[Lemma 3.3]{MNS15} one can show that
$P(N_{mi}=l)=2^{-m+1}{m\choose l}$ for even $l\in[0,m]$ and zero for odd $l$.
Then one has
\begin{eqnarray*}
\E_1\prod_{m=3}^k\prod_{i=1}^{j_m}1_{H_{mi}}&=&\E_\tau\E_1\{\prod_{m=3}^k\prod_{i=1}^{j_m}1_{H_{mi}}|\tau\}\\
&=&\E_\tau\prod_{m=3}^k\prod_{i=1}^{j_m}\prod_{(u,v)\in\mathcal{E}(H_{mi})}\left(\frac{a}{n}\right)^{1(\tau_u=\tau_v)}\left(\frac{b}{n}\right)^{1(\tau_u\neq\tau_v)}\\
&=&\E_\tau\prod_{m=3}^k\prod_{i=1}^{j_m}\prod_{(u,v)\in\mathcal{E}(H_{mi})}\left(\frac{a}{n}\right)^{1(\tau^{mi}_u=\tau^{mi}_v)}\left(\frac{b}{n}\right)^{1(\tau^{mi}_u\neq\tau^{mi}_v)}\\
&=&\E_\tau\prod_{m=3}^k\prod_{i=1}^{j_m}\left(\frac{a}{n}\right)^{m-N_{mi}}\left(\frac{b}{n}\right)^{N_{mi}}.
\end{eqnarray*}
Since $\tau$ is broken into disjoint and independent $(\tau^{mi})_{3\le m\le k, 1\le i\le j_m}$, the above is equal to
\begin{eqnarray*}
&&\prod_{m=3}^k\prod_{i=1}^{j_m}\E_{\tau^{mi}}\left(\frac{a}{n}\right)^{m-N_{mi}}\left(\frac{b}{n}\right)^{N_{mi}}\\
&=&\prod_{m=3}^k\prod_{i=1}^{j_m}2^{-m}\left[\left(\frac{a+b}{n}\right)^m+\left(\frac{a-b}{n}\right)^m\right]
=n^{-M}\prod_{m=3}^k\left[\left(\frac{a+b}{2}\right)^m\left(1+\delta_m\right)\right].
\end{eqnarray*}
Then the first part of (\ref{proof:lemma:short:cycles:eqn1})
becomes 
\begin{eqnarray*}
&&|A|\times n^{-M}\prod_{m=3}^k\left[\left(\frac{a+b}{2}\right)^m\left(1+\delta_m\right)\right]
\\
&=&\frac{n!}{(n-M)!n^M}\prod_{m=3}^k
(\lambda_m(1+\delta_m))^{j_m}\overset{n\to\infty}{\rightarrow}\prod_{m=3}^k
(\lambda_m(1+\delta_m))^{j_m}.
\end{eqnarray*}

On the other hand, for any $(H_{mi})\in \bar{A}$, 
$H:=\cup H_{mi}$ has at most $M-1$ vertexes and $M$ edges,
and $|\mathcal{E}(H)|>|\mathcal{V}(H)|$.
Since 
\[
\E_1\{\prod_{m=3}^k\prod_{i=1}^{j_m}1_{H_{mi}}|\tau\}=
\prod_{(u,v)\in\mathcal{E}(H)}\left(\frac{a}{n}\right)^{1(\tau_u=\tau_v)}\left(\frac{b}{n}\right)^{1(\tau_u\neq\tau_v)}\le\left(\frac{\max\{a,b\}}{n}\right)^{|\mathcal{E}(H)|},
\]
and there are ${n\choose |\mathcal{V}(H)|}|\mathcal{V}(H)|!$ graphs isomorphic to $H$,
then 
\[
\sum_{\textrm{$H'$ is isomorphic to $H$}}\E_1\{1_{H'}|\tau\}\le
\left(\frac{\max\{a,b\}}{n}\right)^{|\mathcal{E}(H)|}{n\choose |\mathcal{V}(H)|}|\mathcal{V}(H)|!\to0.
\]
Since there are a bounded number of isomorphism classes, the second part of (\ref{proof:lemma:short:cycles:eqn1})
tends to zero as $n\to\infty$. Hence, $\E_1\prod_{m=3}^k[X_{mn}]_{j_m}\to\prod_{m=3}^k
(\lambda_m(1+\delta_m))^{j_m}.$ for any $k\ge3$ and integers $j_3,\ldots,j_k$.
It follows by \cite[Lemma 2.8]{W99} that the desirable result holds.
\end{proof}

\subsection{Proofs in Section \ref{sec:ppl}}
\begin{proof}[Proof of Theorem \ref{testing:consistency:case2}]
Let $\E_0$ denote the expectations under hypotheses $H_0$.
We will use Proposition \ref{basic:prop} to prove the result,
for which we will check the Conditions A1 to A4 therein. 
Some of the details are rooted in \cite{MNS15}. To ease reading, we provide the detailed proofs.
Obviously, $\E_0Y_n^\varepsilon=1$.

Let $X_{mn}$ be the number of $m$-cycles of $G\sim\mathcal{G}(n,p_0)$, for $m\ge3$.
Following Lemma \ref{lemma:short:cycles} Part \ref{lemma:short:cycles:1},
for any $k\ge3$, $\{X_{mn}\}_{m=3}^k$ jointly converge 
to independent Poisson variables with mean $\lambda_m=\frac{1}{2m}\left(\frac{a+b}{2}\right)^m$.
This verifies Condition A1.

To check Condition A2, let $H=(H_{mi})_{3\le m\le k, 1\le i\le j_m}$ be a tuple of short cycles of disjoint vertexes;
each $H_{mi}$ is an $m$-cycle, $M=\sum_{m=3}^k mj_m$, and the vertexes of $H_{mi}$'s are disjoint.
Let $\sigma^{1mi},\sigma^{2mi}$ be the restrictions of
$\sigma$ over $\mathcal{V}(H_{mi})$ and $[n]\backslash\mathcal{V}(H_{mi})$,
and $\sigma^1,\sigma^2$ be the restrictions of $\sigma$ over $\mathcal{V}(H)$ and $[n]\backslash\mathcal{V}(H)$.
By direct examinations we have
\begin{eqnarray}
\E_0 Y_n^\varepsilon 1_H
&=&2^{-n}\sum_{\sigma\in\{\pm\}^n}\E_0 1_H
\prod_{u<v}\left(\frac{p_{uv}^\varepsilon(\sigma)}{p_0}\right)^{A_{uv}}\left(\frac{q_{uv}^\varepsilon(\sigma)}{q_0}\right)^{1-A_{uv}}
\nonumber\\
&=&2^{-n}\sum_{\sigma\in\{\pm\}^n}\E_0 1_H\prod_{(u,v)\in\mathcal{E}(H)}\left(\frac{p_{uv}^\varepsilon(\sigma)}{p_0}\right)^{A_{uv}}\left(\frac{q_{uv}^\varepsilon(\sigma)}{q_0}\right)^{1-A_{uv}}\nonumber\\
&&\times\prod_{(u,v)\in\overline{\mathcal{E}(H)}}\left(\frac{p_{uv}^\varepsilon(\sigma)}{p_0}\right)^{A_{uv}}
\left(\frac{q_{uv}^\varepsilon(\sigma)}{q_0}\right)^{1-A_{uv}}\nonumber\\
&=&2^{-n}\sum_{\sigma\in\{\pm\}^n}\E_0 1_H\prod_{(u,v)\in\mathcal{E}(H)}\left(\frac{p_{uv}^\varepsilon(\sigma)}{p_0}\right)^{A_{uv}}\left(\frac{q_{uv}^\varepsilon(\sigma)}{q_0}\right)^{1-A_{uv}}.\label{testing:consistency:eqn1}
\end{eqnarray}
Since $\sigma$ is broken into $\sigma^1$ and $\sigma^2$ which are supported on $\mathcal{V}$ and its complement respectively,
and $p_{uv}^\varepsilon(\sigma)$, $q_{uv}^\varepsilon(\sigma)$ only depend on $\sigma^1$ when $(u,v)\in\mathcal{E}(H)$,
(\ref{testing:consistency:eqn1}) is equal to the following
\begin{eqnarray}
&&2^{-n}\sum_{\sigma^1\in\{\pm\}^{\mathcal{V}(H)}}\sum_{\sigma^2\in\{\pm\}^{[n]\backslash\mathcal{V}(H)}}
\E_0 1_H\prod_{(u,v)\in\mathcal{E}(H)}\left(\frac{p_{uv}^\varepsilon(\sigma^1)}{p_0}\right)^{A_{uv}}
\left(\frac{q_{uv}^\varepsilon(\sigma^1)}{q_0}\right)^{1-A_{uv}}\nonumber\\
&=&2^{-M}\sum_{\sigma^1\in\{\pm\}^{\mathcal{V}(H)}}\E_0 1_H\prod_{(u,v)\in\mathcal{E}(H)}
\left(\frac{p_{uv}^\varepsilon(\sigma^1)}{p_0}\right)^{A_{uv}}
\left(\frac{q_{uv}^\varepsilon(\sigma^1)}{q_0}\right)^{1-A_{uv}}.\label{testing:consistency:eqn1-1}
\end{eqnarray}
Since $1_H=1$ implies $\mathcal{E}(H)\subset\mathcal{E}(G)$, any $(u,v)\in\mathcal{V}(H)$
leads to $A_{uv}=1$. Meanwhile, $\E_0 1_H=p_0^M$, hence (\ref{testing:consistency:eqn1-1}) equals
\begin{eqnarray*}
&&2^{-M}p_0^M\sum_{\sigma^1\in\{\pm\}^{\mathcal{V}(H)}}\prod_{(u,v)\in\mathcal{E}(H)}\left(\frac{p_{uv}^\varepsilon(\sigma^1)}{p_0}\right)\nonumber\\
&=&\E_{\sigma^1}\prod_{(u,v)\in\mathcal{E}(H)}p_{uv}^\varepsilon(\sigma^1)=\prod_{m=3}^k\prod_{i=1}^{j_m}\E_{\sigma^{1mi}}\prod_{(u,v)\in\mathcal{E}(H_{mi})}
p_{uv}^\varepsilon(\sigma^{1mi})
=n^{-M}\prod_{m=3}^k\prod_{i=1}^{j_m}\E_{\sigma^{1mi}}a_\varepsilon^{m-N_{mi}}b_\varepsilon^{N_{mi}},\nonumber
\end{eqnarray*} 
where $N_{mi}=\sum_{(u,v)\in\mathcal{E}(H_{mi})}1(\sigma_u^{1mi}\neq\sigma^{1mi}_v)$,
the number of edges over $H_{mi}$ with distinct end points.
Following the proof of \cite[Lemma 3.3]{MNS15},
\begin{eqnarray*}
\E_{\sigma^{1mi}}a_\varepsilon^{m-N_{mi}}b_\varepsilon^{N_{mi}}=2^{-m}\left[(a_\varepsilon+b_\varepsilon)^m+(a_\varepsilon-b_\varepsilon)^m\right]=
\left(\frac{a+b}{2}\right)^m\left(1+\left(\frac{a_\varepsilon-b_\varepsilon}{a+b}\right)^m\right).
\end{eqnarray*}
Hence,
\[
\E_0 Y_n^\varepsilon 1_H=n^{-M}\prod_{m=3}^k\left(
\left(\frac{a+b}{2}\right)^m\left(1+\left(\frac{a_\varepsilon-b_\varepsilon}{a+b}\right)^m\right)
\right)^{j_m}.
\]
Let $A$ be the set of tuples $(H_{mi})_{3\le m\le k, 1\le i\le j_m}$ for which the cycles are vertex-disjoint
and let $\bar{A}$ be its complement.
Using $|A|=\frac{n!}{(n-M)!}\prod_{m=3}^k\left(\frac{1}{2m}\right)^{j_m}$ (see proof of Lemma \ref{lemma:short:cycles}) we get that
\begin{eqnarray*}
\sum_{H\in A}\E_0 Y_n^\varepsilon 1_H&=&|A|n^{-M}\prod_{m=3}^k\left(
\left(\frac{a+b}{2}\right)^m\left(1+\left(\frac{a_\varepsilon-b_\varepsilon}{a+b}\right)^m\right)
\right)^{j_m}\\
&\overset{n\to\infty}{\rightarrow}&\prod_{m=3}^k
\left(\frac{1}{2m}\left(\frac{a+b}{2}\right)^m\left(1+\left(\frac{a_\varepsilon-b_\varepsilon}{a+b}\right)^m\right)\right)^{j_m}\\
&=&\prod_{m=3}^k\left(\lambda_m(1+\delta_m^\varepsilon)\right)^{j_m},
\end{eqnarray*}
where $\delta_m^\varepsilon=\left(\frac{a_\varepsilon-b_\varepsilon}{a+b}\right)^m$.
Similar to (\ref{testing:consistency:eqn1}) one gets that, for $H\in\bar{A}$,
\begin{eqnarray*}
\E_0Y_n^\varepsilon 1_H&=&2^{-n}\sum_{\sigma\in\{\pm\}^n}\E_0 1_H\prod_{(u,v)\in\mathcal{E}(H)}\left(\frac{p_{uv}^\varepsilon(\sigma)}{p_0}\right)^{A_{uv}}\left(\frac{q_{uv}^\varepsilon(\sigma)}{q_0}\right)^{1-A_{uv}}\\
&=&2^{-n}\sum_{\sigma\in\{\pm\}^n}\prod_{(u,v)\in\mathcal{E}(H)}\frac{p_{uv}^\varepsilon(\sigma)}{p_0}\times P_0(H)\\
&\le&a_\varepsilon^{|\mathcal{E}(H)|}n^{-|\mathcal{E}(H)|},
\end{eqnarray*}
where the last inequality follows from $p_{uv}^\varepsilon(\sigma)\le a_\varepsilon/n$ and $P_0(H)=p_0^{|\mathcal{E}(H)|}$.
So 
\[
\sum_{\textrm{$H'$ is isomorphic to $H$}}\E_0Y_n^\varepsilon 1_H
\le a_\varepsilon^{|\mathcal{E}(H)|}n^{-|\mathcal{E}(H)|}{n\choose |\mathcal{V}(H)|}|\mathcal{V}(H)|!\to0,
\]
which leads to $\sum_{H\in\bar{A}}\E_0Y_n^\varepsilon 1_H\to0$ using a similar argument as the proof of Lemma \ref{lemma:short:cycles} Part \ref{lemma:short:cycles:2}.
So as $n\to\infty$,
\[
\E_0 Y_n^\varepsilon [X_{3n}]_{j_3}\cdots[X_{kn}]_{j_k}=\sum_{H\in A}\E_0Y_n^\varepsilon 1_H+\sum_{H\in\bar{A}}\E_0Y_n^\varepsilon 1_H\to
\prod_{m=3}^k\left(\lambda_m(1+\delta_m^\varepsilon)\right)^{j_m},
\]
which verifies Condition A2.

Condition A3 holds due to the following trivial fact:
\[
\sum_{m\ge3}\lambda_m(\delta_m^\varepsilon)^2=\sum_{m\ge3}\frac{1}{2m}\left(\frac{(a_\varepsilon-b_\varepsilon)^2}{2(a+b)}\right)^m=
\sum_{m\ge3}\frac{1}{2m}\left(\frac{(a-b-2\varepsilon)^2}{2(a+b)}\right)^m
<\infty.
\]

In the end let us check Conditions A4. 
Let $N_{uv}^{\sigma\tau}=1(\sigma_u=\sigma_v)+1(\tau_u=\tau_v)$.
Note that
\begin{eqnarray*}
\E_0 (Y_n^\varepsilon)^2&=&
4^{-n}\sum_{\sigma,\tau\in\{\pm\}^n}\prod_{u<v}\E_0
\left(\frac{p_{uv}^\varepsilon(\sigma)p_{uv}^\varepsilon(\tau)}{p_0^2}\right)^{A_{uv}}
\left(\frac{q_{uv}^\varepsilon(\sigma)q_{uv}^\varepsilon(\tau)}{q_0^2}\right)^{1-A_{uv}}\\
&=&4^{-n}\sum_{\sigma,\tau\in\{\pm\}^n}\prod_{u<v}\left(
\frac{p_{uv}^\varepsilon(\sigma)p_{uv}^\varepsilon(\tau)}{p_0}+
\frac{q_{uv}^\varepsilon(\sigma)q_{uv}^\varepsilon(\tau)}{q_0}
\right)
\\
&=&4^{-n}\sum_{\sigma,\tau\in\{\pm\}^n}\prod_{u<v}
\left(\frac{1}{p_0}\left(\frac{a_\varepsilon}{n}\right)^{N_{uv}^{\sigma\tau}}
\left(\frac{b_\varepsilon}{n}\right)^{2-N_{uv}^{\sigma\tau}}+\frac{1}{q_0}
\left(1-\frac{a_\varepsilon}{n}\right)^{N_{uv}^{\sigma\tau}}
\left(1-\frac{b_\varepsilon}{n}\right)^{2-N_{uv}^{\sigma\tau}}\right)\\
&=&4^{-n}\sum_{\sigma,\tau\in\{\pm\}^n}
\prod_{N_{uv}^{\sigma\tau}=0}\left(\frac{1}{p_0}\left(\frac{b_\varepsilon}{n}\right)^2
+\frac{1}{q_0}\left(1-\frac{b_\varepsilon}{n}\right)^2\right)\\
&&\times\prod_{N_{uv}^{\sigma\tau}=2}\left(\frac{1}{p_0}\left(\frac{a_\varepsilon}{n}\right)^2
+\frac{1}{q_0}\left(1-\frac{a_\varepsilon}{n}\right)^2\right)\\
&&\times\prod_{N_{uv}^{\sigma\tau}=1}\left(\frac{1}{p_0}\left(\frac{a_\varepsilon}{n}\right)
\left(\frac{b_\varepsilon}{n}\right)+\frac{1}{q_0}
\left(1-\frac{a_\varepsilon}{n}\right)
\left(1-\frac{b_\varepsilon}{n}\right)\right).
\end{eqnarray*}
It is easy to check that
\begin{eqnarray}\label{expression:gamman:varepsilon}
\frac{1}{p_0}\left(\frac{b_\varepsilon}{n}\right)^2
+\frac{1}{q_0}\left(1-\frac{b_\varepsilon}{n}\right)^2&=&1+\gamma_n^\varepsilon+O(n^{-3})\nonumber\\
\frac{1}{p_0}\left(\frac{a_\varepsilon}{n}\right)^2
+\frac{1}{q_0}\left(1-\frac{a_\varepsilon}{n}\right)^2&=&1+\gamma_n^\varepsilon+O(n^{-3})\nonumber\\
\frac{1}{p_0}\left(\frac{a_\varepsilon}{n}\right)
\left(\frac{b_\varepsilon}{n}\right)+\frac{1}{q_0}
\left(1-\frac{a_\varepsilon}{n}\right)
\left(1-\frac{b_\varepsilon}{n}\right)&=&1-\gamma_n^\varepsilon+O(n^{-3}),
\end{eqnarray}
where
$\gamma_n^\varepsilon=\frac{\kappa_\varepsilon}{n}+\frac{(a_\varepsilon-b_\varepsilon)^2}{4n^2}$,
$\kappa_\varepsilon=\frac{(a_\varepsilon-b_\varepsilon)^2}{2(a+b)}$.
Let
\[
s_+=\#\{(u,v): u<v, \sigma_u\sigma_v\tau_u\tau_v=+\},
s_-=\#\{(u,v): u<v, \sigma_u\sigma_v\tau_u\tau_v=-\}.
\]
Let $\rho=\frac{1}{n}\sum_{u=1}^n\sigma_u\tau_u$.
Following \cite{MNS15},
we have
$s_+=\frac{n^2}{4}(1+\rho^2)-\frac{n}{2}$
and $s_-=\frac{n^2}{4}(1-\rho^2)$.
Then using the approximation technique in \cite{MNS15}, i.e., Lemmas 5.3, 5.4, 5.5 therein, it holds that
\begin{eqnarray*}
\E_0(Y_n^\varepsilon)^2&=&4^{-n}\sum_{\sigma,\tau}(1+\gamma_n^\varepsilon+O(n^{-3}))^{s_+}(1-\gamma_n^\varepsilon+O(n^{-3}))^{s_-}\\
&=&(1+o(1))4^{-n}\sum_{\sigma,\tau}(1+\gamma_n^\varepsilon)^{\frac{n^2}{4}(1+\rho^2)-\frac{n}{2}}(1-\gamma_n^\varepsilon)^{\frac{n^2}{4}(1-\gamma^2)}\\
&=&(1+o(1))\exp\left(-\kappa_\varepsilon^2/4-\kappa_\varepsilon/2\right)4^{-n}\sum_{\sigma,\tau}\exp\left(
\frac{\rho^2}{2}\left(n\kappa_\varepsilon+\frac{(a_\varepsilon-b_\varepsilon)^2}{4}\right)
\right)\\
&=&(1+o(1))\exp\left(-\kappa_\varepsilon^2/4-\kappa_\varepsilon/2\right)\E_{\sigma\tau}
\exp\left(\frac{\rho^2}{2}\left(n\kappa_\varepsilon+\frac{(a_\varepsilon-b_\varepsilon)^2}{4}\right)\right)\\
&\overset{n\to\infty}{\rightarrow}&\exp\left(-\kappa_\varepsilon^2/4-\kappa_\varepsilon/2\right)(1-\kappa_\varepsilon)^{-1/2}=
\exp\left(\sum_{m=3}^\infty\lambda_m(\delta_m^\varepsilon)^2\right).
\end{eqnarray*}
This verifies Condition A4. The result of Theorem \ref{testing:consistency:case2} follows from Proposition \ref{basic:prop}.
\end{proof}

\begin{proof}[Proof of Theorem \ref{power:case2}]

Let $X_{mn}$ be the number of $m$-cycles of $G$, for $m\ge3$.
Let $\lambda_m=\frac{1}{2m}\left(\frac{a+b}{2}\right)^m$
and $\delta_m=\left(\frac{a-b}{a+b}\right)^m$.
It follows by Lemma \ref{lemma:short:cycles} Part \ref{lemma:short:cycles:2} that,
under $H_1$, $\{X_{mn}\}_{m=3}^k$ jointly converge 
to independent Poisson variables with mean $\lambda_m(1+\delta_m)$, verifying Condition A1 of Proposition \ref{basic:prop}.
This leaves us to check Conditions A2 to A4.
Let $M=\sum_{m=3}^k mj_m$ for integers $j_3,\ldots,j_k$ and $k\ge3$.

\textbf{Check Condition A2}. 
Denote $\E_1$ the expectation based on hypothesis $H_1$.
Let $X_{mn}$ be the number of $m$-cycles of $G$, for $m\ge3$
and $[x]_j$ be the descending factorial. Define $M=\sum_{m=3}^k mj_m$
for $k\ge 3$ and integers $j_3,\ldots,j_k$.
To check A2, notice that
\begin{eqnarray}
\E_1Y_n^\varepsilon [X_{3n}]_{j_3}\cdots[X_{kn}]_{j_k}&=&\sum_{(H_{mi})_{3\le m\le k, 1\le i\le j_m}}
\E_1Y_n^\varepsilon 1_{\cup H_{mi}}\label{thm:power:case2:eqn0}\\
&=&\sum_{(H_{mi})\in A}
\E_1Y_n^\varepsilon 1_{\cup H_{mi}}+\sum_{(H_{mi})\in\bar{A}}
\E_1Y_n^\varepsilon 1_{\cup H_{mi}},\label{thm:power:case2:eqn-1}
\end{eqnarray}
where the sum in (\ref{thm:power:case2:eqn0}) ranges over $\mathcal{H}$, the collection of
all $M$-tuples of cycles
$(H_{mi})_{3\le m\le k, 1\le i\le j_m}$ with each $H_{mi}$ an $m$-cycle,
and $A$ in the sum of (\ref{thm:power:case2:eqn-1}) is the set of such tuples for which the cycles are vertex-disjoint
and let $\bar{A}=\mathcal{H}\backslash A$, i.e., $\bar{A}$ contains $M$-tuples of cycles $(H_{mi})_{3\le m\le k, 1\le i\le j_m}$
with at least one common vertex among those cycles.
Let us look at the first part of (\ref{thm:power:case2:eqn-1}).
Take $\tau$ uniformly distributed from $\{\pm\}^n$.
For any $H=(H_{mi})\in\mathcal{H}$, define $\tau^1,\tau^2$ to be the
restrictions of $\tau$ over $\mathcal{V}(H)$ and $[n]\backslash\mathcal{V}(H)$ respectively.

One can check that, for any $H\in\mathcal{H}$,
\begin{eqnarray}
&&\E_1\{Y_n^\varepsilon 1_H|\tau\}\nonumber\\
&=&\E_1\left\{1_H 2^{-n}\sum_{\sigma}\prod_{u<v}\left(\frac{p_{uv}^\varepsilon(\sigma)}{p_0}\right)^{A_{uv}}
\left(\frac{q_{uv}^\varepsilon(\sigma)}{q_0}\right)^{1-A_{uv}}\bigg|\tau\right\}\nonumber\\
&=&2^{-n}\sum_{\sigma}\E_1\left\{1_H\prod_{u<v}\left(\frac{p_{uv}^\varepsilon(\sigma)}{p_0}\right)^{A_{uv}}\left(\frac{q_{uv}^\varepsilon(\sigma)}{q_0}\right)^{1-A_{uv}}\bigg|\tau\right\}\nonumber\\
&=&2^{-n}\sum_{\sigma}\E_1\left\{1_H\prod_{(u,v)\in\mathcal{E}(H)}\left(\frac{p_{uv}^\varepsilon(\sigma)}{p_0}\right)^{A_{uv}}\left(\frac{q_{uv}^\varepsilon(\sigma)}{q_0}\right)^{1-A_{uv}}
\bigg|\tau\right\}\nonumber\\
&&\times\prod_{(u,v)\in\overline{\mathcal{E}(H)}}\E_1\left\{\left(\frac{p_{uv}^\varepsilon(\sigma)}{p_0}\right)^{A_{uv}}\left(\frac{q_{uv}^\varepsilon(\sigma)}{q_0}\right)^{1-A_{uv}}\bigg|\tau\right\}\nonumber\\
&=&2^{-n}\sum_{\sigma}\E_1\{1_H|\tau\}\prod_{(u,v)\in\mathcal{E}(H)}\left(\frac{p_{uv}^\varepsilon(\sigma)}{p_0}\right)\prod_{(u,v)\in\overline{\mathcal{E}(H)}}
\left(\frac{p_{uv}^\varepsilon(\sigma)p_{uv}(\tau)}{p_0}+\frac{q_{uv}^\varepsilon(\sigma)q_{uv}(\tau)}{q_0}\right)\nonumber\\
&=&2^{-n}\sum_{\sigma}\prod_{(u,v)\in\mathcal{E}(H)}\left(\frac{p_{uv}^\varepsilon(\sigma^1)p_{uv}(\tau^1)}{p_0}\right)
\prod_{(u,v)\in\overline{\mathcal{E}(H)}}
\left(\frac{p_{uv}^\varepsilon(\sigma)p_{uv}(\tau)}{p_0}+\frac{q_{uv}^\varepsilon(\sigma)q_{uv}(\tau)}{q_0}\right),\label{thm:power:case1:eqn24}
\end{eqnarray}
which leads to that
\begin{eqnarray}
&&\E_1 Y_n^\varepsilon 1_H\nonumber\\
&=&2^{-2n}\sum_{\tau}\sum_{\sigma}\prod_{(u,v)\in\mathcal{E}(H)}\left(\frac{p_{uv}^\varepsilon(\sigma^1)p_{uv}(\tau^1)}{p_0}\right)
\prod_{(u,v)\in\overline{\mathcal{E}(H)}}
\left(\frac{p_{uv}^\varepsilon(\sigma)p_{uv}(\tau)}{p_0}+\frac{q_{uv}^\varepsilon(\sigma)q_{uv}(\tau)}{q_0}\right)\nonumber\\
&=&\E_{\sigma\tau}\prod_{(u,v)\in\mathcal{E}(H)}\left(\frac{p_{uv}^\varepsilon(\sigma^1)p_{uv}(\tau^1)}{p_0}\right)
\prod_{(u,v)\in\overline{\mathcal{E}(H)}}
\left(\frac{p_{uv}^\varepsilon(\sigma)p_{uv}(\tau)}{p_0}+\frac{q_{uv}^\varepsilon(\sigma)q_{uv}(\tau)}{q_0}\right)\nonumber\\
&\equiv&\E_{\sigma\tau}X_H^\varepsilon(\sigma^1,\tau^1)W_H^\varepsilon(\sigma,\tau)Z_H^\varepsilon(\sigma^2,\tau^2),\label{thm:power:case2:eqn2}
\end{eqnarray}
where
\[
X_H^\varepsilon(\sigma^1,\tau^1)=\prod_{(u,v)\in\mathcal{E}(H)}\left(\frac{1}{p_0}p_{uv}^\varepsilon(\sigma^1)p_{uv}(\tau^1)\right),
\]
\[
W_H^\varepsilon(\sigma,\tau)=\prod_{(u,v)\in S_1(H)}
\left(\frac{1}{p_0}p_{uv}^\varepsilon(\sigma)p_{uv}(\tau)+\frac{1}{q_0}q_{uv}^\varepsilon(\sigma)q_{uv}(\tau)\right)
\]
and
\[
Z_H^\varepsilon(\sigma^2,\tau^2)=\prod_{(u,v)\in S_2(H)}
\left(\frac{1}{p_0}p_{uv}^\varepsilon(\sigma^2)p_{uv}(\tau^2)+\frac{1}{q_0}q_{uv}^\varepsilon(\sigma^2)q_{uv}(\tau^2)\right).
\]
Here $S_1(H)=\{(u,v)\in\overline{\mathcal{E}(H)}:\textrm{$u\in\mathcal{V}(H)$ or $v\in\mathcal{V}(H)$}\}$
and $S_2(H)=\{(u,v)\in\overline{\mathcal{E}(H)}:u,v\notin\mathcal{V}(H)\}$.

We will show that  $W_H^\varepsilon(\sigma,\tau)$ is uniformly bounded over $\sigma,\tau,H$, and that
\begin{equation}\label{thm:power:case2:eqn3}
\sup_{H\in \mathcal{H}}|W_H^\varepsilon(\sigma,\tau)-1|\to0,\,\,a.s.
\end{equation}
To see this, observe that
\begin{eqnarray}\label{thm:power:case2:eqn4}
W_H^\varepsilon(\sigma,\tau)&=&\prod_{(u,v)\in\overline{\mathcal{E}(H)},
u,v\in\mathcal{V}(H)}\left(\frac{1}{p_0}
p_{uv}^\varepsilon(\sigma)p_{uv}(\tau)+\frac{1}{q_0}q_{uv}^\varepsilon(\sigma)q_{uv}(\tau)\right)\nonumber\\
&&\times\prod_{v\in\mathcal{V}(H)}\prod_{u\notin\mathcal{V}(H)}
\left(\frac{1}{p_0}p_{uv}^\varepsilon(\sigma)p_{uv}(\tau)+\frac{1}{q_0}q_{uv}^\varepsilon(\sigma)q_{uv}(\tau)\right).
\end{eqnarray}
We note that
\[
\frac{1}{p_0}
p_{uv}^\varepsilon(\sigma)p_{uv}(\tau)+\frac{1}{q_0}q_{uv}^\varepsilon(\sigma)q_{uv}(\tau)=
1+O(n^{-1}),
\]
where the $O(n^{-1})$ term is uniform for $u,v,\sigma,\tau,H$.
The first product in (\ref{thm:power:case2:eqn4}) is therefore equal to $(1+O(n^{-1}))^{{M\choose2}-M}=1+o(1)$.
We turn to the second product in (\ref{thm:power:case2:eqn4}).
For any $v\in\mathcal{V}$, let
\begin{eqnarray*}
S_v^1&=&\#\{u\notin\mathcal{V}(H): \sigma_u^2=\sigma_v^1, \tau_u^2=\tau_v^1\}\\
S_v^2&=&\#\{u\notin\mathcal{V}(H): \sigma_u^2=\sigma_v^1, \tau_u^2\neq\tau_v^1\}\\
S_v^3&=&\#\{u\notin\mathcal{V}(H): \sigma_u^2\neq\sigma_v^1, \tau_u^2=\tau_v^1\}\\
S_v^4&=&\#\{u\notin\mathcal{V}(H): \sigma_u^2\neq\sigma_v^1, \tau_u^2\neq\tau_v^1\}.
\end{eqnarray*}
Also let
$S_{ll'}=\#\{u\notin\mathcal{V}(H): \sigma_u^2=l,\tau_u^2=l'\}$
and $N_{ll'}=\{v\in\mathcal{V}(H): \sigma_v^1=l, \tau_v^1=l'\}$
for $l,l'=\pm$.
Then the second product in (\ref{thm:power:case2:eqn4}) equals to
\begin{eqnarray*}
&&\prod_{v\in\mathcal{V}(H)}\prod_{u\notin\mathcal{V}(H)}
\left(\frac{1}{p_0}\left(\frac{a_\varepsilon}{n}\right)^{1(\sigma_u^2=\sigma_v^1)}
\left(\frac{b_\varepsilon}{n}\right)^{1(\sigma_u^2\neq\sigma_v^1)}
\left(\frac{a}{n}\right)^{1(\tau_u^2=\tau_v^1)}\left(\frac{b}{n}\right)^{1(\tau_u^2\neq\tau_v^1)}\right.\\
&&\left.+\frac{1}{q_0}\left(1-\frac{a_\varepsilon}{n}\right)^{1(\sigma_u^2=\sigma_v^1)}
\left(1-\frac{b_\varepsilon}{n}\right)^{1(\sigma_u^2\neq\sigma_v^1)}
\left(1-\frac{a}{n}\right)^{1(\tau_u^2=\tau_v^1)}\left(1-\frac{b}{n}\right)^{1(\tau_u^2\neq\tau_v^1)}\right)\\
&=&
\prod_{v\in\mathcal{V}(H)}
\left(\frac{1}{p_0}\left(\frac{a_\varepsilon}{n}\right)
\left(\frac{a}{n}\right)+\frac{1}{q_0}\left(1-\frac{a_\varepsilon}{n}\right)\left(1-\frac{a}{n}\right)\right)^{S_v^1}\\
&&\times\left(\frac{1}{p_0}\left(\frac{a_\varepsilon}{n}\right)\left(\frac{b}{n}\right)+\frac{1}{q_0}
\left(1-\frac{a_\varepsilon}{n}\right)\left(1-\frac{b}{n}\right)\right)^{S_v^2}\\
&&\times\left(\frac{1}{p_0}\left(\frac{b_\varepsilon}{n}\right)\left(\frac{a}{n}\right)+\frac{1}{q_0}
\left(1-\frac{b_\varepsilon}{n}\right)\left(1-\frac{a}{n}\right)\right)^{S_v^3}\\
&&\times\left(\frac{1}{p_0}\left(\frac{b_\varepsilon}{n}\right)\left(\frac{b}{n}\right)+\frac{1}{q_0}
\left(1-\frac{b_\varepsilon}{n}\right)\left(1-\frac{b}{n}\right)\right)^{S_v^4}\\
&=&(1+\widetilde{\gamma}_n^\varepsilon+O(n^{-3}))^{\sum_{v\in\mathcal{V}(H)}(S_v^1+S_v^4)}
(1-\widetilde{\gamma}_n^\varepsilon+O(n^{-3}))^{\sum_{v\in\mathcal{V}(H)}(S_v^2+S_v^3)}.
\end{eqnarray*}
In the above we have used the following trivial facts:
\begin{eqnarray*}
\frac{1}{p_0}\left(\frac{a_\varepsilon}{n}\right)
\left(\frac{a}{n}\right)+\frac{1}{q_0}\left(1-\frac{a_\varepsilon}{n}\right)\left(1-\frac{a}{n}\right)&=&1+\widetilde{\gamma}_n^\varepsilon+O(n^{-3})\\
\frac{1}{p_0}\left(\frac{a_\varepsilon}{n}\right)\left(\frac{b}{n}\right)+\frac{1}{q_0}
\left(1-\frac{a_\varepsilon}{n}\right)\left(1-\frac{b}{n}\right)&=&1-\widetilde{\gamma}_n^\varepsilon+O(n^{-3})\\
\frac{1}{p_0}\left(\frac{b_\varepsilon}{n}\right)\left(\frac{a}{n}\right)+\frac{1}{q_0}
\left(1-\frac{b_\varepsilon}{n}\right)\left(1-\frac{a}{n}\right)&=&1-\widetilde{\gamma}_n^\varepsilon+O(n^{-3})\\
\frac{1}{p_0}\left(\frac{b_\varepsilon}{n}\right)\left(\frac{b}{n}\right)+\frac{1}{q_0}
\left(1-\frac{b_\varepsilon}{n}\right)\left(1-\frac{b}{n}\right)&=&1+\widetilde{\gamma}_n^\varepsilon+O(n^{-3}),
\end{eqnarray*}
where $\widetilde{\gamma}_n^\varepsilon=\frac{\widetilde{\kappa}_\varepsilon}{n}+\frac{(a-b)(a_\varepsilon-b_\varepsilon)}{4n^2}$
and $\widetilde{\kappa}_\varepsilon=\frac{(a-b)(a_\varepsilon-b_\varepsilon)}{2(a+b)}$.
Note that
\begin{eqnarray*}
\sum_{v\in\mathcal{V}(H)}(S_v^1+S_v^4)&=&\sum_{\sigma_v^1=+,\tau_v^1=+}(S_{++}+S_{--})
+\sum_{\sigma_v^1=+,\tau_v^1=-}(S_{+-}+S_{-+})\\
&&+\sum_{\sigma_v^1=-,\tau_v^1=+}(S_{-+}+S_{+-})
+\sum_{\sigma_v^1=-,\tau_v^1=-}(S_{--}+S_{++})\\
&=&(S_{++}+S_{--})(N_{++}+N_{--})+(S_{+-}+S_{-+})(N_{+-}+N_{-+})\equiv N_1,
\end{eqnarray*}
similarly,
\[
\sum_{v\in\mathcal{V}(H)}(S_v^2+S_v^3)=(S_{+-}+S_{-+})(N_{++}+N_{--})+(S_{++}+S_{--})(N_{+-}+N_{-+})\equiv N_2.
\]
So the second product in (\ref{thm:power:case2:eqn4}) equals to
\begin{eqnarray*}
(1+o(1))(1+\widetilde{\gamma}_n^\varepsilon)^{N_1}(1-\widetilde{\gamma}_n^\varepsilon)^{N_2}=(1+o(1))\exp\left(\frac{N_1-N_2}{n}\widetilde{\kappa}_\varepsilon\right),
\end{eqnarray*}
where the $o(1)$ term is uniform for $u,v,\sigma,\tau,H$, thanks to $N_1, N_2\le Mn$.
By law of large number, $(N_1-N_2)/n\to0$, a.s., uniformly for $H\in\mathcal{H}$.
Therefore (\ref{thm:power:case2:eqn3}) holds. The above analysis also shows that
$W_H^\varepsilon(\sigma,\tau)$ is uniformly bounded over $\sigma,\tau,H$.

Next let us analyze the term $Z_H^\varepsilon(\sigma^2,\tau^2)$.
By Taylor expansions and direct examinations it can be checked that
for $u,v\in[n]\backslash\mathcal{V}(H)$, 
\[
\frac{1}{p_0}p_{uv}^\varepsilon(\sigma^2)p_{uv}(\tau^2)+\frac{1}{q_0}q_{uv}^\varepsilon(\sigma^2)q_{uv}(\tau^2)
=\left\{\begin{array}{cc}
1+\widetilde{\gamma}_n^\varepsilon+O(n^{-3}),&\textrm{if $\sigma_u^2\sigma_v^2\tau_u^2\tau_v^2=+$}\\
1-\widetilde{\gamma}_n^\varepsilon+O(n^{-3}),&\textrm{if $\sigma_u^2\sigma_v^2\tau_u^2\tau_v^2=-$}.
\end{array}\right.
\]
Let $s_+=\#\{(u,v): u,v\in[n]\backslash\mathcal{V}(H), u<v, \sigma_u^2\sigma_v^2\tau_u^2\tau_v^2=+\}$
and $s_-=\#\{(u,v): u,v\in[n]\backslash\mathcal{V}(H), u<v, \sigma_u^2\sigma_v^2\tau_u^2\tau_v^2=-\}$.
Let $\rho=\rho(\sigma^2,\tau^2)=\frac{1}{n-M}\sum_{u\in [n]\backslash\mathcal{V}(H)}\sigma_u^2\tau_u^2$.
By direct examinations we have
\begin{eqnarray}
Z_H^\varepsilon(\sigma^2,\tau^2)
&=&(1+\widetilde{\gamma}_n^\varepsilon+O(n^{-3}))^{s_+}(1-\widetilde{\gamma}_n^\varepsilon+O(n^{-3}))^{s_-}\nonumber\\
&=&(1+o(1))(1+\widetilde{\gamma}_n^\varepsilon)^{s_+}(1-\widetilde{\gamma}_n^\varepsilon)^{s_-}\nonumber\\
&=&(1+o(1))\exp\left(-\frac{\widetilde{\kappa}_\varepsilon^2(n-M)^2}{4n^2}-\frac{\widetilde{\kappa}_\varepsilon(n-M)}{2n}\right)\nonumber\\
&&\times\exp\left(\frac{(\sqrt{n-M}\rho)^2}{2}\left(\frac{(a-b)(a_\varepsilon-b_\varepsilon)(n-M)}{4n^2}+\frac{\widetilde{\kappa}_\varepsilon(n-M)}{n}\right)\right).\label{thm:power:case1:eqn5}
\end{eqnarray}

By the condition $(a-b)(a_\varepsilon-b_\varepsilon)<2(a+b)/3$, $\widetilde{\kappa}_\varepsilon<1$.
Let $Z_n=\sqrt{n-M}\rho$.
Let $\kappa_n=\frac{(a-b)(a_\varepsilon-b_\varepsilon)(n-M)}{4n^2}+\frac{\widetilde{\kappa}_\varepsilon(n-M)}{n}$ which
is nonrandom tending to $\widetilde{\kappa}_\varepsilon$. By Hoeffding's inequality:
for any $C>0$,
\begin{eqnarray}\label{thm:power:case1:eqn8}
P\left(\exp(\kappa_n Z_n^2/2)\ge C\right)\le2C^{-1/\kappa_n}.
\end{eqnarray}
From (\ref{thm:power:case1:eqn5}) there exists a universal constant $C_0$ such that
$Z_H^\varepsilon(\sigma^2,\tau^2)\le C_0\exp(\kappa_n Z_n^2/2)$,
hence, it follows from (\ref{thm:power:case1:eqn8}) that
for all $C>0$,
\[
P\left(Z_H^\varepsilon(\sigma^2,\tau^2)\ge C\right)\le2(C/C_0)^{-1/\kappa_n}.
\]
Therefore, by (\ref{thm:power:case1:eqn8}) we have that
\begin{eqnarray}
&&\E_{\sigma^2\tau^2}Z_H^\varepsilon(\sigma^2,\tau^2)1(Z_H^\varepsilon(\sigma^2,\tau^2)\ge C)\nonumber\\
&=&\int_0^\infty P\left(Z_H^\varepsilon(\sigma^2,\tau^2)1(Z_H^\varepsilon(\sigma^2,\tau^2)\ge C)>t\right)dt\nonumber\\
&=&CP\left(Z_H^\varepsilon(\sigma^2,\tau^2)\ge C\right)+\int_C^\infty
P\left(Z_H^\varepsilon(\sigma^2,\tau^2)>t\right)dt\nonumber\\
&\le&2C_0^{1/\kappa_n}C^{1-1/\kappa_n}/(1-\kappa_n).\label{thm:power:case1:eqn9}
\end{eqnarray}
We can also show that, as $n\to\infty$, 
\begin{equation}\label{thm:power:case2:eqn5}
\sup_{H\in\mathcal{H}}|\E_{\sigma^2\tau^2}Z_H^\varepsilon(\sigma^2,\tau^2)-\exp\left(-\widetilde{\kappa}_\varepsilon^2/4-\widetilde{\kappa}_\varepsilon/2\right)(1-\widetilde{\kappa}_\varepsilon)^{-1/2}|\to0,
n\to\infty.
\end{equation}
To see this, let $\rho_0=\frac{1}{n-M}\sum_{u\in[n]}\sigma_u\tau_u$ and $r_H=\frac{1}{n-M}\sum_{u\in\mathcal{V}(H)}\sigma_u\tau_u$,
therefore, $\rho=\rho_0-r_H$. Let $Z_{0n}=\sqrt{n-M}\rho_0$.
Then for any $H\in \mathcal{H}$, $|r_H|\le M/(n-M)$ which leads to
\[
Z_{0n}^2-2M|\rho_0|\le Z_n^2\le Z_{0n}^2-2M|\rho_0|+M^2/(n-M).
\]
Both left and right hand sides in the above are free of $H$ and converge to $\chi_1^2$ thanks to $\rho_0\to0$, a.s.
So
\[
\sup_{H\in \mathcal{H}}|\E\exp(\kappa_n Z_n^2/2)-(1-\widetilde{\kappa}_{\varepsilon})^{-1/2}|\to0,\,\,n\to\infty.
\]
This, together with (\ref{thm:power:case1:eqn5}), prove (\ref{thm:power:case2:eqn5}).

Next let us analyze $X_H^\varepsilon(\sigma^1,\tau^1)$.
Assume $H=(H_{mi})_{3\le m\le k, 1\le i\le j_m}\in A$.
For $3\le m\le k$ and $1\le i\le j_m$,
let $\tau^{1mi}$, $\sigma^{1mi}$ be the restrictions of $\tau^1$, $\sigma^1$ over the vertexes of $H_{mi}$.
Since $H_{mi}$ are vertex-disjoint, $\tau^{1mi}$'s, $\sigma^{1mi}$'s are all independent.
Let $N_{mi}=\sum_{(u,v)\in\mathcal{E}(H_{mi})}1(\sigma_u^{1mi}\neq\sigma^{1mi}_v)$,
the number of edges over $H_{mi}$ with distinct end points.
Following the proof of \cite[Lemma 3.3]{MNS15}, we get that
\begin{eqnarray}
&&\E_{\sigma^1\tau^1}X_H^\varepsilon(\sigma^1,\tau^1)\nonumber\\
&=&\E_{\sigma^1\tau^1}\prod_{(u,v)\in\mathcal{E}(H)}\left(\frac{1}{p_0}p_{uv}^\varepsilon(\sigma^1)p_{uv}(\tau^1)\right)\nonumber\\
&=&p_0^{-M}\prod_{m=3}^k\prod_{i=1}^{j_m}\E_{\sigma^{1}}\prod_{(u,v)\in\mathcal{E}(H_{mi})}p_{uv}^\varepsilon(\sigma^{1mi})\times\E_{\tau^1}
\prod_{(u,v)\in\mathcal{E}(H_{mi})}p_{uv}(\tau^{1mi})\nonumber\\
&=&p_0^{-M}n^{-2M}\prod_{m=3}^k\prod_{i=1}^{j_m}\E_{\sigma^{1mi}}a_\varepsilon^{m-N_{mi}}b_\varepsilon^{N_{mi}}\times \E_{\tau^{1mi}}a^{m-N_{mi}}b^{N_{mi}}\nonumber\\
&=&p_0^{-M}n^{-2M}\prod_{m=3}^k\prod_{i=1}^{j_m}2^{-m}\left[(a_\varepsilon+b_\varepsilon)^m+(a_\varepsilon-b_\varepsilon)^m\right]
\times 2^{-m}\left[(a+b)^m+(a-b)^m\right]\nonumber\\
&=&n^{-M}\prod_{m=3}^k\left[\left(\frac{a+b}{2}\right)^m(1+\delta_m)(1+\delta_m^\varepsilon)\right]^{j_m},\label{case2:XHe}
\end{eqnarray}
recalling $\delta_m=\left(\frac{a-b}{a+b}\right)^m$ and $\delta_m^\varepsilon=\left(\frac{a_\varepsilon-b_\varepsilon}{a+b}\right)^m$.
Meanwhile, it is easy to see that $n^MX_H^\varepsilon(\sigma^1,\tau^1)$ is almost surely bounded and the bound is unrelated to the vertexes of
$H$, i.e.,
\begin{equation}\label{thm:power:case1:eqn11}
n^M X_H^\varepsilon(\sigma^1,\tau^1)\le\left(\frac{2a_\varepsilon a}{a+b}\right)^M,\,\,\,\,\forall\sigma^1,\tau^1\in\{\pm\}^{\mathcal{V}(H)}.
\end{equation}

By (\ref{thm:power:case2:eqn3}), (\ref{thm:power:case1:eqn9}), (\ref{thm:power:case1:eqn11}), and bounded convergence theorem,
we can show that
\begin{equation}\label{thm:power:case1:eqn10}
\sum_{H\in A}\E_{\sigma\tau}X_H^\varepsilon(\sigma^1,\tau^1)|W_H^\varepsilon(\sigma,\tau)-1|Z_H^\varepsilon(\sigma^2,\tau^2)\to0.
\end{equation}
More precisely, using $|A|=\frac{n!}{(n-M)!}\prod_{m=3}^k\left(\frac{1}{2m}\right)^{j_m}$ (see proof of Lemma \ref{lemma:short:cycles}),
(\ref{thm:power:case1:eqn10}) follows from the following
\begin{eqnarray*}
&&\sum_{H\in A}\E_{\sigma\tau}X_H^\varepsilon(\sigma^1,\tau^1)|W_H^\varepsilon(\sigma,\tau)-1|Z_H^\varepsilon(\sigma^2,\tau^2)\\
&=&\sum_{H\in A}\E_{\sigma\tau}X_H^\varepsilon(\sigma^1,\tau^1)|W_H^\varepsilon(\sigma,\tau)-1|Z_H^\varepsilon(\sigma^2,\tau^2)1(Z_H^\varepsilon(\sigma^2,\tau^2)\le C)\\
&&+
\sum_{H\in A}\E_{\sigma\tau}X_H^\varepsilon(\sigma^1,\tau^1)|W_H^\varepsilon(\sigma,\tau)-1|Z_H^\varepsilon(\sigma^2,\tau^2)1(Z_H(\sigma^2,\tau^2)>C)\\
&\lesssim&Cn^{-M}|A|\E_{\sigma\tau}\sup_{H\in A}|W_H^\varepsilon(\sigma,\tau)-1|+n^{-M}|A|\sup_{H\in A}\E_{\sigma\tau}Z_H^\varepsilon(\sigma^2,\tau^2)1(Z_H^\varepsilon(\sigma^2,\tau^2)>C)\\
&\to&0,
\end{eqnarray*}
where the last limit follows by first taking $C\to\infty$ and then $n\to\infty$.
By (\ref{thm:power:case2:eqn2}), (\ref{thm:power:case2:eqn5}) and (\ref{case2:XHe}), we have that
\begin{eqnarray*}
&&\sum_{H\in A}\E_1 Y_n^\varepsilon 1_H\\
&=&|A|n^{-M}\prod_{m=3}^k\left[\left(\frac{a+b}{2}\right)^m(1+\delta_m^\varepsilon)(1+\delta_m)\right]^{j_m}\exp(-\widetilde{\kappa}_\varepsilon^2/4-\widetilde{\kappa}_\varepsilon/2)/\sqrt{1-\widetilde{\kappa}_\varepsilon}+o(1)\\
&\overset{n\to\infty}{\rightarrow}&
\prod_{m=3}^k(\lambda_m(1+\delta_m^\varepsilon)(1+\delta_m))^{j_m}\exp(-\widetilde{\kappa}_\varepsilon^2/4-\widetilde{\kappa}_\varepsilon/2)/\sqrt{1-\widetilde{\kappa}_\varepsilon},
\end{eqnarray*}
recalling $\lambda_m=\frac{1}{2m}\left(\frac{a+b}{2}\right)^m$.

From (\ref{thm:power:case2:eqn2}), the uniform boundedness of $\E_{\sigma^2\tau^2}Z_H^\varepsilon(\sigma^2,\tau^2)$
and the uniform boundedness of $W_H^\varepsilon(\sigma,\tau)$, and the independence of $\sigma^1,\tau^1,\sigma^2,\tau^2$ that, there exists a constant $C_1$ s.t.
for any $H\in\bar{A}$,
\begin{eqnarray*}
\E_1Y_n^\varepsilon 1_H\le C_1\E_{\sigma^1\tau^1}X_H^\varepsilon(\sigma^1,\tau^1).
\end{eqnarray*}
Also notice from the definition of $X_H^\varepsilon$ that
\begin{eqnarray*}
X_H^\varepsilon(\sigma^1,\tau^1)&=&p_0^{-|\mathcal{E}(H)|}\prod_{(u,v)\in\mathcal{E}(H)}
\left(\frac{a_\varepsilon}{n}\right)^{1(\sigma_u^1=\sigma_v^1)+1(\tau_u^1=\tau_v^1)}\left(\frac{b_\varepsilon}{n}\right)^{1(\sigma_u^1\neq\sigma_v^1)+1(\tau_u^1\neq\tau_v^1)}\\
&\le&n^{-2|\mathcal{E}(H)|}p_0^{-|\mathcal{E}(H)|}a_\varepsilon^{2|\mathcal{E}(H)|}=n^{-|\mathcal{E}(H)|}\left(\frac{2a_\varepsilon^2}{a+b}\right)^{|\mathcal{E}(H)|}.
\end{eqnarray*}
Since there are at most ${n\choose |\mathcal{V}(H)|}|\mathcal{V}(H)|!$ graphs isomorphic to $H$, and $|\mathcal{E}(H)|>|\mathcal{V}(H)|$ for $H\in\bar{A}$,
we get that, as $n\to\infty$,
\begin{eqnarray*}
\sum_{\textrm{$H'$ is isomorphic to $H$}}\E_1 Y_n^\varepsilon 1_H\le 
C_1\left(\frac{2a_\varepsilon^2}{a+b}\right)^{|\mathcal{E}(H)|}n^{-|\mathcal{E}(H)|}{n\choose |\mathcal{V}(H)|}|\mathcal{V}(H)|!\to0.
\end{eqnarray*}
Since there is a bounded number of isomorphism classes, 
we get that the second part of (\ref{thm:power:case2:eqn-1}) tends to zero as $n\to\infty$.

Hence, as $n\to\infty$, 
\[
\textrm{(\ref{thm:power:case2:eqn0}) $\to\prod_{m=3}^k(\lambda_m(1+\delta_m^\varepsilon)(1+\delta_m))^{j_m}\exp(-\widetilde{\kappa}_\varepsilon^2/4-\widetilde{\kappa}_\varepsilon/2)/\sqrt{1-\widetilde{\kappa}_\varepsilon}$.}
\]
As for $\E_1 Y_n^\varepsilon$, note that it is
equal to
\[
\E_1 Y_n^\varepsilon=4^{-n}\sum_{\sigma,\tau}\prod_{u<v}\left(\frac{1}{p_0}p_{uv}^\varepsilon(\sigma)p_{uv}(\tau)
+\frac{1}{q_0}q_{uv}^\varepsilon(\sigma)q_{uv}(\tau)\right).
\]
Similar to (\ref{thm:power:case2:eqn5}), i.e., taking $H$ therein as empty graph, one gets that
\begin{equation}\label{thm:power:case2:eqn7}
\E_1 Y_n^\varepsilon\to\exp(-\widetilde{\kappa}_\varepsilon^2/4-\widetilde{\kappa}_\varepsilon/2)/\sqrt{1-\widetilde{\kappa}_\varepsilon}.
\end{equation}
Hence,
\[
\frac{\E_1Y_n^\varepsilon [X_{3n}]_{j_3}\cdots[X_{kn}]_{j_k}}{\E_1 Y_n^\varepsilon}\overset{n\to\infty}{\rightarrow}\prod_{m=3}^k(\lambda_m(1+\delta_m^\varepsilon)(1+\delta_m))^{j_m}.
\]
This verifies Condition A2.

\textbf{Check Condition A3}.
Since $\frac{\lambda_m(1+\delta_m^\varepsilon)(1+\delta_m)}{\lambda_m(1+\delta_m)}-1=\delta_m^\varepsilon$, and by (\ref{vareps:condition}),
we have 
\[
\sum_{m\ge3}\lambda_m(1+\delta_m)(\delta_m^\varepsilon)^2=\sum_{m=3}^\infty\frac{1}{2m}\left(\frac{(a_\varepsilon-b_\varepsilon)^2}{2(a+b)}\right)^m
+\sum_{m=3}^\infty\frac{1}{2m}\left(\frac{(a_\varepsilon-b_\varepsilon)^2(a-b)}{2(a+b)^2}\right)^m<\infty.
\]

\textbf{Check Condition A4}.
By direct examinations it can be checked that
\begin{eqnarray*}
\E_1\{(Y_n^\varepsilon)^2|\tau\}&=&4^{-n}\sum_{\sigma}\sum_{\eta}\E_1\left\{\prod_{u<v}
\left(\frac{p_{uv}^\varepsilon(\sigma)p_{uv}^\varepsilon(\eta)}{p_0^2}\right)^{A_{uv}}
\left(\frac{q_{uv}^\varepsilon(\sigma)q_{uv}^\varepsilon(\eta)}{q_0^2}\right)^{1-A_{uv}}\bigg|\tau\right\}\\
&=&4^{-n}\sum_{\sigma}\sum_{\eta}\prod_{u<v}\left(\frac{1}{p_0^2}p_{uv}^\varepsilon(\sigma)p_{uv}^\varepsilon(\eta)p_{uv}(\tau)+\frac{1}{q_0^2}q_{uv}^\varepsilon(\sigma)
q_{uv}^\varepsilon(\eta)q_{uv}(\tau)\right)
\end{eqnarray*}
So 
\begin{eqnarray}\label{thm:power:case1:eqn26}
\E_1(Y_n^\varepsilon)^2&=&8^{-n}\sum_{\sigma}\sum_{\eta}\sum_{\tau}
\prod_{u<v}\left(\frac{1}{p_0^2}p_{uv}^\varepsilon(\sigma)p_{uv}^\varepsilon(\eta)p_{uv}(\tau)+
\frac{1}{q_0^2}q_{uv}^\varepsilon(\sigma)q_{uv}^\varepsilon(\eta)q_{uv}(\tau)\right)\nonumber\\
&=&\E_{\sigma\eta\tau}\prod_{u<v}\left(\frac{1}{p_0^2}p_{uv}^\varepsilon(\sigma)p_{uv}^\varepsilon(\eta)p_{uv}(\tau)+
\frac{1}{q_0^2}q_{uv}^\varepsilon(\sigma)q_{uv}^\varepsilon(\eta)q_{uv}(\tau)\right),
\end{eqnarray}
where $\sigma,\eta,\tau$ in the above expectation are independent and uniformly distributed over $\{\pm\}^n$.
By Taylor expansion and straightforward (but exhaustive) calculations, it can be shown that
\begin{eqnarray}\label{def:gamma2+:gamma0-}
\frac{1}{p_0^2}\left(\frac{a_\varepsilon}{n}\right)^2\left(\frac{a}{n}\right)
+\frac{1}{q_0^2}\left(1-\frac{a_\varepsilon}{n}\right)^2\left(1-\frac{a}{n}\right)&=&1+\gamma_{2+}+O(n^{-3})\nonumber\\
\frac{1}{p_0^2}\left(\frac{a_\varepsilon}{n}\right)^2\left(\frac{b}{n}\right)
+\frac{1}{q_0^2}\left(1-\frac{a_\varepsilon}{n}\right)^2\left(1-\frac{b}{n}\right)&=&1+\gamma_{2-}+O(n^{-3})\nonumber\\
\frac{1}{p_0^2}\left(\frac{a_\varepsilon}{n}\right)\left(\frac{b_\varepsilon}{n}\right)\left(\frac{a}{n}\right)
+\frac{1}{q_0^2}\left(1-\frac{a_\varepsilon}{n}\right)\left(1-\frac{b_\varepsilon}{n}\right)\left(1-\frac{a}{n}\right)&=&1+\gamma_{1+}+O(n^{-3})\nonumber\\
\frac{1}{p_0^2}\left(\frac{a_\varepsilon}{n}\right)\left(\frac{b_\varepsilon}{n}\right)\left(\frac{b}{n}\right)
+\frac{1}{q_0^2}\left(1-\frac{a_\varepsilon}{n}\right)\left(1-\frac{b_\varepsilon}{n}\right)\left(1-\frac{b}{n}\right)&=&1+\gamma_{1-}+O(n^{-3})\nonumber\\
\frac{1}{p_0^2}\left(\frac{b_\varepsilon}{n}\right)^2\left(\frac{a}{n}\right)
+\frac{1}{q_0^2}\left(1-\frac{b_\varepsilon}{n}\right)^2\left(1-\frac{a}{n}\right)&=&1+\gamma_{0+}+O(n^{-3})\nonumber\\
\frac{1}{p_0^2}\left(\frac{b_\varepsilon}{n}\right)^2\left(\frac{b}{n}\right)
+\frac{1}{q_0^2}\left(1-\frac{b_\varepsilon}{n}\right)^2\left(1-\frac{b}{n}\right)&=&1+\gamma_{0-}+O(n^{-3})
\end{eqnarray}
where 
\begin{eqnarray*}
\gamma_{2+}&=&\frac{(a_\varepsilon-b_\varepsilon)^2(2a_\varepsilon+b_\varepsilon)+x_\varepsilon}{n(a+b)^2}+
\frac{3(a_\varepsilon-b_\varepsilon)^2+y_\varepsilon}{4n^2}\\
\gamma_{2-}&=&-\frac{a_\varepsilon(a_\varepsilon-b_\varepsilon)^2+x_\varepsilon}{n(a+b)^2}-\frac{(a_\varepsilon-b_\varepsilon)^2+y_\varepsilon}{4n^2}\\
\gamma_{1+}&=&-\frac{a_\varepsilon(a_\varepsilon-b_\varepsilon)^2+z_\varepsilon}{n(a+b)^2}-\frac{(a_\varepsilon-b_\varepsilon)^2}{4n^2}\\
\gamma_{1-}&=&-\frac{b_\varepsilon(a_\varepsilon-b_\varepsilon)^2-z_\varepsilon}{n(a+b)^2}-\frac{(a_\varepsilon-b_\varepsilon)^2}{4n^2}\\
\gamma_{0+}&=&-\frac{b_\varepsilon(a_\varepsilon-b_\varepsilon)^2+w_\varepsilon}{n(a+b)^2}-\frac{(a_\varepsilon-b_\varepsilon)^2+y_\varepsilon}{4n^2}\\
\gamma_{0-}&=&\frac{(a_\varepsilon-b_\varepsilon)^2(a_\varepsilon+2b_\varepsilon)+w_\varepsilon}{n(a+b)^2}+\frac{3(a_\varepsilon-b_\varepsilon)^2+y_\varepsilon}{4n^2}
\end{eqnarray*}
with
\[
x_\varepsilon=\varepsilon(a_\varepsilon-b_\varepsilon)(3a_\varepsilon+b_\varepsilon),
y_\varepsilon=4\varepsilon(a_\varepsilon-b_\varepsilon),
z_\varepsilon=\varepsilon(a_\varepsilon-b_\varepsilon)^2,
w_\varepsilon=\varepsilon(a_\varepsilon-b_\varepsilon)(a_\varepsilon+3b_\varepsilon).
\]
Define
$s_{r+}=\#\{(u,v): u<v, N_{uv}^{\sigma\tau}=r,\tau_u\tau_v=+\}$
and $s_{r-}=\#\{(u,v): u<v, N_{uv}^{\sigma\tau}=r,\tau_u\tau_v=-\}$,
for $r=0,1,2$.
Then it holds that
\begin{eqnarray}\label{thm:power:case2:eqn6}
\E_1(Y_n^\varepsilon)^2&=&\E_{\sigma\eta\tau}\prod_{r=0,1,2}(1+\gamma_{r+}+O(n^{-3}))^{s_{r+}}
\times\prod_{r=0,1,2}(1+\gamma_{r-}+O(n^{-3}))^{s_{r-}}\nonumber\\
&=&(1+o(1))\E_{\sigma\eta\tau}\prod_{r=0,1,2}(1+\gamma_{r+})^{s_{r+}}
\times\prod_{r=0,1,2}(1+\gamma_{r-})^{s_{r-}}.
\end{eqnarray}
Define 
\begin{eqnarray}\label{def:rhos}
\rho_1&=&\frac{1}{\sqrt{n}}\sum_{u=1}^n\sigma_u,\rho_2=\frac{1}{\sqrt{n}}\sum_{u=1}^n\eta_u,\rho_3=\frac{1}{\sqrt{n}}\sum_{u=1}^n\tau_u,\nonumber\\
\rho_4&=&\frac{1}{\sqrt{n}}\sum_{u=1}^n\sigma_u\eta_u,\rho_5=\frac{1}{\sqrt{n}}\sum_{u=1}^n\sigma_u\tau_u,
\rho_6=\frac{1}{\sqrt{n}}\sum_{u=1}^n\eta_u\tau_u,
\rho_7=\frac{1}{\sqrt{n}}\sum_{u=1}^n\sigma_u\eta_u\tau_u.
\end{eqnarray}

Observe that
\begin{eqnarray*}
s_{2+}&=&\sum_{u<v}1(\sigma_u\sigma_v=+)1(\eta_u\eta_v=+)1(\tau_u\tau_v=+)\\
&=&\frac{n^2}{16}-\frac{n}{2}+\frac{n}{16}\left(\rho_1^2+\rho_2^2+\rho_3^2+\rho_4^2+\rho_5^2+\rho_6^2+\rho_7^2\right)\\
s_{2-}&=&\sum_{u<v}1(\sigma_u\sigma_v=+)1(\eta_u\eta_v=+)1(\tau_u\tau_v=-)\\
&=&\frac{n^2}{16}+\frac{n}{16}\left(\rho_1^2+\rho_2^2-\rho_3^2+\rho_4^2-\rho_5^2-\rho_6^2-\rho_7^2\right)
\end{eqnarray*}
\begin{eqnarray*}
s_{1+}&=&\sum_{u<v}1(\sigma_u\sigma_v\eta_u\eta_v=-)1(\tau_u\tau_v=+)
=\frac{n^2}{8}+\frac{n}{8}\left(\rho_3^2-\rho_4^2-\rho_7^2\right)\\
s_{1-}&=&\sum_{u<v}1(\sigma_u\sigma_v\eta_u\eta_v=-)1(\tau_u\tau_v=-)
=\frac{n^2}{8}-\frac{n}{8}\left(\rho_3^2+\rho_4^2-\rho_7^2\right)
\end{eqnarray*}
\begin{eqnarray*}
s_{0+}&=&\sum_{u<v}1(\sigma_u\sigma_v=-)1(\eta_u\eta_v=-)1(\tau_u\tau_v=+)\\
&=&\frac{n^2}{16}+\frac{n}{16}\left(-\rho_1^2-\rho_2^2+\rho_3^2+\rho_4^2-\rho_5^2-\rho_6^2+\rho_7^2\right)\\
s_{0-}&=&\sum_{u<v}1(\sigma_u\sigma_v=-)1(\eta_u\eta_v=-)1(\tau_u\tau_v=-)\\
&=&\frac{n^2}{16}+\frac{n}{16}\left(-\rho_1^2-\rho_2^2-\rho_3^2+\rho_4^2+\rho_5^2+\rho_6^2-\rho_7^2\right).
\end{eqnarray*}
Using the above notation $\gamma_{r\pm}$'s and $s_{r\pm}$'s we can write the right hand side of (\ref{thm:power:case2:eqn6}) as
\[
\prod_{r=0,1,2}(1+\gamma_{r+})^{s_{r+}}
\times\prod_{r=0,1,2}(1+\gamma_{r-})^{s_{r-}}\equiv T_1\times T_2,
\]
where
\begin{eqnarray*}
T_1&=&(1+\gamma_{2+})^{\frac{n^2}{16}-\frac{n}{2}}
(1+\gamma_{2-})^{\frac{n^2}{16}}(1+\gamma_{1+})^{\frac{n^2}{8}}
(1+\gamma_{1-})^{\frac{n^2}{8}}(1+\gamma_{0+})^{\frac{n^2}{16}}(1+\gamma_{0-})^{\frac{n^2}{16}}\\
&=&(1+o(1))\exp\left(-\frac{(a_\varepsilon-b_\varepsilon)^2}{4(a+b)^4}
\left[(a_\varepsilon-b_\varepsilon)^2(a_\varepsilon^2+a_\varepsilon b_\varepsilon+b_\varepsilon^2)
+\varepsilon(a_\varepsilon-b_\varepsilon)(3a_\varepsilon^2+2a_\varepsilon
b_\varepsilon+3b_\varepsilon^2)\right.\right.\\
&&\left.\left.+\varepsilon^2(3a_\varepsilon^2+2a_\varepsilon
b_\varepsilon+3b_\varepsilon^2)\right]-\frac{(a_\varepsilon-b_\varepsilon)^2(2a_\varepsilon+b_\varepsilon)+x_\varepsilon}{2(a+b)^2}\right)\\
&=&(1+o(1))\exp\left(-\frac{(a_\varepsilon-b_\varepsilon)^4}{16(a+b)^2}-\frac{(a_\varepsilon-b_\varepsilon)^4(a-b)^2}{16(a+b)^4}-\frac{(a_\varepsilon-b_\varepsilon)^2(a-b)^2}{8(a+b)^2}\right.\\
&&\left.-\frac{(a_\varepsilon-b_\varepsilon)^2(a-b)}{4(a+b)^2}-\frac{(a_\varepsilon-b_\varepsilon)^2}{4(a+b)}
-\frac{(a-b)(a_\varepsilon-b_\varepsilon)}{2(a+b)}\right)\\
&=&(1+o(1))\exp\left(-\kappa_\varepsilon^2/4-\kappa_\varepsilon/2\right)\exp\left(-\widetilde{\kappa}_\varepsilon^2/2-\widetilde{\kappa}_\varepsilon\right)\exp\left(-\frac{(a_\varepsilon-b_\varepsilon)^4(a-b)^2}{16(a+b)^4}-\frac{(a_\varepsilon-b_\varepsilon)^2(a-b)}{4(a+b)^2}\right),
\end{eqnarray*}
and 
\begin{eqnarray*}
T_2&=&(1+o(1))(1+\gamma_{2+})^{\frac{n}{16}\left(\rho_1^2+\rho_2^2+\rho_3^2+\rho_4^2+\rho_5^2+\rho_6^2+\rho_7^2\right)}
(1+\gamma_{2-})^{\frac{n}{16}\left(\rho_1^2+\rho_2^2-\rho_3^2+\rho_4^2-\rho_5^2-\rho_6^2-\rho_7^2\right)}\\
&&\times(1+\gamma_{1+})^{\frac{n}{8}\left(\rho_3^2-\rho_4^2-\rho_7^2\right)}
(1+\gamma_{1-})^{-\frac{n}{8}\left(\rho_3^2+\rho_4^2-\rho_7^2\right)}\\
&&\times
(1+\gamma_{0+})^{\frac{n}{16}\left(-\rho_1^2-\rho_2^2+\rho_3^2+\rho_4^2-\rho_5^2-\rho_6^2+\rho_7^2\right)}
(1+\gamma_{0-})^{\frac{n}{16}\left(-\rho_1^2-\rho_2^2-\rho_3^2+\rho_4^2+\rho_5^2+\rho_6^2-\rho_7^2\right)}\\
&=&(1+o(1))\exp\left(\frac{(a_\varepsilon-b_\varepsilon)^2(2a_\varepsilon+b_\varepsilon)+x_\varepsilon}{16(a+b)^2}
\left(\rho_1^2+\rho_2^2+\rho_3^2+\rho_4^2+\rho_5^2+\rho_6^2+\rho_7^2\right)\right.\\
&&\left.-\frac{a_\varepsilon(a_\varepsilon-b_\varepsilon)^2+x_\varepsilon}{16(a+b)^2}
\left(\rho_1^2+\rho_2^2-\rho_3^2+\rho_4^2-\rho_5^2-\rho_6^2-\rho_7^2\right)\right.\\
&&\left.-\frac{a_\varepsilon(a_\varepsilon-b_\varepsilon)^2+z_\varepsilon}{8(a+b)^2}\left(\rho_3^2-\rho_4^2-\rho_7^2\right)\right.\\
&&\left.+\frac{b_\varepsilon(a_\varepsilon-b_\varepsilon)^2-z_\varepsilon}{8(a+b)^2}
\left(\rho_3^2+\rho_4^2-\rho_7^2\right)\right.\\
&&\left.+\frac{b_\varepsilon(a_\varepsilon-b_\varepsilon)^2+w_\varepsilon}{16(a+b)^2}\left(\rho_1^2+\rho_2^2-\rho_3^2-\rho_4^2+\rho_5^2+\rho_6^2-\rho_7^2\right)\right.\\
&&\left.-\frac{(a_\varepsilon-b_\varepsilon)^2(a_\varepsilon+2b_\varepsilon)+w_\varepsilon}{16(a+b)^2}\left(\rho_1^2+\rho_2^2+\rho_3^2-\rho_4^2-\rho_5^2-\rho_6^2+\rho_7^2\right)
\right)\\
&=&(1+o(1))\exp\left(\frac{\kappa_\varepsilon}{2}\rho_4^2+\frac{\widetilde{\kappa}_\varepsilon}{2}\left(\rho_5^2+\rho_6^2\right)+\frac{(a_\varepsilon-b_\varepsilon)^2(a-b)}{4(a+b)^2}\rho_7^2\right).
\end{eqnarray*}
We note that $\rho_7$ is independent of $(\rho_4,\rho_5,\rho_6)$, and the condition $\kappa_\varepsilon<\widetilde{\kappa}_\varepsilon\in(0,1/3)$
leads to uniform integrability of $\exp\left(\frac{\kappa_\varepsilon}{2}\rho_4^2+\frac{\widetilde{\kappa}_\varepsilon}{2}\left(\rho_5^2+\rho_6^2\right)\right)$,
and $\rho_4,\rho_5,\rho_6,\rho_7$ jointly converge in distribution to independent standard normal variables. Therefore,
we have that
\begin{eqnarray*}
&&\E_1(Y_n^\varepsilon)^2\\
&\overset{n\to\infty}{\rightarrow}&
\exp\left(-\kappa_\varepsilon^2/4-\kappa_\varepsilon/2\right)\exp\left(-\widetilde{\kappa}_\varepsilon^2/2-\widetilde{\kappa}_\varepsilon\right)\exp\left(-\frac{(a_\varepsilon-b_\varepsilon)^4(a-b)^2}{16(a+b)^4}-\frac{(a_\varepsilon-b_\varepsilon)^2(a-b)}{4(a+b)^2}\right)\\
&&\times(1-\kappa_\varepsilon)^{-1/2}(1-\widetilde{\kappa}_\varepsilon)^{-1}\left(1-\frac{(a_\varepsilon-b_\varepsilon)^2(a-b)}{2(a+b)}\right)^{-1/2}.
\end{eqnarray*}
By (\ref{thm:power:case2:eqn7}) we get that
\begin{eqnarray}\label{thm:power:case2:eqn8}
&&\frac{\E_1(Y_n^\varepsilon)^2}{(\E_1Y_n^\varepsilon)^2}\nonumber\\
&\overset{n\to\infty}{\rightarrow}&\exp\left(-\kappa_\varepsilon^2/4-\kappa_\varepsilon/2\right)(1-\kappa_\varepsilon)^{-1/2}\nonumber\\
&&\times\exp\left(-\frac{(a_\varepsilon-b_\varepsilon)^4(a-b)^2}{16(a+b)^4}-\frac{(a_\varepsilon-b_\varepsilon)^2(a-b)}{4(a+b)^2}\right)\left(1-\frac{(a_\varepsilon-b_\varepsilon)^2(a-b)}{2(a+b)}\right)^{-1/2}\nonumber\\
&=&\exp\left(\sum_{m=3}^\infty\frac{1}{2m}\left(\frac{(a_\varepsilon-b_\varepsilon)^2}{2(a+b)}\right)^m\right)
\times\exp\left(\sum_{m=3}^\infty\frac{1}{2m}\left(\frac{(a_\varepsilon-b_\varepsilon)^2(a-b)}{2(a+b)^2}\right)^m\right)\\
&=&\exp\left(\sum_{m=3}^\infty\lambda_m(1+\delta_m)(\delta_m^\varepsilon)^2\right),\nonumber
\end{eqnarray}
where (\ref{thm:power:case2:eqn8}) follows from the below trivial facts:
\begin{eqnarray*}
\exp\left(\sum_{m=3}^\infty\frac{1}{2m}\left(\frac{(a_\varepsilon-b_\varepsilon)^2}{2(a+b)}\right)^m\right)&=&
\exp\left(-\kappa_\varepsilon^2/4-\kappa_\varepsilon/2\right)(1-\kappa_\varepsilon)^{-1/2}\\
\exp\left(\sum_{m=3}^\infty\frac{1}{2m}\left(\frac{(a_\varepsilon-b_\varepsilon)^2(a-b)}{2(a+b)^2}\right)^m\right)
&=&\exp\left(-\frac{(a_\varepsilon-b_\varepsilon)^4(a-b)^2}{16(a+b)^4}-\frac{(a_\varepsilon-b_\varepsilon)^2(a-b)}{4(a+b)^2}\right)\\
&&\times\left(1-\frac{(a_\varepsilon-b_\varepsilon)^2(a-b)}{2(a+b)}\right)^{-1/2}.
\end{eqnarray*}
This verifies Condition A4.

In the end, notice that by (\ref{thm:power:case2:eqn7}),
\begin{eqnarray*}
\E_1Y_n^\varepsilon&\overset{n\to\infty}{\rightarrow}&\exp\left(-\widetilde{\kappa}_\varepsilon^2/4-\widetilde{\kappa}_\varepsilon/2\right)(1-\widetilde{\kappa}_\varepsilon)^{-1/2}\\
&=&\exp\left(\sum_{m=3}^\infty\frac{1}{2m}\widetilde{\kappa}_\varepsilon^m\right)\\
&=&\exp\left(\sum_{m=3}^\infty\frac{1}{2m}\left(\frac{(a_\varepsilon-b_\varepsilon)(a-b)}{2(a+b)}\right)^m\right)
=\exp\left(\sum_{m=3}^\infty\lambda_m\delta_m\delta_m^\varepsilon\right).
\end{eqnarray*}
The it follows by Proposition \ref{basic:prop} that
\[
Y_n^\varepsilon\overset{n\to\infty}{\rightarrow}\exp\left(\sum_{m=3}^\infty\lambda_m\delta_m\delta_m^\varepsilon\right)
\prod_{m=3}^\infty(1+\delta_m^\varepsilon)^{Z_m^1}\exp\left(-\lambda_m(1+\delta_m)\delta_m^\varepsilon\right)
=W_1^\varepsilon,
\]
where $Z_m^1$ are independent Poisson variable with mean $\lambda_m(1+\delta_m)$.
\end{proof}

\subsection{Proofs in Section \ref{sec:power:analysis}}
Before proofs, we need the following technical lemma.
\begin{Lemma}\label{lemma:series:convergence}
Suppose that $\{c_{ml}\}_{m,l=1}^\infty$ is a real sequence satisfying (1) $\lim\limits_{M\to\infty}\lim\limits_{l\to\infty}\sum_{m=M}^\infty c_{ml}^2=0$, and
(2) for any $m\ge 1$, $\lim\limits_{l\to\infty}c_{ml}=c_m$. Furthermore, for any $l\ge1$,
$\{N_{ml}\}_{m=1}^\infty$ are independent random variables of zero mean and unit variance, and
for any $m\ge1$, $N_{ml}\overset{d}{\to}N(0,1)$ as $l\to\infty$.
Then, as $l\to\infty$, $\sum_{m=1}^\infty c_{ml}N_{ml}\overset{d}{\to}N(0,\sum_{m=1}^\infty c_m^2)$. 
\end{Lemma}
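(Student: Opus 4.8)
The plan is to prove the statement by a truncation argument combined with the standard approximation theorem for weak convergence (e.g., Theorem 3.2 in Billingsley). Write $S_l=\sum_{m=1}^\infty c_{ml}N_{ml}$; this is well defined in $L^2$ since for each fixed $l$ the independence, zero means, and unit variances of the $N_{ml}$ give $\mathrm{Var}(S_l)=\sum_{m=1}^\infty c_{ml}^2<\infty$. A preliminary observation is that the target variance is finite: by condition (2) and Fatou's lemma, $\sum_{m=M}^\infty c_m^2\le\liminf_{l\to\infty}\sum_{m=M}^\infty c_{ml}^2$, whose right-hand side tends to $0$ as $M\to\infty$ by condition (1); hence $\sigma^2:=\sum_{m=1}^\infty c_m^2<\infty$ and $\sum_{m=1}^{M-1}c_m^2\to\sigma^2$.

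For each fixed $M$ I would decompose $S_l=S_l^{(M)}+R_l^{(M)}$, where $S_l^{(M)}=\sum_{m=1}^{M-1}c_{ml}N_{ml}$ is the finite head and $R_l^{(M)}=\sum_{m=M}^\infty c_{ml}N_{ml}$ is the tail. For the head, since for each $l$ the finitely many $N_{1l},\dots,N_{M-1,l}$ are independent, their joint law factors into marginals, so the marginal convergences $N_{ml}\overset{d}{\to}N(0,1)$ yield $(N_{1l},\dots,N_{M-1,l})\overset{d}{\to}(Z_1,\dots,Z_{M-1})$ with $Z_m$ i.i.d.\ $N(0,1)$. Writing $S_l^{(M)}=\sum_{m=1}^{M-1}c_mN_{ml}+\sum_{m=1}^{M-1}(c_{ml}-c_m)N_{ml}$, the first sum converges in distribution to $\sum_{m=1}^{M-1}c_mZ_m\sim N(0,\sum_{m=1}^{M-1}c_m^2)$ by the continuous mapping theorem, while the second is a finite sum with coefficients $c_{ml}-c_m\to0$ and factors $N_{ml}$ that converge in distribution (hence are tight), so it is $o_P(1)$; by Slutsky, $S_l^{(M)}\overset{d}{\to}N(0,\sum_{m=1}^{M-1}c_m^2)$ as $l\to\infty$.

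For the tail, independence and unit variance give $\E[(R_l^{(M)})^2]=\sum_{m=M}^\infty c_{ml}^2$, so Chebyshev's inequality yields, for every $\epsilon>0$, $P(|R_l^{(M)}|>\epsilon)\le\epsilon^{-2}\sum_{m=M}^\infty c_{ml}^2$. Taking $\limsup_{l\to\infty}$ and then $M\to\infty$, condition (1) forces $\lim_{M\to\infty}\limsup_{l\to\infty}P(|R_l^{(M)}|>\epsilon)=0$. Since the head limits $N(0,\sum_{m=1}^{M-1}c_m^2)$ themselves converge, as $M\to\infty$, to $N(0,\sigma^2)$ by the preliminary observation, all hypotheses of the approximation theorem are met, and I would conclude $S_l\overset{d}{\to}N(0,\sigma^2)$.

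The main obstacle is the interchange of $l\to\infty$ with the infinite summation, i.e.\ controlling the tail uniformly in $l$; this is precisely what the double-limit condition (1) is engineered to handle through the variance bound, and it is the reason the approximation theorem (rather than a direct characteristic-function computation over the full series) is the cleanest route. A secondary technical point is justifying joint weak convergence of the head block from the marginal convergences, which is immediate here because for each fixed $l$ the $N_{ml}$ are independent so that joint laws factor.
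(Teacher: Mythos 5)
Your proof is correct and follows essentially the same route as the paper's: truncation of the series at level $M$, a Chebyshev variance bound on the tail using the double-limit condition (1), convergence of the finite head to $N\bigl(0,\sum_{m=1}^{M-1}c_m^2\bigr)$ via independence and condition (2), and a converging-together argument to pass to the full series. The only cosmetic difference is that you invoke Billingsley's approximation theorem (and spell out the joint convergence of the head and the Slutsky step for the coefficients $c_{ml}-c_m$), whereas the paper carries out the equivalent $\delta$-sandwich of the distribution functions by hand.
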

\begin{proof}[Proof of Lemma \ref{lemma:series:convergence}]
Notice that $c_m$ is a square summable sequence. To see this,
note that for any $M<N$,
\[
\sum_{m=M}^N c_m^2=\lim\limits_{l\to\infty}\sum_{m=M}^N c_{ml}^2\le\lim\limits_{l\to\infty}\sum_{m=M}^\infty c_{ml}^2,
\]
and hence, taking $N\to\infty$ on the left side we have,
\[
\sum_{m=M}^\infty c_m^2\le\lim\limits_{l\to\infty}\sum_{m=M}^\infty c_{ml}^2,
\]
leading to $\lim_{M\to\infty}\sum_{m=M}^\infty c_m^2\le\lim\limits_{M\to\infty}\lim\limits_{l\to\infty}\sum_{m=M}^\infty c_{ml}^2=0$; see (1).
Hence $\sum_{m=1}^\infty c_m^2<\infty$.

For arbitrary $M$ and $\delta>0$, define an event $\mathcal{E}_{Ml}=\{|\sum_{m=M}^\infty c_{ml}N_{ml}|<\delta\}$.
Since $E|\sum_{m=M}^\infty c_{ml}N_{ml}|^2=\sum_{m=M}^\infty c_{ml}^2$, by condition (1) we can choose $l$ and $M$ large so that
$E|\sum_{m=M}^\infty c_{ml}N_{ml}|^2\le\delta^3$, and so $P(\mathcal{E}_{Ml})\ge 1-\delta$ by Chebyshev inequality.
By independence and asymptotic normality of $N_{ml}$ for $1\le m\le M-1$, and condition (2), one has
$\sum_{m=1}^{M-1}c_{ml}N_{ml}\overset{d}{\to}N(0,\sum_{m=1}^{M-1}c_m^2)$ as $l\to\infty$.
Define $T_l=\sum_{m=1}^\infty c_{ml}N_{ml}$.
Hence, for any $z\in\mathbb{R}$,
\begin{eqnarray*}
P\left(T_l\le z\right)&\le&P(T_l\le z,\mathcal{E}_{Ml})+\delta\\
&\le&P\left(\sum_{m=1}^{M-1}c_{ml}N_{ml}\le z+\delta\right)+\delta\overset{l\to\infty}{\to}\Phi\left(\frac{z+\delta}{\sqrt{\sum_{m=1}^{M-1}c_m^2}}\right)+\delta.
\end{eqnarray*}
Taking $\delta\to0$ and $M\to\infty$ in the above, we have $\limsup\limits_{l\to\infty}P(T_l\le z)\le \Phi\left(\frac{z}{\sqrt{\sum_{m=1}^\infty c_m^2}}\right)$.
Likewise one can show that $\liminf\limits_{l\to\infty}P(T_l\le z)\ge \Phi\left(\frac{z}{\sqrt{\sum_{m=1}^\infty c_m^2}}\right)$.
Then we have $\lim\limits_{l\to\infty}P(T_l\le z)=\Phi\left(\frac{z}{\sqrt{\sum_{m=1}^\infty c_m^2}}\right)$.
Proof completed.
\end{proof}

\begin{proof}[Proof of Theorem \ref{lim:power:case:2}]
The proof follows by Lemma \ref{lemma:series:convergence}.
We will analyze the distributions of $W_0^\varepsilon$ and $W_1^\varepsilon$.
Define $\Delta_\varepsilon=\sum_{m=3}^\infty\lambda_m\left(\log(1+\delta_m^\varepsilon)-\delta_m^\varepsilon\right)$.
Since, as $a+b\to\infty$,
\[
\sqrt{\lambda_m}\log(1+\delta_m^\varepsilon)\to\sqrt{\frac{1}{2m}k_1^m},
\sqrt{\lambda_m(1+\delta_m)}\log(1+\delta_m^\varepsilon)\to\sqrt{\frac{1}{2m}k_1^m},
\lambda_m\delta_m\log(1+\delta_m^\varepsilon)\to\frac{1}{2m}k_2^m,
\]
and
\[
\frac{Z_m^0-\lambda_m}{\sqrt{\lambda_m}}\overset{d}{\to}N(0,1),
\frac{Z_m^1-\lambda_m(1+\delta_m)}{\sqrt{\lambda_m(1+\delta_m)}}\overset{d}{\to}N(0,1).
\]
Therefore, by Lemma \ref{lemma:series:convergence} we have, as $a+b\to\infty$,
\[
\log{W_0^\varepsilon}-\Delta_\varepsilon=
\sum_{m=3}^\infty\frac{Z_m^0-\lambda_m}{\sqrt{\lambda_m}}\times \sqrt{\lambda_m}\log(1+\delta_m^\varepsilon)\overset{d}{\to}N(0,\sigma_1^2),
\]
and
\begin{eqnarray*}
\log{W_1^\varepsilon}-\Delta_\varepsilon&=&\sum_{m=3}^\infty\frac{Z_m^1-\lambda_m(1+\delta_m)}{\sqrt{\lambda_m(1+\delta_m)}}\times\sqrt{\lambda_m(1+\delta_m)}
\log(1+\delta_m^\varepsilon)+\sum_{m=3}^\infty\lambda_m\delta_m\log(1+\delta_m^\varepsilon)\\
&\overset{d}{\to}&N(\sigma_2^2,\sigma_1^2).
\end{eqnarray*}
Therefore, as $a+b\to\infty$,
\begin{eqnarray*}
1-\alpha=P(W_0^\varepsilon\le w_\alpha^\varepsilon)=
P\left(\frac{\log{W_0^\varepsilon}-\Delta_\varepsilon}{\sigma_1}\le\frac{\log{w_\alpha^\varepsilon}-\Delta_\varepsilon}{\sigma_1}\right),
\end{eqnarray*}
which implies $\frac{\log{w_\alpha^\varepsilon}-\Delta_\varepsilon}{\sigma_1}\to z_{1-\alpha}$, and hence,
\begin{eqnarray*}
P(a,b,\varepsilon)=P(W_1^\varepsilon\ge w_\alpha^\varepsilon)=
P\left(\frac{\log{W_1}^\varepsilon-\Delta_\varepsilon}{\sigma_1}\ge\frac{\log{w_\alpha^\varepsilon}-\Delta_\varepsilon}{\sigma_1}\right)
\to\Phi\left(\frac{\sigma_2^2}{\sigma_1}-z_{1-\alpha}\right).
\end{eqnarray*}
Proof completed.
\end{proof}
\subsection{Proofs in Section \ref{sec:approx}}

\begin{proof}[Proof of Theorem \ref{thm:approx2}]
Observe that
\[
Var\left(\frac{1}{M}\sum_{l=1}^M g_n^\varepsilon(\sigma[l])\bigg| A\right)=\frac{1}{M}\left[
\E_\sigma\left\{g_n^\varepsilon(\sigma)^2\big|A\right\}-\E_\sigma\left\{g_n^\varepsilon(\sigma)\big|A\right\}^2\right]\le\frac{1}{M}
\E_\sigma\left\{g_n^\varepsilon(\sigma)^2\big|A\right\},
\]
where the variance is taken w.r.t. $\sigma[l]$'s conditional on $A_{uv}$'s.
So it is sufficient to deal with $\E_{A,\sigma}g_n^\varepsilon(\sigma)^2$.
First, assume $H_0$ holds. Then it holds that 
\[
\E_{A,\sigma}g_n^\varepsilon(\sigma)^2=\E_\sigma\prod_{u<v}\left(\frac{p_{uv}^\varepsilon(\sigma)^2}{p_0}+\frac{q_{uv}^\varepsilon(\sigma)^2}{q_0}\right)=(1+o(1))
(1+\gamma_n^\varepsilon)^{\frac{n(n-1)}{2}},
\]
where $\gamma^\varepsilon=\frac{\kappa_\varepsilon}{n}+\frac{(a_\varepsilon-b_\varepsilon)^2}{4n^2}$, $\kappa_\varepsilon=\frac{(a_\varepsilon-b_\varepsilon)^2}{2(a+b)}$,
and the last equality holds due to the following trivial fact:
\[
\frac{p_{uv}^\varepsilon(\sigma)^2}{p_0}+\frac{q_{uv}^\varepsilon(\sigma)^2}{q_0}=1+\gamma_n^\varepsilon+O(n^{-3}),\,\,\textrm{uniformly for $\sigma\in\{\pm\}^n$.}
\]
Obviously,
$(1+\gamma_n^\varepsilon)^{\frac{n(n-1)}{2}}=\exp\left(\frac{n\kappa_\varepsilon}{2}-
\frac{\kappa_\varepsilon^2}{4}-\frac{\kappa_\varepsilon}{2}+\frac{(a_\varepsilon-b_\varepsilon)^2}{8}\right)$,
hence, $\frac{1}{M}\sum_{l=1}^M g_n^\varepsilon(\sigma[l])=Y_n^\varepsilon+o_P(1)$ if $M\gg\exp\left(\frac{n\kappa_\varepsilon}{2}\right)$.

Next assume $H_1$ holds.
Let $N_{++}=\#\{(u,v): u<v, \sigma_u\sigma_v=+, \tau_u\tau_v=+\}$,
$N_{+-}=\#\{(u,v): u<v, \sigma_u\sigma_v=+, \tau_u\tau_v=-\}$,
$N_{++}=\#\{(u,v): u<v, \sigma_u\sigma_v=-, \tau_u\tau_v=+\}$,
$N_{++}=\#\{(u,v): u<v, \sigma_u\sigma_v=-, \tau_u\tau_v=-\}$.
Similar to the expressions of $s_{r\pm}$ for $r=0,1,2$ in the proof of Theorem \ref{power:case2}, one can derive that
\begin{eqnarray*}
N_{++}&=&\frac{n^2}{8}-\frac{n}{2}+\frac{n}{8}(\rho_1^2+\rho_3^2+\rho_5^2),
N_{+-}=\frac{n^2}{8}+\frac{n}{8}(\rho_1^2-\rho_3^2-\rho_5^2),\\
N_{-+}&=&\frac{n^2}{8}-\frac{n}{8}(\rho_1^2-\rho_3^2+\rho_5^2),
N_{--}=\frac{n^2}{8}-\frac{n}{8}(\rho_1^2+\rho_3^2-\rho_5^2).
\end{eqnarray*}
Following (\ref{def:gamma2+:gamma0-}), one can check that
\begin{eqnarray*}
&&\E_{A,\sigma}g_n^\varepsilon(\sigma)^2\\
&=&4^{-n}\sum_{\sigma,\tau}\prod_{u<v}
\left(\frac{1}{p_0^2}p_{uv}^\varepsilon(\sigma)^2p_{uv}(\tau)+\frac{1}{q_0^2}q_{uv}^\varepsilon(\sigma)^2q_{uv}(\tau)\right)\\
&=&4^{-n}\sum_{\sigma,\tau}\left(\frac{1}{p_0^2}\left(\frac{a_\varepsilon}{n}\right)^2\left(\frac{a}{n}\right)
+\frac{1}{q_0^2}\left(1-\frac{a_\varepsilon}{n}\right)^2\left(1-\frac{a}{n}\right)\right)^{N_{++}}\\
&&\times\left(\frac{1}{p_0^2}\left(\frac{a_\varepsilon}{n}\right)^2\left(\frac{b}{n}\right)
+\frac{1}{q_0^2}\left(1-\frac{a_\varepsilon}{n}\right)^2\left(1-\frac{b}{n}\right)\right)^{N_{+-}}\\
&&\times\left(\frac{1}{p_0^2}\left(\frac{b_\varepsilon}{n}\right)^2\left(\frac{a}{n}\right)
+\frac{1}{q_0^2}\left(1-\frac{b_\varepsilon}{n}\right)^2\left(1-\frac{a}{n}\right)\right)^{N_{-+}}\\
&&\times\left(\frac{1}{p_0^2}\left(\frac{b_\varepsilon}{n}\right)^2\left(\frac{b}{n}\right)
+\frac{1}{q_0^2}\left(1-\frac{b_\varepsilon}{n}\right)^2\left(1-\frac{b}{n}\right)\right)^{N_{--}}\\
&=&(1+o(1))\E_{\sigma\tau}(1+\gamma_{2+})^{N_{++}}(1+\gamma_{2-})^{N_{+-}}(1+\gamma_{0+})^{N_{-+}}
(1+\gamma_{0-})^{N_{--}}.
\end{eqnarray*}
It follows from direct examinations that
\[
(1+\gamma_{2+})^{\frac{n^2}{8}-\frac{n}{2}}(1+\gamma_{2-})^{\frac{n^2}{8}}(1+\gamma_{0+})^{\frac{n^2}{8}}(1+\gamma_{0-})^{\frac{n^2}{8}}
\asymp\exp\left(\frac{n\kappa_\varepsilon}{2}\right),
\]
and
\begin{eqnarray*}
&&\E_{\sigma\tau}(1+\gamma_{2+})^{\frac{n}{8}(\rho_1^2+\rho_3^2+\rho_5^2)}
(1+\gamma_{2-})^{\frac{n}{8}(\rho_1^2-\rho_3^2-\rho_5^2)}
(1+\gamma_{0+})^{-\frac{n}{8}(\rho_1^2-\rho_3^2+\rho_5^2)}
(1+\gamma_{0-})^{-\frac{n}{8}(\rho_1^2+\rho_3^2-\rho_5^2)}\\
&=&(1+o(1))\E_{\sigma\tau}\exp\left(\frac{(a_\varepsilon-b_\varepsilon)^2(a-b)}{4(a+b)^2}\rho_3^2
+\frac{(a_\varepsilon-b_\varepsilon)(a-b)}{2(a+b)}\rho_5^2\right)\\
&\overset{n\to\infty}{\rightarrow}&\left(1-\frac{(a_\varepsilon-b_\varepsilon)^2(a-b)}{2(a+b)^2}\right)^{-1/2}
\left(1-\frac{(a_\varepsilon-b_\varepsilon)(a-b)}{a+b}\right)^{-1/2}.
\end{eqnarray*}
The last limit follows by condition $(a_\varepsilon-b_\varepsilon)(a-b)<a+b$
and asymptotic independent standard normality of $\rho_3$ and $\rho_5$.
Hence, $\E_{A,\sigma}g_n^\varepsilon(\sigma)^2\lesssim
\exp\left(\frac{n\kappa_\varepsilon}{2}\right)$,
leading to $\frac{1}{M}\sum_{l=1}^M g_n^\varepsilon(\sigma[l])=Y_n^\varepsilon+o_P(1)$ if $M\gg\exp\left(\frac{n\kappa_\varepsilon}{2}\right)$.
\end{proof}


\begin{thebibliography}{9}
\bibitem{A17} Abbe, E. (2017).
Community detection and stochastic block models: recent developments. 
\url{https://arxiv.org/pdf/1703.10146.pdf}.

\bibitem{ACBL13} Amini, A., Chen, A. and Bickel, P. (2013).
Pseudo-likelihood methods for community detection in large sparse networks.
\textit{Annals of Statistics}, \textbf{41} (4), 2097-2122.


\bibitem{AL18} Amini, A. and Levina, E. (2018).
On semidefinite relaxations for the block model.
\textit{Annals of Statistics}, \textbf{46} (1), 149-179.

\bibitem{AS17} Abbe, E. and Sandon, C. (2017). 
Proof of the Achievability Conjectures for the General Stochastic Block Model.
\textit{Communications on Pure and Applied Mathematics}, in press.

\bibitem{B18} Banerjee, D. (2018).
Contiguity and non-reconstruction results for planted partition models: the dense case.
\textit{Electronic Journal of Probability}, \textbf{23}, 28 pages.

\bibitem{BM17} Banerjee, D. and Ma, Z. (2017).
Optimal hypothesis testing for stochastic block models with growing degrees. 
\url{arXiv:1705.05305}.  

\bibitem{BM17-2} Basak, A. and Mukherjee, S. (2017).
Universality of the mean-field for the Potts model.
\textit{Probability Theory and Related Fields}, \textbf{168}, 557--600.

\bibitem{BC09} Bickel, P. J. and Chen, A. (2009). A nonparametric view of network models and Newman–
Girvan and other modularities.
\textit{Proc. Natl. Acad. Sci. USA}, \textbf{106}, 21068-–21073.

\bibitem{BS16} Bickel, P. J. and Sarkar, P. (2016). Hypothesis testing for automated community detection in networks.
\textit{Journal of Royal Statistical Society, Series B}, \textbf{78}, 253--273.

\bibitem{B01} Bollob\'{a}s, B. (2001). \textit{Random Graphs}. Cambridge University Press, second
edition.



\bibitem{DKMZ11} Decelle, A., Krzakala, F., Moore, C., and Zdeborov\'{a}, F. (2011). Asymptotic
analysis of the stochastic block model for modular networks and its
algorithmic applications. \textit{Physics Review E}, \textbf{84}, 066--106.

\bibitem{FH15} Fosdick, B. K. and Hoff, P. D. (2015). Testing and Modeling Dependencies Between a Network and Nodal Attributes.
\textit{Journal of the American Statistical Association}, \textbf{110}, 1047--1056.


\bibitem{GL17a} Gao, C. and Lafferty, J. (2017).
Testing for Global Network Structure Using Small Subgraph Statistics.
\url{https://arxiv.org/pdf/1710.00862.pdf}

\bibitem{GL17b} Gao, C. and Lafferty, J. (2017).
Testing Network Structure Using Relations Between Small Subgraph Probabilities.
\url{https://arxiv.org/pdf/1704.06742.pdf}

\bibitem{J95} Janson, S. (1995).
Random regular graphs: asymptotic distributions and contiguity. 
\textit{Combinatorics, Probability and Computing}, \textbf{4}, 369--405.

\bibitem{L16} Lei, J. (2016).
A Goodness-of-fit Test for Stochastic Block Models.
\textit{Annals of Statistics}, \textbf{44}, 401--424. 

\bibitem{LLDM08} Leskovec, J., Lang, K. L., Dasgupta, A. and Mahoney, M. W. Statistical
properties of community structure in large social and information
networks. In \textit{Proceeding of the 17th international conference on World
Wide Web}, pages 695–704. ACM, 2008.

\bibitem{MPOW17} Maugis, P-A. G., Priebe, C. E., Olhede, S. C. and Wolfe, P. J. (2017).  
Statistical Inference for Network Samples Using
Subgraph Counts. \url{https://arxiv.org/pdf/1701.00505.pdf}.

\bibitem{MS16} Montanari, A. and Sen, S. (2016). Semidefinite Programs on Sparse Random Graphs
and their Application to Community Detection. 
\textit{STOC '16 Proceedings of the forty-eighth annual ACM symposium on Theory of Computing.}
Pages 814--827. 

\bibitem{MNS15} Mossel, E., Neeman, J. and Sly, A. (2015).
Reconstruction and estimation in the planted partition model.
\textit{Probability Theory and Related Fields}, \textbf{162}, 431--461.

\bibitem{MNS17} Mossel, E., Neeman, J. and Sly, A. (2017). A proof of the block model threshold conjecture.
\textit{Combinatorica}, 1--44. \url{https://doi.org/10.1007/s00493-016-3238-8}

\bibitem{NN14} Neeman, J. and Netrapalli, P. (2014).
Non-Reconstructability in the Stochastic Block Model.
\url{https://arxiv.org/abs/1404.6304}

\bibitem{Newman06} Newman, M. E. J. (2006). Finding community structure in networks using the eigenvectors of matrices.
\textit{Phys. Rev. E}, \textbf{74}(3):036104.

\bibitem{W1907} P. Weiss. (1907). 
Lhypoth\`{e}se du champ mol\'{e}culaire et la propri\'{e}t\'{e} ferromagn\'{e}tique. 
\textit{J. Phys. Theor. Appl.}, \textbf{6} (1), 661--690.

\bibitem{S01} Strogatz, S. H. (2001). Exploring complex networks. \textit{Nature}, \textbf{410}(6825):268--276.

\bibitem{SB15} Sarkar, P. and Bickel, P. (2015).
Role of normalization in spectral clustering for stochastic blockmodels.
\textit{Annals of Statistics}, \textbf{43} (3), 962-990. 


\bibitem{W99} Wormald, N. C. (1999).
Models of random regular graphs.
\textit{London Mathematical Society Lecture Note Series},
239-298. Cambridge University Press.

\bibitem{YFS18} Yuan, M., Feng, Y. and Shang, Z. (2018). Inference on multi-community stochastic block models 
with bounded degree. Manuscript.


\bibitem{ZLZ11} Y. Zhao, E. Levina, and J. Zhu. (2011).
Community extraction for social networks.
\textit{Proc. Natn. Acad. Sci. USA}, \textbf{108}, 7321-7326.

\bibitem{ZLZ12} Y. Zhao, E. Levina, and J. Zhu. (2012).
Consistency of Community Detection in Networks Under Degree-corrected Stochastic Block Models.
\textit{Annals of Statistics}, \textbf{40}, 2266--2292.
\end{thebibliography}
\end{document}